\theoremstyle{definition} 
    \newtheorem{definition}{Definition}
\theoremstyle{plain} 
    \newtheorem{theorem}[definition]{Theorem}
    \newtheorem{proposition}[definition]{Proposition}
    \newtheorem{lemma}[definition]{Lemma}
    \newtheorem{corollary}[definition]{Corollary}
    \newtheorem{assumption}[definition]{Assumption}
\theoremstyle{remark} 
    \newtheorem{remark}[definition]{Remark}
    \newcommand{\Tr}[0]{\text{Tr}}
    \renewcommand{\Tilde}{\widetilde}
    \DeclareMathOperator{\Ima}{Im}
    \newcommand{\Ker}[1]{\mathrm{Ker}{(#1)}}
    \newcommand{\qsp}[2]{\,\ensuremath{\raise.5ex\hbox{$#1$}\big\slash\raise-.5ex\hbox{$#2$}}}
        \newcommand{\zzlabel}[1]{\ifmeasuring@\else\ltx@label{#1}\fi} 
    \newcounter{terms}[equation] 
    \newcommand{\unl}[2]{\underline{#1}_{\refstepcounter{terms} \zzlabel{#2} \theterms}} 
    \newcommand{\reft}[2]{(\ref{#1}.\ref{#2})} 
\title{Boundary structure of gauge and matter fields coupled to gravity\thanks{A. S. C. acknowledges partial support of SNF Grant No. 200020 192080 and of the Simons Collaboration
on Global Categorical Symmetries. G. C. acknowledges partial support of SNF Grant No P500PT 203085. This
research was (partly) supported by the NCCR SwissMAP, funded by the Swiss National Science Foundation.}}
\author[1]{Giovanni Canepa}
\author[2]{Alberto S. Cattaneo}
\author[2,3]{Filippo Fila-Robattino}
\renewcommand\footnotemark{}
\affil[1]{Section de Math\'ematiques, Universit\'e de Gen\`eve, \href{mailto:giovanni.canepa.math@gmail.com}{giovanni.canepa.math@gmail.com}}
\affil[2]{Institut f\"ur Mathematik, Universit\"at Z\"urich, \href{mailto:cattaneo@math.uzh.ch}{cattaneo@math.uzh.ch}}
\affil[3]{Scuola Internazionale Superiore di Studi Avanzati, Trieste, \href{mailto:ffilarob@sissa.it}{ffilarob@sissa.it}}
\date{}
\begin{document}
\maketitle

\begin{abstract}
    The boundary structure of $3+1$-dimensional gravity (in the Palatini--Cartan formalism) coupled to to gauge (Yang--Mills) and matter (scalar and spinorial) fields is described through the use of the Kijowski--Tulczijew construction. In particular, the reduced phase space is obtained as the reduction of a symplectic space by some first class constraints and a cohomological description (BFV) of it is presented.
\end{abstract}
\tableofcontents

\section{Introduction}

In this paper we will study the boundary structure of general relativity (in 3+1 dimensions in the Palatini--Cartan formalism) coupled to different types of fields, such as a scalar field, a Yang--Mills field, and a spinor field. Our goal is to describe the \emph{reduced phase space} of the aforementioned theories coupled to gravity in two ways: (i) through a symplectic space and constraints on it and (ii)  using a cohomological description, the BFV formalism.

The reduced phase space can be considered as the fundamental building block of the analysis of field theories on manifolds with boundary. If the boundary is a Cauchy surface, we can define it to be the space of possible initial conditions. Often in the literature, the reduced phase space is obtained through Dirac's algorithm, while in this paper we follow and expand the description given for the gravity field alone in \cite{CCS2020} using the Kijowski and Tulczijew (KT) construction \cite{KT1979}. This construction roughly goes as follows: a space of boundary fields together with a closed two-form and some constraint  functions are derived from the variation of the action and the Euler--Lagrange equations in the bulk. Then, if the two-form is degenerate (as will be more precisely explained in Section \ref{s:KT}) and its kernel is regular, we perform a quotient and obtain a symplectic space, which we call the \emph{geometric phase space}. On it we define the constraints of the theory deriving them in a suitable way from the Euler--Lagrange equations. This last step might present some technical difficulties as the constraints defined as the restriction to the boundary of the Euler--Lagrange equations might not be basic with respect to the reduction of the two form. This is precisely the case at hand where both gravity alone and each of the composite theories have such problem. We overcome it by fixing  convenient representatives of the equivalence classes of the quotient and express the constraints in terms of them.

One of the reason of the choice of the KT construction is that it is automatically compatible with the cohomological description of the reduced phase space given by the BFV formalism (after Batalin--Fradkin--Vilkovisky \cite{BV1,BV2,BV3}). Indeed, if the constraints form a first class system (meaning that the Poisson brackets between them are proportional to the constraints themselves), it is possible to describe the space of functions over the reduced phase space as the zeroth cohomology of  a cohomological (i.e., odd and squaring to zero) vector field on a graded manifold constructed out of the geometric phase space and the constraints.

The BFV formalism was born as the hamiltonian version of the BV formalism, which was developed to overcome the degeneracy problems that one encounters when defining the partition function of gauge theories. It is a generalization of the constructions of Faddeev and Popov and of the BRST procedure \cite{FP1967,BRS1976, T1975} to encompass more general type of symmetries. The BV and BFV formalisms are related and it is possible to construct BV-BFV theories in which additional conditions are added to guarantee compatibility between bulk and boundary data \cite{CMR2012}. A quantization scheme has also been developed for such theories \cite{CMR2,CMR2012}.

Furthermore, given a BV theory on the bulk, under some regularity assumptions, it is possible to induce a BFV theory on the boundary. Crucially, for both gravity in the coframe formalism and the composite theories object of this article, in dimension $N \geq 4$,\footnote{In dimension $N=3$ gravity is topological and it is possible to induce a BFV theory from the BV one \cite{CaSc2019}. However, this is no longer true if we add matter fields. We postpone the study of this particular case to future work and consider in this article only the case $N=4$.} these regularity conditions of the BV theory are not satisfied (in the standard formulation, see \cite{CS2017}) and we have hence to resort to the alternative method described above to obtain a BFV theory. It is worth noting that from a BFV theory is then possible to obtain a full BV-BFV theory on cylindrical manifolds through the AKSZ construction \cite{AKSZ}. Because of the mentioned quantization scheme, one of the key point of this article is that it constitutes the first step towards the quantization of gravity together with matter fields. 

The formulation of general relativity used in this article will be the Palatini--Cartan (PC) or coframe one, which is classically equivalent to the  standard Einstein--Hilbert theory formulated in terms of the metric.The PC theory has several advantages when considering manifolds with boundaries, since it is expressed in terms of forms and connections which have a better behaviour when restricted to submanifolds.
For the same reason, in the case of the scalar and Yang--Mills fields we will use the first order formulation of these theories.

The main condition that we assume in the derivation of the boundary structure is the non-degeneracy of the induced metric on the boundary. In other words, we require the boundary to be time-like or space-like but not light-like. This last case will be object of future studies.

The article is structured as follows. We introduce the relevant constructions, KT and BFV, in Sections \ref{s:KT} and \ref{s:BFV_theory} respectively. Then we give an overview of the Palatini--Cartan formalism of gravity and its reduced phase space in Section \ref{s:PC_intro}. In particular we recall the results of \cite{CCS2020} where this theory has been analyzed  with the two methods mentioned above. In Sections \ref{s:scalar}, \ref{s:YM} and \ref{s:spinor} we then consider the coupled theories of gravity with a scalar field, a Yang--Mills field and a spinor field respectively. For each theory we describe the bulk theory, apply the KT construction and present the reduced phase space in terms of a symplectic space and some constraints on them with the corresponding structure of the Poisson brackets. Then we give the fully detailed description of the corresponding BFV theories.

Some of the results in this paper first appeared in \cite{F2021}.

\begin{remark}

    In this article we focus on the case of field theories defined on a four-dimensional manifold, this being the most interesting physical case. Some of the technical lemmata however are formulated and proven for a generic $N$. The generalization to $N>4$ does not bring to a different structure of the boundary theories, as was shown in \cite{CCS2020} for gravity alone, but only little modifications have to be taken into account. In particular we expect Theorems \ref{thm:first-class-constraints_scalar}, \ref{thm:first-class-constraints_YM} to hold verbatim in the generic $N\geq 4$ case. The case $N=3$ is different, since it is possible to induce directly a BFV theory from a BV on the bulk for pure gravity \cite{CaSc2019}. However, adding a scalar field spoils this possibility, leading to a non regular kernel of the  preboundary BFV form. The same happens when coupling 3d gravity with a Yang--Mills field in the first order formalism. In these cases we can proceed as described in Sections \ref{s:scalar} and \ref{s:YM}, keeping in mind that for $N=3$ we do not have a kernel in the direction of $\omega$ and hence we do not have to fix the additional vector field $e_n$.
\end{remark}

\bigskip
\textbf{Acknowledgments.} The authors would like to thank Valentino Huang for the useful remarks and comments.

\section{Preliminaries}
In this section we describe some of the mathematical background required in the rest of the paper. In particular, Section \ref{s:KT} is devoted to the Kijowski--Tulczijew (KT) construction, Section \ref{s:BFV_theory} to the BFV formalism and Section \ref{s:PC_intro} to the Palatini--Cartan gravity theory.
\subsection{The KT construction and the reduced phase space}\label{s:KT}
We describe here the Kijowski--Tulczijew \cite{KT1979} construction that we will use in the main part of the paper to describe the reduced phase space of the field theories considered.

\begin{remark}
    In order to keep the description simple, we describe the construction without details which are collected in the footonotes.
\end{remark}

Let $M$ be an an N--dimensional manifold with boundary $\partial M=: \Sigma$ and let $F$ be a vector bundle on $M$. For a large variety of theories---and in particular the ones at hand---the space of fields $F_M$ is in general defined as the space of smooth local sections $\phi$ on $F$, i.e. $F_M:=\Gamma(M,F)$, which is in general an infinite--dimensional manifold (inheriting the structure of a Fréchet space) on which we assume that Cartan calculus is defined. 
A field theory on $M$ is then specified by an action functional $S_M$, obtained by integrating a Lagrangian density $L(\phi)$.\footnote{
To define precisely such objects, one first needs to define the local calculus on $M\times F_M$. 
Let us consider the infinite jet bundle $J^\infty F$. The 
smooth local sections of the infinite jet bundle $\Gamma(M,J^\infty F)$,  can also be obtained by the jet prolongation $j^\infty \colon \Gamma(M,F) \rightarrow  \Gamma(M,J^\infty F)$. We can define a map $e_\infty$ by precomposing $j^\infty$ with the evaluation map $\text{ev}\colon M\times F_M\rightarrow F:(x,\phi)\mapsto \phi(x)$, i.e.
\begin{align*}
    e_\infty \colon {M\times F_M} \xrightarrow{(\mathrm{id},j^\infty)} {M\times \Gamma(M,J^\infty F)} \xrightarrow{\mathrm{ev}} {J^\infty F}
\end{align*}
It is a well known fact \cite{anderson} that differential forms on $J^\infty F$ carry a double degree, defining a bicomplex with respect to a veritcal differential $d_V$ and a horizontal differential $d_H$, such that $d=d_V+d_H$ is the usual de Rham differential. In particular, this implies that $d_V^2=0$, $ d_H^2=0$ and $d_V d_H + d_H d_V =0$.
It is then possible to define local forms on $M\times F_M$ by pulling back forms on $J^\infty F$ along $e_\infty$. This produces a double complex of local forms defined by
    \begin{equation}\label{e:localform}
        \Omega_{\mathrm{loc}}^{(p,q)}(M\times F_M):=e_\infty^*\Omega^{(p,q)}(J^\infty F),
    \end{equation}
where $p$ is the vertical degree and $q$ the horizontal one. The differentials are defined by $d:=e_\infty^*d_H$ and $\delta:=e_\infty^*d_V$, representing respectively the de Rham differential on differential forms on $M$ and the ``variational differential'' on forms on $F$. In particular, $d$ measures variations of fields at the space--time level, while $\delta$ measures variations of the field configuration at a given space--time point. A Lagrangian $L$ is defined to be an $(N,0)$ local form which, when evaluated at a field configuration $\phi$, is called Lagrangian density $L(\phi)$.
}

 The integral over $M$  of the Lagrangian density defines the action functional
    \begin{equation}
        S_M:=\int_M L(\phi)=\int_ML(\phi,\partial\phi,\partial^2\phi,\cdots,\partial^k \phi)dx^1\wedge \cdots \wedge dx^N.
    \end{equation}
When we act with $\delta$ on the Lagrangian, we obtain the variational formula \cite{Zuckerman} 
    \begin{equation}\label{e:variation_Lagrangian}
        \delta L= E(L) - d\alpha,
    \end{equation}
where $E(L)$ contains the Euler--Lagrange equations, and $\alpha$ is defined up to $d$-exact terms.\footnote{$E(L)$ is a $(N,1)$ local form and has the further property that it only depends on the 0-jet part of the field $\phi$, and it is independent of variations of $L$ by $d$-exact terms, i.e. $E(L+dK)=E(L)$. Such forms are also known as local source forms. Furthermore $\alpha \in \Omega_{\mathrm{loc}}^{(N-1,1)}$.}

If we integrate \eqref{e:variation_Lagrangian} on $M$, due to Stokes' theorem, $d\alpha$ gives rise to a boundary term. It was first noted by Kijowski and Tulczijew \cite{KT1979} that this boundary term defines a one form on the space of boundary fields over  $\Sigma$ which is analogous to the Liouville form in symplectic geometry.

In particular, defining the space of preboundary fields $\tilde{F}_\Sigma$ as the space of germs of fields at $\Sigma\times \{0\}$ on $\Sigma\times[0,\epsilon]$, the variation of the action $S_M$ yelds
    \begin{equation}
    \delta S_M= E(L)_M - \tilde{\pi}_\Sigma^*(\tilde{\alpha}_\Sigma),
    \end{equation}
where $E(L)_M$ arises after the integration of $E(L)$ , $\tilde{\pi}_\Sigma \colon F_M \rightarrow \tilde{F}_\Sigma$ is the natural surjective submersion to the space of preboundary fields and $\tilde{\alpha}_\Sigma$ is a one form on $\tilde{F}_\Sigma$ found after integrating $\alpha$.\footnote{More precisely, $E(L)_M$ is a $(0,1)$ local form on $M$ and $\tilde{\alpha}_\Sigma$ is $(0,1)$ local form on $\Sigma$.} Now $\tilde{\varpi}_\Sigma:=\delta \tilde{\alpha}_\Sigma$ is by definition a $\delta$-closed local two-form and, assuming that its kernel $\Ker{\tilde{\varpi}}:=\{ X\in T\tilde{F}_\partial \hspace{1mm}\vert\hspace{1mm} \iota_X\tilde{\varpi}=0 \}$ defines a regular distribution, it is a presymplectic form on $\tilde{F}_\Sigma$. By Frobenius' theorem, ker$(\tilde{\varpi}_\Sigma)$ 
is an involutive distribution on the space of
preboundary fields, hence we are able to consider the symplectic reduction $F_\Sigma:=\tilde{F}_\Sigma /\sim$ defined as the leaf space of the foliation, which we assume to be smooth. $F_\Sigma$ is called the geometric phase space of the theory and it is by definition a symplectic manifold with symplectic form $\varpi_\Sigma$ induced by $\tilde{\varpi}_\Sigma$.

Considering the induced surjective submersion $\pi_\Sigma \colon F_M\rightarrow F_\Sigma$ and assuming that $\alpha_\Sigma$ on $F_\Sigma$ is well defined, we obtain
    \begin{equation*}
        \delta S = E(L)_M - \pi_\Sigma^*(\alpha_\Sigma).   
    \end{equation*}

We can now define $EL_M:=\{\phi \in F_M \hspace{1mm}\vert \hspace{1mm} E(L)(\phi)=0\}$ as the zero locus of the Euler--Lagrange equations, i.e. the space of physically relevant fields. When restricted to the boundary, the EL equations split into equations containing the derivatives of the fields in a transversal direction and the remaining equations. They are respectively called evolution equations and constraints. In order to consider the physical space of fields of the theory on the boundary, one needs to impose the constraints on the space of boundary fields. In principle this could be done on the space of preboundary fields, taking into account the fact that the kernel of the presymplectic form might be enlarged; however, it is better for our purposes to impose them on the geometric phase space. Since this last space is a quotient, before proceeding we have to make sure that the restriction of the constraints is basic with respect to the reduction of the kernel of the pre-symplectic form. As we will see, this is not always the case and we might have to reformulate the constraints in order to have a basic expression.

In more mathematical terms, following \cite{CMR2012b}, we define $L_\Sigma:=\pi_\Sigma(EL_M)$ as the projection to geometric space of the solutions to the EL equations. In  general $L_\Sigma$ is isotropic with respect to $\varpi_\Sigma$, and sometimes also coisotropic, hence Lagrangian. This is the case of good field theory. Hoewever, we are interested in the space $C_\Sigma$ of Cauchy data, i.e. the submanifold of the geometric phase space that can be completed to an element belonging to $L_{\Sigma\times[0,\epsilon]}$ for $\epsilon$ small enough (more appropriately, one should work on jets in $\epsilon$). The evolution equations will then contain derivatives along the direction of $[0,\epsilon]$, while the zero locus of the constraints defines $C_\Sigma$. Note that, if $L_\Sigma$ is Lagrangian for $\epsilon$ small, then $C_\Sigma$ is coisotropic.. In our example this fact will be clear, since the constraints are found to be first class, i.e. local functions on the geometric phase space which are in involution with respect to the canonical Poisson structure on $F_\Sigma$ induced by the symplectic one. Finally, the physical space of the theory on the boundary is the symplectic reduction $\underline{C}_\Sigma$ of $C_\Sigma$, called the reduced phase space. The result in principle might not be smooth. Hence to describe it we resort to  its cohomological resolution known as the BFV formalism.

 \subsection{Some notes about the BFV formalism}\label{s:BFV_theory}
Because of the technical difficulties and the smoothness issues of a direct description of the reduced phase space, the BFV formalism offers a useful alternative.

The starting point is the symplectic manifold $(F_{\Sigma}, \varpi_{\Sigma})$ (the geometric phase space) and the set of constraints $\psi_i$, i.e. the restrictions of the EL equations to the boundary which are not evolutionary equations. The fundamental assumption is that these constraints form a first class set, i.e. $\{\psi_i,\psi_j\}=f_{ij}^k\psi_k$ for some functions $f_{ij}^k$ on $F_{\Sigma}$.

Given this setting the BFV formalism describes the functions on the reduced phase space as the cohomology of a suitable operator on a graded manifold which is a given \emph{extension} of the geometric phase space. Let $\lambda_i\in W_i$ be some odd Lagrange multiplyiers of degree +1 such that we can express the constraints in the integral form $$\Psi_i = \int_\Sigma \lambda_i \psi_i. $$
We consider the space $\mathcal{F}_{BFV}=F_{\Sigma} \times \Pi_{i}T^*W_i$ and denote by $\lambda_i^\dag$ the coordinates on the fibers of $T^*W_i$. This space has a natural symplectic structure given by
\begin{align*}
    \varpi_{BFV}= \varpi_\Sigma+\int_\Sigma \left( \sum_i \delta \lambda_i \delta \lambda_i^\dag\right) .
\end{align*}

On this symplectic space we define the function 
\begin{align*}
    S_{BFV}= \int_{\Sigma} \left(\lambda_i \psi_i + f_{ij}^k\lambda^{\dag}_k\lambda_i\lambda_j +R\right)
\end{align*}
where $R$ is a term of higher order in the $\lambda^\dagger$'s chosen so that $\{S_{BFV},S_{BFV}\}=0$ (Classical Master Equation). The function $S_{BFV}$ is called BFV action and it has been proven that it is always possible to find $R$ such that the classical master equation is satisfied\cite{BV3, Stasheff1997, Schaetz:2008}. We call $Q_{BFV}$ its Hamiltonian vector field. The key result is then given by the fact that $Q$ acts as a differential on functions on the space of fields and its cohomology in degree zero is isomorphic to $C^\infty(\underline{C}_{\Sigma})$ as a Poisson algebra when $\underline{C}_{\Sigma}$ is smooth. Hence $(\mathcal{F}_{BFV}, Q_{BFV})$ is a cohomological resolution of  $C^\infty(\underline{C}_{\Sigma})$.

	\subsection{The Palatini--Cartan formalism}\label{s:PC_intro}
	In this article we consider the first-order formulation of gravity in which the classical fields are a coframe and a connection. This formulation is classically equivalent to the original one in terms of the metric. In this section we describe the setting of this theory, the classical action in the bulk and its reduced phase space through the KT construction as first described in \cite{CCS2020}.
	\subsubsection{Classical space of fields}
	
Let $M$ be an $N$-dimensional manifold and let $P$ be an $SO(N-1,1)$-principal bundle on it. We consider an $N$-dimensional vector space $(V, \eta)$ with a Minkowski product, on which we can let the Lie group $SO(N-1,1)$ act via the fundamental representation $\rho\colon SO(N-1,1) \rightarrow \mathrm{End}(V)$. Next we consider the adjoint vector bundle $\mathcal{V}:= P \times_{\rho}V$. Finally, we require that there is an isomorphism $e\colon TM \rightarrow \mathcal{V}$. The first field of the theory is then an explicit choice of isomorphism $e\colon TM\to\mathcal{V}$, a.k.a.\ a vielbein
(the Lorentzian metric in the classically equivalent Einstein--Hilbert formalism will be recovered by pull back: $g =\eta(e,e)$)\footnote{Note that we can pull back the fiber metric eta and this defines a Lorentzian metric on M, so the setting described
above assumes that M admits a Lorentzian structure.}.
	
	The other field that we consider is a connection on $P$.
	Let $\omega\in \Omega^1(P,\mathfrak{so}(N-1,1))$ be the associated connection 1-form. We want to consider the gauge field as a dynamical field of the theory. The following proposition gives a useful way to include it in this setting.
	\begin{proposition}
	    The space of principal connections on $P$ over $M$ is an affine space modeled on $\mathcal{A}(M)=\Omega^1(M,\wedge^2 \mathcal{V})$.
	\end{proposition}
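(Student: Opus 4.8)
The plan is to reduce the statement to two standard facts: that the set of principal connections on a principal bundle, when nonempty, is an affine space modeled on the one-forms valued in the adjoint bundle $\mathrm{ad}\,P = P \times_{\mathrm{Ad}} \mathfrak{so}(N-1,1)$, and that for the structure group $SO(N-1,1)$ this adjoint bundle is canonically isomorphic to $\wedge^2 \mathcal{V}$.

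First I would recall that a principal connection is a one-form $\omega \in \Omega^1(P, \mathfrak{so}(N-1,1))$ that reproduces the generators of the fundamental vector fields and is $\mathrm{Ad}$-equivariant under the right $SO(N-1,1)$-action, i.e.\ $\omega(\xi_X) = X$ for $X \in \mathfrak{so}(N-1,1)$ and $R_g^*\omega = \mathrm{Ad}_{g^{-1}}\omega$. Given two such connections $\omega_0, \omega_1$, their difference $\tau := \omega_1 - \omega_0$ vanishes on the fundamental vector fields (it is horizontal) and remains $\mathrm{Ad}$-equivariant, hence it is a tensorial one-form of type $(\mathrm{Ad}, \mathfrak{so}(N-1,1))$. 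By the standard correspondence between tensorial forms on $P$ and forms on the base valued in an associated bundle, $\tau$ is the pullback of a unique element of $\Omega^1(M, \mathrm{ad}\,P)$. Conversely, adding to a fixed connection the horizontal equivariant lift of any element of $\Omega^1(M, \mathrm{ad}\,P)$ yields a new connection, since the reproducing and equivariance properties are preserved. This shows that the action of $\Omega^1(M, \mathrm{ad}\,P)$ on connections is free and transitive, i.e.\ the space of connections is an affine space modeled on $\Omega^1(M, \mathrm{ad}\,P)$; nonemptiness follows from the usual partition-of-unity gluing of local connection forms.

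It then remains to identify $\mathrm{ad}\,P$ with $\wedge^2 \mathcal{V}$. The key point is the isomorphism of $SO(N-1,1)$-representations $\mathfrak{so}(N-1,1) \cong \wedge^2 V$: the Lie algebra consists of the $\eta$-antisymmetric endomorphisms of $V$, and $\eta$ furnishes a linear isomorphism from $\wedge^2 V$ onto $\{A \in \mathrm{End}(V) : \eta(A\cdot,\cdot) + \eta(\cdot,A\cdot) = 0\}$ via $u \wedge v \mapsto (\eta(u,\cdot)\,v - \eta(v,\cdot)\,u)$. I would check that this intertwines the wedge-square $\wedge^2\rho$ of the fundamental representation with the adjoint action, so that passing to associated bundles gives $\mathrm{ad}\,P = P \times_{\mathrm{Ad}} \mathfrak{so}(N-1,1) \cong P \times_{\wedge^2\rho} \wedge^2 V = \wedge^2 \mathcal{V}$. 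Combined with the previous paragraph, this identifies the modeling space as $\mathcal{A}(M) = \Omega^1(M, \wedge^2 \mathcal{V})$.

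The main obstacle—really the only nontrivial verification—is the equivariance of the identification $\wedge^2 V \cong \mathfrak{so}(N-1,1)$: one must confirm that conjugation by $\rho(g)$ on $\eta$-antisymmetric endomorphisms corresponds exactly to the action of $g$ on $\wedge^2 V$ through $\wedge^2\rho$, which hinges on $\rho(g)$ preserving $\eta$. Everything else (horizontality and equivariance of the difference of two connections, freeness and transitivity of the action, and nonemptiness) is standard principal-bundle theory, which I would only sketch.
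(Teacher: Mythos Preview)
Your proposal is correct and follows essentially the same two-step approach as the paper: first invoke the standard fact that connections form an affine space modeled on adjoint-valued one-forms, then identify $\mathfrak{so}(N-1,1)$ with $\wedge^2 V$ via $\eta$. If anything, you are more careful than the paper in distinguishing the Lie algebra from the adjoint bundle and in verifying the equivariance needed to pass from the representation isomorphism $\mathfrak{so}(N-1,1)\cong\wedge^2 V$ to the bundle isomorphism $\mathrm{ad}\,P\cong\wedge^2\mathcal{V}$.
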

	\begin{proof}
	It is well known that it is possible to identify the affine space of   principal connections as the space of one forms with values in the corresponding Lie algebra $\mathfrak{so}(N-1,1)$.   Furthermore, it is possible to identify $\mathfrak{so}(N-1,1)$ with $\wedge^2\mathcal{V}$ by means of $\eta$.
 \end{proof}

	We define the space of $(i,j)$-forms to be the differential $i$-forms with values in the $j$-th exterior power of $V$, namely
	\begin{align*}
	    \Omega^{(i,j)}(M):=\Omega^i(M,\wedge^j \mathcal{V}). 
	\end{align*}
	
	The space of fields of our theory is then defined to be
	\begin{align*}
	    \mathcal{F}_{PC}:=\Omega_{nd}^{(1,1)}\times \mathcal{A}(M),
	\end{align*} where $\Omega_{nd}^{(1,1)}$ is the space of vielbeins as nondegenerate one-forms with values in $V$.
	This formalism has the further advantage that all the fields are expressed as differential forms and hence can easily be restricted to a suitable submanifold of $M$ (e.g. its boundary, if it has one).
	
	\subsubsection{Classical action}

	We are looking for an action functional that gives the same Euler--Lagrange  locus modulo symmetries as Einstein--Hilbert theory. The Palatini--Cartan action is
	\begin{equation}
		S_{PC}:=\int_M \frac{1}{(N-2)!}e^{N-2} \wedge F_\omega + \frac{\Lambda}{N!} e^N,
	\end{equation}
	
	where $e^k:=\underbrace{e\wedge e \wedge \cdots \wedge e}_{\text{k times}}$	and $F_\omega:=d\omega+\frac{1}{2} [\omega,\omega]$ is the curvature associated to $\omega$ which we regard as a $(2,2)$ form.
	We can find equations of motion by varying the action
	
	\begin{align}
		\label{variation PC}
			\delta S_{PC}&= \int_M \frac{1}{(N-3)!}e^{N-3} \delta e F_\omega - \frac{1}{(N-2)!}e^{N-2}d_\omega(\delta \omega)+ \frac{\Lambda}{(N-1)!}e^{N-1}\delta e \\
			&=\int_{M}\bigg[\frac{1}{(N-3)!}e^{N-3} F_\omega + \frac{\Lambda}{(N-1)!}e^{N-1}\bigg]\delta e + \frac{1}{(N-2)!}d_\omega(e^{N-2})\delta\omega \nonumber\\ &\phantom{=\int_{M}}-\frac{1}{(N-2)!}d(e^{N-2}\delta\omega),\nonumber
	\end{align}
	where we used integration by parts and the fact that $\delta_\omega F_\omega=-d_\omega(\delta \omega)$.\footnote{$
				\delta\omega F_\omega=\delta_\omega(d\omega + \frac{1}{2}[\omega,\omega])=-d\delta\omega +\frac{1}{2}[\delta\omega,\omega]-\frac{1}{2}[\omega,\delta\omega]
				=-d(\delta\omega)-[\omega,\delta\omega]=-d_\omega(\delta\omega).$} The last term in \eqref{variation PC} will produce a boundary term if $\partial M \neq \emptyset$, due to Stokes theorem.
	
	Then we find equations of motion
	\begin{eqnarray}
		&&e^{N-3}d_\omega e  =  0; \label{torsion} \\
		&&\frac{1}{(N-3)!}e^{N-3} F_\omega + \frac{\Lambda}{(N-1)!}e^{N-1}  =  0. \label{einstein}
	\end{eqnarray}
	
	Equation \eqref{torsion} is equivalent to $d_\omega e=0$ because of the non-degeneracy condition (and because $e^{N-3}$ is injective in this case \cite{CCS2020}). Furthermore, it fixes $\omega$ to be torsionless, and since it is compatible with $\eta$, then $d_\omega e=0$ implies the metricity condition $d_{e^*(\omega)}g=0$, which is uniquely solved by the Levi-Civita metric connection.
	
	After imposing \eqref{torsion}, we find that $\eqref{einstein}$ is equivalent to Einstein's field equation, with the addition of a cosmological constant $\Lambda$. 
	
	\begin{remark}
		It is important to notice that, even if $e$ is an isomorphism, $e\wedge \cdot$ might not be, indeed $e^{N-3}\wedge F_\omega=0$ is not equivalent to the flatness condition $F_\omega=0$
	\end{remark}
	
	\begin{remark}
		There are two ways of showing that the PC and EH theories are equivalent. The first one is to rewrite equation \eqref{einstein} after imposing \eqref{torsion} and see that it actually yelds Einstein's field equation. The other way is to use \eqref{torsion} and rewrite the action $S_{PC}$ in terms of the metric tensor, to see that it is equivalent to the Einstein--Hilbert action. This is seen very easily by noticing that
			\begin{equation}
				\frac{e^N}{N!}=\sqrt{-\det(g)}d^Nx = \text{Vol}_g, \qquad \frac{e^{N-2}}{(N-2)!}F_\omega= R \text{Vol}_g ,
			\end{equation}
		where $R$ is the Ricci scalar.
	\end{remark}

\subsubsection{The reduced phase space of Palatini--Cartan gravity}	\label{s:rps_gravity}
We present here the results of \cite{CCS2020} concerning the structure of the reduced phase space of Palatini--Cartan gravity.  The results of this section have been obtained through the Kijowski--Tulcjiev (KT) construction (described in Section \ref{s:KT}) and are the background construction that we will  adapt when adding matter and gauge fields in the following sections. 

The starting point of the KT analysis is the boundary term that we get when varying the action \eqref{variation PC}:
\begin{align*}
    \Tilde\alpha = \frac{1}{(N-2)!}\int_{\Sigma}  e^{N-2}\delta\omega.
\end{align*}

\begin{assumption}\label{Assumption_metric_non_deg}
We further assume that the bulk vielbein satisfies the extra nondegeneracy condition that the induced boundary
metric $g^\partial$, defined by $g^\partial:=\iota_\Sigma^*e^*(\eta)$, is nondegenerate.\footnote{One might also consider the stronger condition that
the induced boundary metric is space-like, but this is not needed for the following considerations.} This is an open
condition on the space of bulk field that ensures that the constrained submanifold $C_\Sigma$ is coisotropic.
\end{assumption}

	The classical fields on the boundary will again be indicated by $(e,\omega)$. The inclusion $\iota:\Sigma\hookrightarrow M$ of $\Sigma$ in $M$ induces the bundles $P|_\Sigma:=\iota^*(P)$ and $\mathcal{V}|_\Sigma:=\iota^*(\mathcal{V})$. The fields are respectively defined as
	\begin{itemize}
		\item $e$ is a nondegenerate section of $T^*\Sigma\otimes\mathcal{V}\vert_\Sigma$, meaning that (i) at each point the three components are linearly independent and (ii) the underlying metric $g$, defined by $g:=e^*(\eta)$, is nondegenerate (because of Assumption \ref{Assumption_metric_non_deg};
		\item $\omega$ is an element of the space of connections $\mathcal{A}_\Sigma$, locally modeled by $\Gamma(T^*\Sigma\otimes\bigwedge^2\mathcal{V}\vert_\Sigma)$.
	\end{itemize}
	We denote the space of preboundary fields as $\tilde{F}_\partial=\Omega^{(1,1)}_{\partial,\text{n.d.}}\times\mathcal{A}_\Sigma$.

	We note that $\tilde{\alpha}$ is the integral of a local $($top$,1)$ form on $\tilde{F}_\partial \times\Sigma$ as defined in \eqref{e:localform} and therefore a 1-form on $\tilde{F}_\partial$. By taking its variation (the variational vertical differential), we obtain a two-form on $\tilde{F}_\partial$
	\begin{equation}
		\label{omega grav}
		\tilde{\varpi}:=\delta \alpha= \frac{1}{(N-3)!}\int_\Sigma e^{N-3}\delta e \delta \omega.
	\end{equation}
	By construction, $\tilde{\varpi}$ is closed on $\tilde{F}_\partial$ and satisfies the first requirement to be a symplectic form on $\tilde{F}_\partial$.  However, it is degenerate, namely ker$(\tilde{\varpi}):=\{ X\in T\tilde{F}_\partial \hspace{1mm}\vert\hspace{1mm} \iota_X\tilde{\varpi}=0 \}\neq\{0\}$. 
	In \cite{CCS2020} it was proven that ther kernel is regular. Hence, in order to get rid of this degeneracy, we can perform a symplectic reduction.\footnote{The vector fields in the kernel of the presymplectic form span a smooth involutive distribution. The quotient space $\tilde{F}_\partial/\text{ker}(\tilde{\varpi})$ is the set of leaves in the foliation induced by ker$(\tilde{\varpi})$. In our case, the vector fields in the kernel only act, at fixed $e$, as translations of the connection $\omega$, therefore it is easy to see that the quotient space is still a smooth manifold.} The quotient space $F_\partial$ will be called the geometric phase space of the theory 
	\begin{equation}\label{e:GPS_gravity}
		F_\partial:=\frac{\tilde{F}_\partial}{\text{ker}(\tilde{\varpi})},
	\end{equation}
	with the canonical projection $\pi_\partial\colon \tilde{F}_\partial \rightarrow F_\partial$. Hence the space of boundary fields is a bundle ${F}^{\partial}\rightarrow\Omega_{nd}^1(\Sigma, \mathcal{V})$ with local trivialization on an open $\mathcal{U}_{\Sigma} \subset \Omega_{nd}^1(\Sigma, \mathcal{V})$
\begin{equation*}
{F}^\partial \simeq \mathcal{U}_{\Sigma} \times \mathcal{A}^{red}(\Sigma),
\end{equation*}
	where $\mathcal{A}^{red}(\Sigma)$ is the space of equivalence classes of connections $\omega \in \mathcal{A}(\Sigma)$  under the equivalence relation $\omega  \sim \omega + v $ for every $v \in \Omega^{1,2}(\Sigma)$ such that $e^{N-3}v =0$.
The corresponding symplectic form is
\begin{align}\label{classical-boundary-symplform}
\varpi = \frac{1}{(N-3)!}\int_{\Sigma} e^{N-3} \delta e \delta [\omega].
\end{align}
In order to define the constraints on this quotient space, and to give an explicit description of the reduced phase space, it is better to fix a representative of the equivalence relation described above, since the restriction of the EL equations to the boundary are not invariant under the equivalence relation. A convenient choice is given by the following construction. We choose a section $e_n$ of $\mathcal{V}|_\Sigma$ and we restrict the space of fields by the conditions that $e_1,e_2,e_3,e_n$ form a basis, where $e_a := e(\partial_a)$.\footnote{there is actually no restriction in the space-like case; otherwise, one has to work on charts of the space of fields and pick an $e_n$ for each chart}
We denote by $F_{e_n}$ the space of preboundary fields $\tilde{F}_\partial$ together with $e_n \in \mathcal{V}$ completing the basis.
On this space we have the following theorem:

\begin{theorem}[\cite{CCS2020}]\label{thm:omegadecomposition}
Suppose that  $g^{\partial}$, the metric induced on the boundary, is nondegenerate. Given any $\widetilde{\omega} \in \Omega^{1,2}$, there is a unique decomposition 
\begin{equation} \label{omegadecomp}
\widetilde{\omega}= \omega +v
\end{equation}
with $\omega$ and $v$ satisfying 
\begin{equation}\label{omegareprfix2}
 e^{N-3}v=0 \quad \text{ and } \quad e_n  e^{N-4}d_{\omega} e \in \Ima W_1^{\partial,(1,1)}.
\end{equation}
\end{theorem}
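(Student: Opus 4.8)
The plan is to recast the statement as the invertibility of a single linear map and then to reduce it to fibrewise linear algebra. First I would observe that the condition $e^{N-3}v=0$ says exactly that $v\in\Ker{W_{N-3}^{\partial,(1,2)}}$, where I write $W_k^{\partial,(i,j)}\colon\Omega^{(i,j)}\to\Omega^{(i+k,j+k)}$ for multiplication by $e^k$ on $\Sigma$. Since the covariant derivative depends affinely on the connection, substituting $\omega=\widetilde\omega-v$ gives $d_\omega e=d_{\widetilde\omega}e-[v,e]$, where $[\,\cdot\,,\,\cdot\,]$ denotes the action of $\wedge^2\mathcal V$ on $\mathcal V$, so that $[v,e]\in\Omega^{(2,1)}$. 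Hence the second requirement turns into the \emph{linear} condition
\begin{equation*}
e_n\,e^{N-4}\,[v,e]\ \equiv\ e_n\,e^{N-4}\,d_{\widetilde\omega}e \pmod{\Ima W_1^{\partial,(1,1)}}.
\end{equation*}
Consequently the theorem is equivalent to the assertion that the linear map
\begin{equation*}
\Phi\colon \Ker{W_{N-3}^{\partial,(1,2)}}\longrightarrow \Omega^{(N-2,N-2)}\big/\Ima W_1^{\partial,(1,1)},\qquad v\longmapsto \big[\,e_n\,e^{N-4}\,[v,e]\,\big],
\end{equation*}
is an isomorphism: surjectivity provides the required $v$ for every $\widetilde\omega$, and injectivity makes it unique. (I will focus on the physical case $N=4$, where $e_n\,e^{N-4}d_\omega e\in\Omega^{(2,2)}=\Omega^{(N-2,N-2)}$ is exactly the target of $W_1^{\partial,(1,1)}$; the general $N$ argument is identical with the appropriate wedge power.)

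Next I would note that every ingredient of $\Phi$ — the wedge maps $W_k$, the algebraic action $[\,\cdot\,,e]$, the multiplication by the section $e_n$, and the passage to the quotient — is $C^\infty(\Sigma)$\ndash linear and fibrewise natural. Thus $\Phi$ is induced by a morphism of finite\ndash rank vector bundles over $\Sigma$, and it is an isomorphism on sections precisely when it is a fibrewise isomorphism (constancy of rank, and smoothness of the inverse, then follow automatically). This reduces the problem to a statement in the fibre over a fixed point of $\Sigma$, where, by Assumption \ref{Assumption_metric_non_deg}, the vectors $e_1,\dots,e_{N-1},e_n$ with $e_a:=e(\partial_a)$ form a basis of $V$.

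The computational core is then a fibrewise dimension count combined with an injectivity check. Using the structural properties of the maps $W_k$ established in \cite{CCS2020} under the hypothesis that $g^\partial$ is nondegenerate — specifically the injectivity of $W_1^{\partial,(1,1)}$ and the surjectivity of $W_{N-3}^{\partial,(1,2)}$ — I would first verify that the source $\Ker{W_{N-3}^{\partial,(1,2)}}$ and the target $\Omega^{(N-2,N-2)}/\Ima W_1^{\partial,(1,1)}$ have equal (finite) dimension in each fibre; for $N=4$ both equal $6$. It then suffices to prove that $\Phi$ is injective: assuming $v\in\Ker{W_{N-3}^{\partial,(1,2)}}$ and $e_n\,e^{N-4}\,[v,e]=e\,\mu$ for some $\mu\in\Omega^{(1,1)}$, one expands $v$ in the frame $e_1,\dots,e_{N-1},e_n$ and uses nondegeneracy of $g^\partial$ to force $v=0$.

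I expect this injectivity step to be the main obstacle. The two conditions $e^{N-3}v=0$ and $e_n\,e^{N-4}[v,e]\in\Ima W_1^{\partial,(1,1)}$ have to be combined and disentangled componentwise in the chosen frame, and it is exactly here that nondegeneracy of the induced metric is indispensable: it guarantees that $e_n$ is genuinely transverse to the boundary coframe, so that no nonzero element annihilated by $e^{N-3}$ can have $e_n\,e^{N-4}[v,e]$ lie in $\Ima W_1^{\partial,(1,1)}$. Once injectivity and the matching of dimensions are both established, $\Phi$ is a fibrewise isomorphism, hence an isomorphism on sections, and the existence and uniqueness of the decomposition $\widetilde\omega=\omega+v$ follow immediately.
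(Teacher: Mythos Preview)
Your proposal is correct and follows essentially the same route as the paper: the paper (in its proof of the spinor analogue, Theorem~\ref{thm: omegadecompspin}) reduces the statement to the black-box Lemma~\ref{lem:Omega2,2_d4}, which asserts the unique splitting $\beta=e^{N-3}\gamma+e_n e^{N-4}[v,e]$ of any $\beta\in\Omega^{N-2,N-2}_\partial$; your map $\Phi$ being an isomorphism is exactly a restatement of that lemma, and your proposed proof by fibrewise dimension count plus injectivity is precisely how one establishes it in \cite{CCS2020}. The only cosmetic difference is packaging: the paper names the decomposition lemma and invokes it, while you unfold its content directly.
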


Let we denote by $F'_{e_n}$ the subspace of $F_{e_n}$ of the fields satisfying \eqref{omegareprfix2}. 
\begin{corollary}[\cite{CCS2020}]
    $F'_{e_n}$ is symplectomorphic to $F_\partial$.
\end{corollary}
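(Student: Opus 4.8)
The plan is to exhibit $F'_{e_n}$ as a global section of the reduction $\pi_\partial\colon \tilde{F}_\partial \to F_\partial$ and to check that this identification intertwines the two symplectic structures. Working on a fixed chart (so that $e_n$ is a fixed choice and $F_{e_n}$ is identified with $\tilde{F}_\partial$), write $\iota\colon F'_{e_n}\hookrightarrow \tilde{F}_\partial$ for the inclusion and set $\phi:=\pi_\partial\circ\iota\colon F'_{e_n}\to F_\partial$. Recall that the leaves of the foliation $\ker(\tilde{\varpi})$ are exactly the orbits of the translations $\omega \mapsto \omega + v$ with $e^{N-3}v=0$ at fixed $e$. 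Theorem \ref{thm:omegadecomposition} states that every $\widetilde{\omega}$ decomposes uniquely as $\widetilde{\omega}=\omega+v$ with $e^{N-3}v=0$ and with $\omega$ satisfying the gauge condition \eqref{omegareprfix2}; equivalently, each leaf meets the slice $F'_{e_n}$ in exactly one point. This is precisely the assertion that $\phi$ is a bijection.

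Next I would upgrade $\phi$ to a diffeomorphism. Smoothness of $\phi$ is automatic, as it is the restriction of the smooth submersion $\pi_\partial$. For the inverse, the assignment sending a class to its distinguished representative is smooth: the conditions in \eqref{omegareprfix2} are fiberwise linear in $v$, and the uniqueness in Theorem \ref{thm:omegadecomposition} says exactly that the associated linear operator is invertible, so that $v$—and hence $\omega=\widetilde{\omega}-v$—depends smoothly on $(e,\widetilde{\omega})$. The same fact shows that $F'_{e_n}$ is an embedded submanifold transverse to the foliation, i.e. $T_pF'_{e_n}\cap \ker(\tilde{\varpi})_p=\{0\}$ and $T_p\tilde{F}_\partial = T_pF'_{e_n}\oplus \ker(\tilde{\varpi})_p$ at every point $p$.

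For the symplectic statement I would invoke the defining property of the reduced form, $\pi_\partial^*\varpi=\tilde{\varpi}$. Pulling back along $\iota$ yields $\phi^*\varpi=\iota^*\pi_\partial^*\varpi=\iota^*\tilde{\varpi}$, so under $\phi$ the symplectic form on $F_\partial$ corresponds to the restriction of the presymplectic form $\tilde{\varpi}$ to the slice. Since $\varpi$ is symplectic and $\phi$ is a diffeomorphism, $\iota^*\tilde{\varpi}$ is non-degenerate and $\phi$ is a symplectomorphism; conceptually, the transversality of $F'_{e_n}$ to the degenerate directions $\ker(\tilde{\varpi})$ is exactly what ensures the restriction stays non-degenerate.

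The main obstacle is the middle step: verifying that the distinguished representative depends smoothly on the fields, so that $F'_{e_n}$ is a genuine embedded slice and $\phi$ a diffeomorphism rather than a mere set-theoretic bijection. Concretely this reduces to checking that the inverse of the linear operator implicitly defined by \eqref{omegareprfix2}—whose invertibility is the content of Theorem \ref{thm:omegadecomposition}—varies smoothly with $e$, which follows from the open non-degeneracy condition on $g^\partial$ of Assumption \ref{Assumption_metric_non_deg}. Everything else is either an immediate consequence of the decomposition theorem (the bijectivity) or a formal manipulation of pullbacks (the symplectic identity $\phi^*\varpi=\iota^*\tilde{\varpi}$).
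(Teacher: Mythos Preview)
Your proposal is correct and follows exactly the standard argument implicit in the corollary (which the paper itself does not prove, citing \cite{CCS2020}): Theorem~\ref{thm:omegadecomposition} exhibits $F'_{e_n}$ as a global smooth section of the foliation by $\ker(\tilde\varpi)$, and the identity $\phi^*\varpi=\iota^*\tilde\varpi$ then follows formally from $\pi_\partial^*\varpi=\tilde\varpi$. Your discussion of smoothness and transversality is more explicit than what the paper records, but the underlying idea is the same.
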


Hence from now on we will require \eqref{omegareprfix2} and work on $F'_{e_n}$. The space of coframes and connections satisfying this last equation is the geometric phase space of the PC gravity theory.
We can now analyse the restriction of the Euler--Lagrange equations on the boundary to see which further constraints they impose on the geometric phase space. 
In order to simplify the computation of their Hamiltonian vector fields, it is convenient to rewrite the constraints on
$F'_{e_n}$ as discussed in \cite{CCS2020}:
\begin{align*}
    L_c &= \int_{\Sigma} c  e^{N-3} d_{\omega} e,\\
    P_{\xi}&= \int_{\Sigma}  \iota_{\xi} e  e^{N-3} F_{\omega} + \iota_{\xi} (\omega-\omega_0)  e^{N-3} d_{\omega} e,\\
    H_{\lambda} &= \int_{\Sigma} \lambda e_n \left(\frac{1}{(N-3)!} e^{N-3}F_\omega + \frac{1}{(N-1)!}\Lambda e^{N-1}\right),
\end{align*}
where $\omega_0$ is a reference connection and $c \in\Omega^{0,2}_\partial$, $\xi \in\mathfrak{X}(\Sigma)$ and $\lambda\in \Omega^{0,0}_\partial$ are Lagrange multipliers.

From now on we are going to consider the fields $c,\xi$ and $\lambda$ to be odd fields (shifted by 1 in a suitable supermanifold). This will be useful later for the BFV formalism. For more details we refer to \cite{CCS2020}. 

The constraints above are of first class, hence defining a coisotropic submanifold of the geometric phase space. The structure is specified by the following

\begin{theorem}[\cite{CCS2020}] \label{thm:first-class-constraints}
 Under Assumption \ref{Assumption_metric_non_deg}, the functions $L_c$, $P_{\xi}$, $H_{\lambda}$ are well defined on ${F}^{\partial}_{PC}$ and define a coisotropic submanifold  with respect to the symplectic structure $\varpi_{PC}$. In particular they satisfy the following relations
\begin{subequations}\label{brackets-of-constraints}
\begin{eqnarray}
\{L_c, L_c\} = - \frac{1}{2} L_{[c,c]} & \{P_{\xi}, P_{\xi}\}  =  \frac{1}{2}P_{[\xi, \xi]}- \frac{1}{2}L_{\iota_{\xi}\iota_{\xi}F_{\omega_0}} \\
\{L_c, P_{\xi}\}  =  L_{\mathrm{L}_{\xi}^{\omega_0}c} & \{L_c,  H_{\lambda}\}  = - P_{X^{(a)}} + L_{X^{(a)}(\omega - \omega_0)_a} - H_{X^{(n)}} \\
\{H_{\lambda},H_{\lambda}\}  =0 & \{P_{\xi},H_{\lambda}\}  =  P_{Y^{(a)}} -L_{ Y^{(a)} (\omega - \omega_0)_a} +H_{ Y^{(n)}} 
\end{eqnarray}
\end{subequations}
where $X= [c, \lambda e_n ]$, $Y = \mathrm{L}_{\xi}^{\omega_0} (\lambda e_n)$ and $Z^{(a)}$, $Z^{(n)}$ are the components of $Z\in\{X,Y\}$ with respect to the frame $(e_a, e_n)$.\footnote{The notation $\mathrm{L}_{\xi}^{\omega}$ denotes the covariant Lie derivative along the odd vector field $\xi$ with respect to a connection $\omega$:
\begin{align*}
\mathrm{L}_{\xi}^{\omega} A = \iota_{\xi} d_{\omega} A -  d_{\omega} \iota_{\xi} A \qquad A \in \Omega^{i,j}_{\partial}.
\end{align*}}
\end{theorem}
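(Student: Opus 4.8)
The plan is to treat the three constraints as the Hamiltonian generators of internal gauge rotations ($L_c$), boundary-tangent diffeomorphisms ($P_\xi$), and normal evolution ($H_\lambda$), and to read off the bracket relations from the way these transformations act on the pair $(e,[\omega])$. First I would verify well-definedness on $F^\partial_{PC}$. Since a representative has been fixed through the conditions \eqref{omegareprfix2} and we work on $F'_{e_n}$, which is symplectomorphic to $F_\partial$, it suffices to show that each integrand is insensitive to the residual freedom $\omega \mapsto \omega + v$ with $e^{N-3}v = 0$. For $L_c$ this is immediate, since the shift changes $e^{N-3}d_\omega e$ by $e^{N-3}[v,e]$, which vanishes after using $e^{N-3}v=0$ and integrating by parts; for $P_\xi$ and $H_\lambda$ one checks the analogous cancellations, the term $\iota_\xi(\omega-\omega_0)\,e^{N-3}d_\omega e$ in $P_\xi$ being precisely the correction that restores invariance.

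Second, and this is the technical heart, I would compute the Hamiltonian vector field $X_\Phi = (X_\Phi^e, X_\Phi^{[\omega]})$ of each constraint $\Phi$ from $\iota_{X_\Phi}\varpi = \delta\Phi$, using the symplectic form \eqref{classical-boundary-symplform}. Matching the $\delta e$ and $\delta[\omega]$ components yields two equations, one of which has the schematic form $e^{N-3} X_\Phi^{[\omega]} = (\dots)$. Because $e^{N-3}\wedge\cdot$ fails to be injective on $\Omega^{1,2}$ (see the Remark after \eqref{einstein}), this cannot be inverted naively; here I would invoke the structural decomposition underlying Theorem \ref{thm:omegadecomposition} to project onto the component on which the wedge is invertible, and solve for the representative of $X_\Phi^{[\omega]}$ compatible with \eqref{omegareprfix2}. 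This step is where the choice of $e_n$ and the fixing condition genuinely enter the computation.

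Third, with the vector fields in hand I would evaluate $\{\Phi,\Psi\} = X_\Phi(\Psi) = \iota_{X_\Phi}\delta\Psi$, reducing each bracket to the variation of one constraint along the flow generated by the other. The rotational bracket $\{L_c,L_c\} = -\tfrac12 L_{[c,c]}$ follows from the $\mathfrak{so}(N-1,1)$-action encoded by $L_c$ and the graded Jacobi identity; $\{L_c,P_\xi\} = L_{\mathrm{L}_\xi^{\omega_0}c}$ and $\{P_\xi,P_\xi\}$ follow from the covariant Lie derivative identity $\mathrm{L}_\xi^\omega = \iota_\xi d_\omega - d_\omega\iota_\xi$ together with the Bianchi identity $d_\omega F_\omega = 0$, with the anomalous $-\tfrac12 L_{\iota_\xi\iota_\xi F_{\omega_0}}$ produced by the reference connection $\omega_0$. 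The brackets involving $H_\lambda$ are the most laborious: expanding $X = [c,\lambda e_n]$ and $Y = \mathrm{L}_\xi^{\omega_0}(\lambda e_n)$ and decomposing along the frame $(e_a, e_n)$ gives the mixed combinations of $L$, $P$, $H$ on the right-hand sides. Since every bracket is proportional to the constraints, first-classness—and hence coisotropy of $C_\Sigma$—follows.

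I expect the main obstacle to be the inversion step in computing the Hamiltonian vector fields: the degeneracy of $e^{N-3}\wedge\cdot$ means the vector fields are only defined after the representative-fixing of Theorem \ref{thm:omegadecomposition}, and one must check that the resulting $X_\Phi^{[\omega]}$ again satisfies \eqref{omegareprfix2} so that the flow preserves $F'_{e_n}$. A secondary difficulty is the bookkeeping of the odd signs coming from $c,\xi,\lambda$ and confirming that all non-constraint terms cancel, so that the system is genuinely first class as claimed.
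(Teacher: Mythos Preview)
Your outline matches the paper's approach: compute the Hamiltonian vector fields of the three constraints from $\iota_{X_\Phi}\varpi=\delta\Phi$, then evaluate the brackets directly. Two points of your write-up, however, do not reflect how the argument actually runs and would cause trouble if carried out literally.

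First, your well-definedness step is off. The restricted constraints are \emph{not} invariant under $\omega\mapsto\omega+v$ with $e^{N-3}v=0$; this is stated explicitly in the paper (see the Remark preceding Section~\ref{rep scalar}). So there is no ``residual freedom'' to check once you pass to $F'_{e_n}$: the whole point of Theorem~\ref{thm:omegadecomposition} is that it singles out a \emph{unique} representative, and the constraints are then automatically well-defined functions on $F'_{e_n}\cong F_\partial$. Trying to verify $v$-invariance of, say, $H_\lambda$ would fail.

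Second, you over-anticipate the inversion difficulty. In the paper's computation (visible in the proof of Theorem~\ref{thm:first-class-constraints_scalar}, whose pure-gravity part is exactly the proof of the present theorem) one never solves for $\mathbb{H}_\omega$ itself: only the combination $e\,\mathbb{H}_\omega$ is ever needed in the bracket computations, and that is what $\delta H_\lambda$ gives directly. Likewise $\mathbb{L}_\omega$ and $\mathbb{P}_\omega$ are determined only up to elements of $\mathrm{Ker}\,W_1^{\partial,(1,2)}$, and these kernel pieces drop out of every bracket because they always appear wedged with $e$. There is therefore no need to invoke Theorem~\ref{thm:omegadecomposition} a second time to ``project'' or to verify that the flow preserves the structural constraint~\eqref{omegareprfix2}.
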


The data of this theorem can be translated into the BFV formalism as explained in Section \ref{s:BFV_theory}. The result is the following theorem.

\begin{theorem}[\cite{CCS2020}]\label{thm:BFVgravity}
Under Assumption \ref{Assumption_metric_non_deg}, let $\mathcal{F}_{PC}$ be the bundle

\begin{equation}
\mathcal{F}_{PC} \longrightarrow \Omega_{nd}^1(\Sigma, \mathcal{V}),
\end{equation}
with local trivialisation on an open $\mathcal{U}_{\Sigma} \subset \Omega_{nd}^1(\Sigma, \mathcal{V})$
\begin{equation}\label{LoctrivF1}
\mathcal{F}_{PC}\simeq \mathcal{U}_{\Sigma} \times \mathcal{A}(\Sigma) \oplus T^* \left(\Omega_{\partial}^{0,2}[1]\oplus \mathfrak{X}[1](\Sigma) \oplus C^\infty[1](\Sigma)\right) =: \mathcal{U}_{\Sigma} \times \mathcal{T}_{PC} ,
\end{equation}
and fields denoted by $e \in \mathcal{U}_{\Sigma}$ and $\omega \in \mathcal{A}(\Sigma)$ in degree zero such that they satisfy the structural constraint $e_n  e^{N-4} d_{\omega} e \in \Ima W_1^{\partial,(1,1)}$, ghost fields $c \in\Omega_{\partial}^{0,2}[1]$, $\xi \in\mathfrak{X}[1](\Sigma)$ and $\lambda\in \Omega^{0,0}[1]$ in degree one, $c^\dag\in\Omega_{\partial}^{N-1,N-2}[-1]$, $\lambda^\dag\in\Omega_{\partial}^{N-1,N}[-1]$ and $\xi^\dag\in\Omega_\partial^{1,0}[-1]\otimes\Omega_{\partial}^{N-1,N}$ in degree minus one, together with a fixed  $e_n \in \Gamma(\mathcal{V})$, completing the image of elements $e \in\mathcal{U}_{\Sigma}$ to a basis of  $\mathcal{V}$;
define a symplectic form and an action functional on $\mathcal{F}$ respectively by
\begin{align}
\varpi_{PC} = \int_{\Sigma} & \frac{1}{(N-3)!}e^{N-3} \delta e \delta \omega + \delta c \delta c^\dag + \delta \lambda \delta \lambda^\dag + \iota_{\delta \xi} \delta \xi^\dag,\label{symplectic_form_NC1} \\
S_{PC}= \int_{\Sigma} & \frac{1}{(N-3)!}c  e^{N-3} d_{\omega} e + \frac{1}{(N-3)!}\iota_{\xi} e  e^{N-3} F_{\omega} +\frac{1}{(N-3)!} \iota_{\xi} (\omega-\omega_0)  e^{N-3} d_{\omega} e\nonumber\\ 
&+ \lambda e_n \left(\frac{1}{(N-3)!} e^{N-3}F_\omega + \frac{1}{(N-1)!}\Lambda e^{N-1}\right) +\frac{1}{2} [c,c] c^{\dag}\nonumber\\ 
& - \mathrm{L}_{\xi}^{\omega_0} c c^{\dag} +\frac{1}{2}\iota_{\xi}\iota_{\xi}F_{\omega_0}c^{\dag}  + [c, \lambda e_n ]^{(a)}(\xi_a^{\dag}- (\omega - \omega_0)_a c^\dag) + [c, \lambda e_n ]^{(n)}\lambda^\dag \nonumber\\
 &- \mathrm{L}_{\xi}^{\omega_0} (\lambda e_n)^{(a)}(\xi_a^{\dag}- (\omega - \omega_0)_a c^\dag) - \mathrm{L}_{\xi}^{\omega_0} (\lambda e_n)^{(n)}\lambda^\dag - \frac{1}{2}\iota_{[\xi,\xi]}\xi^{\dag}. \label{action_NC1}
\end{align}
Then the triple $(\mathcal{F}_{PC}, \varpi_{PC}, S_{PC})$ defines a BFV structure on $\Sigma$.\end{theorem}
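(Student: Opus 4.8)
The plan is to verify in turn the three data that define a BFV structure in the sense of Section \ref{s:BFV_theory}: that $(\mathcal{F}_{PC},\varpi_{PC})$ is a graded symplectic manifold with $\varpi_{PC}$ of ghost degree $0$, that $S_{PC}$ is a well-defined local functional of ghost degree $+1$, and---the essential point---that $S_{PC}$ solves the classical master equation $\{S_{PC},S_{PC}\}=0$ for the Poisson bracket induced by $\varpi_{PC}$. Once the master equation holds, the Hamiltonian vector field $Q_{PC}$ of $S_{PC}$ satisfies $Q_{PC}^2=0$ (equivalent to $\{S_{PC},S_{PC}\}=0$ up to the usual factor), so that $(\mathcal{F}_{PC},\varpi_{PC},S_{PC})$ is a BFV structure.

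First I would settle the symplectic data. Closedness of $\varpi_{PC}$ is immediate, since it is manifestly $\delta$-exact termwise. For non-degeneracy, the ghost--antighost blocks $\delta c\,\delta c^\dag$, $\delta\lambda\,\delta\lambda^\dag$ and $\iota_{\delta\xi}\delta\xi^\dag$ are the canonical cotangent pairings on $T^*(\Omega_\partial^{0,2}[1]\oplus\mathfrak{X}[1](\Sigma)\oplus C^\infty[1](\Sigma))$ and hence non-degenerate by construction, while the $(e,\omega)$-block is non-degenerate precisely because we restrict to fields obeying the structural constraint $e_n e^{N-4}d_\omega e\in\Ima W_1^{\partial,(1,1)}$ of Theorem \ref{thm:omegadecomposition}: by the Corollary this is the symplectic form of $F'_{e_n}\cong F_\partial$. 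I would then record that $S_{PC}$ is well defined of degree $+1$: its antighost-number-zero part is exactly $L_c+P_\xi+H_\lambda$, well defined on $F^\partial_{PC}$ by Theorem \ref{thm:first-class-constraints}, and the remaining terms are polynomial in the ghosts and single antighosts with coefficients built from $e,\omega,e_n,\omega_0$, so they also descend to $F'_{e_n}$; a quick degree count shows every summand carries ghost degree $+1$.

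The core of the argument is the master equation, organised by antighost (antifield) number. Write $S_{PC}=S_0+S_1$, where $S_0=L_c+P_\xi+H_\lambda$ carries no antighosts and $S_1$ collects the terms linear in $c^\dag,\lambda^\dag,\xi^\dag$ (so the higher term $R$ of Section \ref{s:BFV_theory} vanishes). Splitting the bracket into its $(e,\omega)$-part, which preserves antighost number, and its ghost--antighost part, which lowers it by one, the identity $\{S_{PC},S_{PC}\}=0$ decomposes into three equations graded by antighost number:
\begin{align*}
\{S_0,S_0\}_{e\omega}+2\{S_0,S_1\}_{\mathrm{gh}}&=0,\\
2\{S_0,S_1\}_{e\omega}+\{S_1,S_1\}_{\mathrm{gh}}&=0,\\
\{S_1,S_1\}_{e\omega}&=0.
\end{align*}
The first equation is exactly the statement that the structure functions in $S_1$ reproduce the constraint algebra of Theorem \ref{thm:first-class-constraints}: computing $\{S_0,S_0\}_{e\omega}$ yields the brackets \eqref{brackets-of-constraints}, which are cancelled by $2\{S_0,S_1\}_{\mathrm{gh}}$ precisely because $S_1$ is assembled from the same data $[c,c]$, $\mathrm{L}_\xi^{\omega_0}$, $X=[c,\lambda e_n]$, $Y=\mathrm{L}_\xi^{\omega_0}(\lambda e_n)$ and $\iota_{[\xi,\xi]}$. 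The remaining two equations encode the higher coherence (generalized Jacobi) identities for these field-dependent structure functions.

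I expect the main obstacle to be exactly those higher-antighost identities, together with the evaluation of the Hamiltonian vector fields. Because $\varpi_{PC}$ carries the $e^{N-3}$ factor, inverting it to obtain the vector fields of $L_c,P_\xi,H_\lambda$ (and of the $S_1$-terms) is not a canonical cotangent inversion and relies on the representative-fixing of Theorem \ref{thm:omegadecomposition} to make the $\omega$-variations basic; controlling which variations descend is the first delicate point. The second is the web of covariant-Lie-derivative identities---the Jacobi identity for $\mathrm{L}_\xi^{\omega_0}$, the graded Leibniz rule for $[c,\cdot]$, and the interplay of $F_{\omega_0}$ with $\mathrm{L}_\xi^{\omega_0}$---needed to close the antighost-one and antighost-two equations; it is precisely the field dependence of the structure functions (through $\omega-\omega_0$ and $F_{\omega_0}$) that makes these nontrivial and forces terms such as $-\tfrac12 L_{\iota_\xi\iota_\xi F_{\omega_0}}$ in $\{P_\xi,P_\xi\}$. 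A useful safeguard is the abstract existence theorem for $R$ quoted in Section \ref{s:BFV_theory}, which guarantees that some BFV completion of $S_0$ exists; the computation then only has to confirm that the explicit $S_{PC}$ is such a completion and that no antighost-two term is forced.
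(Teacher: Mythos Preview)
The paper does not give its own proof of this theorem: it is quoted verbatim from \cite{CCS2020} and used as input for the coupled theories. That said, your strategy is correct and is exactly the one the paper follows when proving the analogous BFV theorems for the scalar, Yang--Mills and spinor couplings (Theorems \ref{thm:BFVaction}, \ref{thm:BFVactionYM}, \ref{thm:BFVaction spinor}). In particular, your split of $\varpi_{PC}$ into an $(e,\omega)$-block and a ghost block, and of $S_{PC}=S_0+S_1$ by antighost number, is identical to the paper's decomposition $\varpi=\varpi_f+\varpi_g$, $S=S_0+S_1$ in the proof of Theorem \ref{thm:BFVactionYM}; your three graded equations are precisely the paper's $\{S_0,S_0\}_f+2\{S_0,S_1\}_g=0$, $2\{S_0,S_1\}_f+\{S_1,S_1\}_g=0$, and $\{S_1,S_1\}_f=0$. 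Your identification of the first equation with the constraint algebra of Theorem \ref{thm:first-class-constraints} and your remark that the last equation reduces to a $\lambda^2=0$ vanishing both match how the paper argues in the Yang--Mills case (see \eqref{S01S10f} and the discussion following it).
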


\section{Real scalar field theory coupled to gravity}\label{s:scalar}

In this section we explore the boundary structure for the field theory generated by the coupling of gravity and a real scalar field theory. As we will see, the structure of the constraints of gravity is not directly affected by this coupling. Nonetheless, the kernel of the two-form induced from the bulk on the boundary changes in a non-trivial way, resulting in an additional structural constraint that fixes some components of the \emph{momentum} of the scalar field on the boundary.

\begin{remark}
    In this section we analyse only the case of a real scalar field. However the results presented here can be extended without big effort to the case of multiplets, or to the case of multiple scalar fields.
\end{remark}
	\subsection{Real scalar field in the first order formalism}

	We now consider a scalar field $\phi\in \mathcal{C}^{\infty}(M)$ as a smooth function on space--time.
	In order to couple the scalar field to gravity in the Palatini-Cartan formalism, it is useful to consider the first-order formulation introducing a new field $\Pi\in\Omega^{(0,1)}(M)$, i.e.\ a section of the ``Poincaré'' bundle $\mathcal{V}$.
	The idea behind the introduction of this new field is to avoid to consider the term
        \begin{equation}
			\frac{1}{2}g^{\mu\nu}\nabla_\mu\phi\nabla_\nu \phi,
		\end{equation} 
	which usually appears in the Klein--Gordon Lagrangian on an arbitrary background, because it involves the inverse $g^{\mu\nu}$ of the metric tensor, which is hard to deal with in calculations in terms of the vielbein.
	
	The new field $\Pi$ is a priori independent of $\phi$, but after the equations of motion are found, it will assume the role of the momentum associated to the scalar field.
	
	The minimal coupling (in the massless case) is described by the action
		\begin{equation}
		\begin{split}
		S&=S_{PC}+S_{scal} \hspace{10mm}\text{with} \\
		S_{PC}&=\int_M \frac{1}{(N-2)!}e^{N-2} \wedge F_\omega + \frac{\Lambda}{N!} e^N \\
		S_{scal}&=\int_M \frac{1}{(N-1)!}e^{(N-1)}\wedge \Pi \wedge d\phi +  \frac{1}{2N!}e^N (\Pi,\Pi),
		\end{split}
		\end{equation}
	where $(\cdot,\cdot)$ is a shorthand notation for the pairing $\eta$ in $\mathcal{V}$. In an orthonormal (with respect to the Minkowski metric) basis $\{v_a\}$ of $\mathcal{V}$, $\forall A=A^a v_a,B=B^bv_b\in \mathcal{V}$ it reads:	
		\begin{equation}
			(A,B):=A^aB^b\eta_{ab}.
		\end{equation}
	The variation of the action yields
		\begin{equation}
			\begin{split}
					\delta S=&\int_{M}\bigg[\frac{1}{(N-3)!}e^{N-3} F_\omega + \frac{\Lambda}{(N-1)!}e^{N-1}+\frac{1}{(N-2)!}e^{N-2}\Pi d\phi + \frac{1}{2(N-1)!}e^{N-1}(\Pi,\Pi)\bigg]\delta e +\\
					& \qquad + \frac{1}{(N-2)!}d_\omega(e^{N-2})\delta\omega + \frac{1}{(N-1)!}e^{N-1}d\phi\delta \Pi +  \frac{1}{N!}e^d(\Pi,\delta \Pi) +\\
					& \qquad \frac{1}{(N-1)!}d(e^{N-1}\Pi)\delta \phi+- d\bigg( \frac{1}{(N-2)!}e^{N-2}\delta \omega + \frac{1}{(N-1)!}e^{N-1}\Pi\delta\phi  \bigg).	
			\end{split}	
		\end{equation}
	We notice that the variation of the action produces a boundary term which, applying Stokes' theorem, is given by
	\begin{equation}\label{bound term scalar}
		\tilde{\alpha}:=\int_{\partial M}\frac{1}{(N-2)!}e^{N-2}\delta\omega + \frac{1}{(N-1)!}e^{N-1}\Pi\delta\phi
	\end{equation}
	This is the term corresponding to the local 1-form on the space of preboundary fields defined in Section \ref{s:KT}. Its variation will produce the pre-symplectic form which will be essential to construct the reduced phase space in the next section.
	
	From the variation of the action we also find the equations of motion, which are given by
		\begin{eqnarray}\label{eom scalar}
			&d_\omega e  =  0;  \\
			&\frac{1}{(N-3)!}e^{N-3} F_\omega + \frac{\Lambda}{(N-1)!}e^{N-1}  + \frac{1}{(N-2)!}e^{N-2}\Pi d\phi + \frac{1}{2(N-1)!}e^{N-1}(\Pi,\Pi)=0 ;\\
			& d(e^{N-1}\Pi)=0 \label{gauss_0}; \\
			&e^{N-1} (d\phi - (e,\Pi))=0 	\label{Pi eq_0},
		\end{eqnarray}
	where, to find the equation of motion corresponding to $\delta\Pi$, we used the following identity,\footnote{proved in Lemma \hyperref[(1,0)]{\ref*{useful identities}.(\ref*{(1,0)})} in Appendix \ref{a:tec_and_lproofs}} which holds for every $A,B\in\Omega^{(0,1)} $:
		\begin{equation}
			\frac{1}{N}e^N(A,B)= (-1)^{|A|+|B|}e^{N-1}(e,A)B.
		\end{equation}	
	We can further simplify equation \eqref{gauss_0}, in fact, using $d(e^{N-1}\Pi)=d_\omega(e^{N-1}\Pi)$ because top forms transform trivially under the action of the Lie algebra, then $d(e^{N-1}\Pi)=d_\omega(e^{N-1})\Pi+e^{N-1}d_\omega\Pi$, but $d_\omega e=0$, therefore we find that   \eqref{gauss_0} is equivalent to
		\begin{equation}
			e^{N-1}d_\omega\Pi=0 \label{gauss_1}.
		\end{equation} 
	Furthermore, we can also simplify \eqref{Pi eq_0}, since $W_{N-1}^{(1,0)}\colon\Omega^{(1,0)}(M)\rightarrow\Omega^{(N,N-1)}(M):A\mapsto e^{N-1}A$ is injective.\footnote{see Lemma \hyperref[lem: W_N-1 bijective]{\ref*{useful id bulk}.(\ref*{lem: W_N-1 bijective})} in Appendix \ref{a:tec_and_lproofs}} Therefore we obtain
		\begin{equation}
			d\phi + (e,\Pi)=0 \label{dynamics Pi}.
		\end{equation}
    This equation fixes $\Pi$ in terms of the derivatives of $\phi$, while \eqref{gauss_1} is then just the usual Klein--Gordon equation for a massless scalar field on an arbitrary background. To see this, we compute the scalar field part of the Lagrangian after having imposed the constraint and plug it into the action, showing that we recover the usual Klein--Gordon Lagrangian on a curved background.
	
	First of all,
	we have
		\begin{equation}
			\begin{split}
				\frac{e^N}{N!}&=\frac{1}{N!}\epsilon_{a_1\cdots a_N} e^{a_1}_{\mu_1}\cdots e^{a_N}_{\mu_N} dx^{\mu_1}\cdots dx^{\mu_N}=\epsilon_{a_1\cdots a_N}e^{a_1}_1\cdots e^{a_N}_N d^Nx\\
				&=\text{det}(e)d^Nx.
			\end{split}
		\end{equation}
	Then, since $\det(g)=-\det(e)^2$, we obtain $e^N/N!=\sqrt{-\det(g)}d^Nx=\text{Vol}_g$ as the canonical volume form. 
	In coordinates (with respect to the local basis $\{e_\mu\}$ of $V$), assuming that the metric is nondegenerate, eq. \eqref{dynamics Pi} reads 
		\begin{equation}
			\Pi^\mu=-g^{\mu\nu}\partial_\nu \phi.
		\end{equation}
	Finally we can compute the term in the scalar part of the action, using the previous identity
		\begin{equation}
			\begin{split}
				S_{\text{scal}}=\int_M \frac{1}{(N-1)!}e^{N-1}\Pi d\phi + \frac{1}{2N!}e^N (\Pi,\Pi)=-\int_M \frac{1}{2}(\text{Vol}_g) g^{\mu\nu}\partial_\mu \phi \partial_\nu\phi,
			\end{split}
		\end{equation}		
	which is exactly the Klein--Gordon Lagrangian, once we notice that $\nabla_\mu \phi=\partial_\mu \phi$.

	\subsection{Classical Boundary Structure in \texorpdfstring{$N=4$}{Lg}}
	We now assume our space--time manifold $M$ to be a  4-dimensional manifold with boundary $\Sigma:=\partial M$ and we study the boundary structure of the theory using the KT construction (see Section \ref{s:KT}). In particular we show that the constraints defining the reduced phase space are 
	first class, thus defining a coisotropic submanifold as their zero locus.  We also show that the scalar field coupling does not modify the boundary structure of pure gravity.

	\subsubsection{The Reduced Phase Space}
	Many of the results which we  present in this section are an extension of what have been shown in \cite{CCS2020} and recalled in Section \ref{s:rps_gravity}.
	We start by considering the boundary term \eqref{bound term scalar} that is found after the variation of the action; for $N=4$ it reads
	\begin{equation}
		\tilde{\alpha}:=\int_{\Sigma}\frac{1}{2}e^{2}\delta\omega + \frac{1}{3!}e^{3}\Pi\delta\phi.
	\end{equation}
	We again indicate the classical fields on the boundary by $(e,\omega,\phi,\Pi)$. The fields are  defined as in  Section \ref{s:rps_gravity} and additionally we have:
	\begin{itemize}
		\item $\phi\in\mathcal{C^{\infty}}(\Sigma)$ is a smooth function on $\Sigma$;
		\item $\Pi$ is an element of $\Omega^{(0,1)}_\partial:=\Omega^{(0,1)}(\Sigma)$, where we define $\Omega^{(i,j)}(\Sigma):=\Gamma(\bigwedge^iT^*\Sigma\otimes\bigwedge^j\mathcal{V}|_\Sigma)$.
	\end{itemize}
	Hence we denote the space of preboundary fields as $\tilde{F}_\partial=\Omega^{(1,1)}_{\partial,\text{n.d.}}\times\mathcal{A}_\Sigma\times \mathcal{C^{\infty}}(\Sigma)\times\Omega^{(0,1)}_\partial$.
	The next step is to take the variation of $\tilde{\alpha}$ and obtain a closed two-form on $\tilde{F}_\partial$:
	\begin{equation}
		\label{omega scal}
		\tilde{\varpi}:=\delta \alpha= \int_\Sigma e\delta e \delta \omega + \frac{1}{3!}\delta(e^3\Pi)\delta\phi.
	\end{equation}
	As before, this two-form is degenerate.
	Considering a generic vector field
	$X=\mathbb{X}_e\frac{\delta}{\delta e}+\mathbb{X}_\omega\frac{\delta}{\delta \omega}+\mathbb{X}_\phi\frac{\delta}{\delta \phi}+\mathbb{X}_\Pi\frac{\delta}{\delta \Pi}$,\footnote{the components of the vector fields are such that  $\mathbb{X}_e\in\Omega^{(1,1)}_{\partial\text{n.d}}$, $\mathbb{X}_\omega\in\mathcal{A}_\Sigma$, $\mathbb{X}_\phi\in\mathcal{C^{\infty}}(\Sigma)$ and $\mathbb{X}_\Pi\in\Omega^{(0,1)}_\partial$} we explicitly find the kernel of $\tilde{\varpi}$ as those vector fields satisfying $\iota_X\tilde{\varpi}=0$, which is equivalent to the following system of equations:
	\begin{eqnarray}
		&& e\mathbb{X}_e=0 \label{X_e ker};\\
		&& e \mathbb{X}_\omega +\frac{1}{2}e^2\Pi\mathbb{X}_\phi=0\label{X_omega ker} ;\\
		&& \frac{1}{2}e^2\Pi\mathbb{X}_e+\frac{1}{3!}e^3\mathbb{X}_\Pi=0\label{Pi ker};\\
		&& e^3\mathbb{X}_\phi=0 \label{phi ker}.
	\end{eqnarray}
	
	Defining $W_k^{\partial(i,j)}:=e^k\wedge:\Omega_\partial^{(i,j)}\rightarrow\Omega_\partial^{(i+k,j+k)} $, by Lemmas 	\hyperref[lem:Wep11]{\ref*{lem:We_boundary}.(\ref*{lem:Wep11})} and 
	\hyperref[lem: ker We0]{\ref*{lem:We_boundary}.(\ref*{lem: ker We0})} $W_1^{\partial(1,1)}$ and $W_3^{\partial(0,0)}$ are both injective, therefore \eqref{X_e ker} and \eqref{phi ker} are solved respectively by $\mathbb{X}_e=0$ and $\mathbb{X}_\phi=0$. 
	\eqref{X_omega ker} and \eqref{Pi ker} reduce to $e\mathbb{X}_\omega=0$ and $e^3\mathbb{X}_\Pi=0$. The geometric phase space is then found to be a bundle over $\Omega^{(1,1)}_{\partial,\text{n.d.}}$ with local trivialization on an open $\mathcal{U}_{\Sigma} \subset \Omega_{nd}^1(\Sigma, \mathcal{V})$  $$F_\partial\simeq\mathcal{U}_{\Sigma}\times\mathcal{A}^{red}(\Sigma)\times \mathcal{C^{\infty}}(\Sigma)\times(\Omega^{(0,1)}_\partial/\sim),$$ where
	\begin{align}
		\Pi\sim\tilde{\Pi} \hspace{5mm}&\Leftrightarrow \hspace{5mm}\Pi-\tilde{\Pi}=\gamma \hspace{2mm}\text{with}\hspace{2mm}e^3\gamma=0
	\end{align}
	and $\mathcal{A}^{red}(\Sigma)$ was defined in Section \ref{s:rps_gravity}.
	From now on, We denote  
	$\Omega^{\partial(0,1)}_{red}:=\Omega^{(0,1)}_\partial/\sim$.
	$F_\partial$ is thus a symplectic manifold with symplectic form
	\begin{equation}
		\varpi=\int_\Sigma e\delta e \delta[\omega]+\frac{1}{3!}\delta(e^3[\Pi])\delta\phi.
	\end{equation}
	\begin{remark}\label{p instead of Pi}
		Instead of $\Pi$, we might define a new boundary field $p:=\frac{1}{3!} e^3\Pi$. In this way the prefactor $e^3$ automatically selects the physical part in $\Pi$ without the need of a further symplectic reduction. Furthermore, we obtain a sympletic 2-form whose ``scalar field part'' is written in Darboux coordinates: $\varpi=\int_\Sigma e\delta e \delta[\omega]+\delta p\delta\phi$. 
	\end{remark}

	\begin{remark}
		As for the case without matter, notice that the constraints (as the restrictions of the EL equations from the bulk to the boundary) are not necessarily invariant under $v$-translations and $\gamma$-translations, therefore we fix a convenient set of representatives of the equivalence classes $[\omega]$ and $[\Pi]$. The next subsection deals with choosing such representatives in the ideal way. In order to do so, as in the pure gravity case described in Section \ref{s:rps_gravity}, we choose a section $e_n$ of $\mathcal{V}|_\Sigma$ and we restrict the space of fields by the conditions that $e_1,e_2,e_3,e_n$ form a basis.
	\end{remark}

	\subsubsection{Choice of Representatives via Constraints}\label{rep scalar}
	As mentioned, we need to fix convenient representatives of the classes $[\omega]\in \mathcal{A}^{red}_\Sigma$ and $[\Pi]\in\Omega^{\partial(0,1)}_{red}$. 
	The idea is to take advantage of the constraints to fix the representatives, in particular we will use parts of the dynamical constraints. The constraints to be imposed on the space of preboundary fields are
	\begin{eqnarray}
		&&d_\omega e  =  0  \label{const scal 1};\\
		&&e F_\omega + \frac{\Lambda}{3!}e^{3}  + \frac{1}{2}e^{2}\Pi d\phi + \frac{1}{2\cdot 3!}e^{3}(\Pi,\Pi)=0; \\
		&&d\phi + (e,\Pi)=0 \label{pi constraint}.
	\end{eqnarray}
	We do not impose $e^{d-1}d_\omega\Pi=0 $ because it is an evolution equation. Furthermore, it is a top form on $M$, therefore it cannot be restricted to $\Sigma$.
	The choice of representative of $[\omega]$ uses \eqref{const scal 1} and it follows verbatim the choice done in the gravity theory without additional matter fields. Hence, following the construction described in Section \ref{s:rps_gravity} we fix the representative of $[\omega]$ by choosing the connection $\omega$ satisfying 
	\begin{align*}
	    e_n d_\omega e	\in \Ima W_1^{\partial(1,1)}.
	\end{align*}
	Existence and uniqueness of such connection are proved in Theorem \ref{thm:omegadecomposition}.

	Let us now consider the equivalence class $[\Pi]$. 	We replicate the procedure used for the connection and use a constraint to fix the representative of it. The constraint will be based on \eqref{pi constraint}.
	In particular, we exploit the property of the following Lemma which will be proved in Appendix \ref{a:tec_and_lproofs}.
	\begin{lemma}\label{lem:technical_for_scalar}
	    Suppose that $g^\partial$ is nondegenerate, then the map $A_e\colon \Ker{W_3^{\partial(0,1)}} \rightarrow \Omega_{\partial}^{1,0}$, $A_e(p)= (e,p)$ is bijective.
	\end{lemma}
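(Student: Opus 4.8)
The plan is to reduce the statement to pointwise linear algebra on the fibres of $\mathcal{V}|_\Sigma$: since $A_e$ is $C^\infty(\Sigma)$-linear and acts algebraically at each point, it suffices to prove that, fibrewise, $A_e$ is a bijection of finite-dimensional vector spaces, and then reassemble over $\Sigma$. Fix $x\in\Sigma$ and set $W:=\Ima(e_x)\subseteq\mathcal{V}_x$, where $e_x\colon T_x\Sigma\to\mathcal{V}_x$ is the map induced by the vielbein. Nondegeneracy of $e$ makes $e_x$ injective, so $w_\mu:=e_x(\partial_\mu)$ form a basis of $W$ and $\dim W=3$ (recall $\dim\Sigma=3$ and $\dim\mathcal{V}_x=4$ in the case $N=4$ at hand).

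First I would identify the kernel. Since each factor of $e$ takes values in $W$, the $\mathcal{V}$-part of $e^3$ lies in $\wedge^3 W$ and is, up to the nonvanishing top form on $\Sigma$, the element $w_1\wedge w_2\wedge w_3$. Hence $e^3\wedge p$ vanishes in the one-dimensional space $\Omega_\partial^{(3,4)}$ if and only if $w_1\wedge w_2\wedge w_3\wedge p=0$, i.e.\ if and only if $p\in\mathrm{span}(w_\mu)=W$. This is pure exterior algebra and needs no metric hypothesis; it yields $\Ker{W_3^{\partial(0,1)}}=W$ fibrewise, a three-dimensional space, the same (fibrewise) dimension as the target $\Omega_\partial^{(1,0)}=\Omega^1(\Sigma)$.

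It then remains to prove that $A_e$ is injective on $W$, after which equality of dimensions upgrades injectivity to bijectivity. Evaluating $A_e(p)=(e,p)$ on $\partial_\mu$ gives $\eta(w_\mu,p)$, so $A_e(p)=0$ means $p$ is $\eta$-orthogonal to every $w_\mu$, hence to all of $W$. Here the hypothesis enters decisively: the Gram matrix $\eta(w_\mu,w_\nu)$ is precisely $g^\partial_{\mu\nu}$, so nondegeneracy of $g^\partial$ is exactly the statement that $\eta|_W$ is nondegenerate; since $p\in W$ is $\eta$-orthogonal to $W$, this forces $p=0$. Thus $A_e$ is fibrewise injective, therefore bijective, and $C^\infty(\Sigma)$-linearity globalizes the conclusion to sections.

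I expect the only genuinely delicate step to be the kernel identification—handling the $\wedge^4\mathcal{V}$-valued top form cleanly and confirming that $e^3$ contributes a nonzero volume factor on $\Sigma$ so that the vanishing of $e^3\wedge p$ is equivalent to the exterior-algebra condition $w_1\wedge w_2\wedge w_3\wedge p=0$. Everything downstream is short linear algebra, with the essential use of the hypothesis being the single observation that an element of $W$ that is $\eta$-orthogonal to $W$ must vanish when $g^\partial$ is nondegenerate.
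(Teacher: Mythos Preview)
Your proof is correct and follows essentially the same approach as the paper's own proof: both identify $\Ker{W_3^{\partial(0,1)}}$ fibrewise with the span of the $e_\mu$, then observe that $A_e$ on this subspace is governed by the Gram matrix $g^\partial_{\mu\nu}=\eta(e_\mu,e_\nu)$, whose nondegeneracy yields the conclusion. The paper arrives at bijectivity by passing to normal geodesic coordinates to read off $(e,p)_\mu=\pm p^\mu$ directly, whereas you use the equivalent and slightly more coordinate-free route of injectivity plus a dimension count; this is a stylistic rather than substantive difference.
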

	
	\begin{remark}
	    In analogy to what happens in gravity alone, the non-degeneracy condition is here fundamental to use this constraint to fix the representative. If the boundary metric is degenerate, the structure of the theory might be different as was shown for gravity alone in \cite{CCT2020}.
	\end{remark}
	
	Using this lemma, the following theorem shows that \eqref{pi constraint} fixes uniquely the representative of the equivalence class in an appropriate way.
	\begin{theorem}\label{constr rep scal}
	 Let $g^\partial$ be nondegenerate. Given any $\widetilde{\Pi}\in \Omega_{\partial}^{0,1}$, there is a unique decomposition $\widetilde{\Pi}= \Pi+p$ such that $p \in \Ker{W_3^{\partial(0,1)}}$ and
	 \begin{align}\label{e:constraintscalar}
	     (e,\Pi)=-d\phi.
	 \end{align}
	\end{theorem}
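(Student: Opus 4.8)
The plan is to reduce the statement to a direct application of Lemma \ref{lem:technical_for_scalar}. The key observation is that, since we decompose $\widetilde{\Pi}$ as $\Pi + p$ with $p$ constrained to lie in the kernel, we can eliminate $\Pi$ entirely and recast the whole problem as a single linear equation for $p$ that the lemma solves uniquely.

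Concretely, I would first substitute $\Pi = \widetilde{\Pi} - p$ into the defining constraint \eqref{e:constraintscalar}. Using bilinearity of the pairing $(\cdot,\cdot)$, the condition $(e,\Pi) = -d\phi$ becomes $(e,\widetilde{\Pi}) - (e,p) = -d\phi$, i.e.
\begin{equation*}
    (e,p) = (e,\widetilde{\Pi}) + d\phi .
\end{equation*}
Thus a pair $(\Pi, p)$ satisfies the requirements of the theorem if and only if $p \in \Ker{W_3^{\partial(0,1)}}$ solves this last equation, with $\Pi$ then determined as $\widetilde{\Pi} - p$.

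Next I would observe that the right-hand side $(e,\widetilde{\Pi}) + d\phi$ is an element of $\Omega_\partial^{1,0}$: indeed $(e,\widetilde{\Pi})$ is the $\eta$-contraction of the $(1,1)$-form $e$ with the section $\widetilde{\Pi}\in\Omega_\partial^{0,1}$, hence a genuine $1$-form on $\Sigma$, and $d\phi$ lies in $\Omega_\partial^{1,0}$ as well. Since $A_e(p) = (e,p)$ and, by Lemma \ref{lem:technical_for_scalar}, $A_e\colon \Ker{W_3^{\partial(0,1)}} \to \Omega_\partial^{1,0}$ is a bijection, there exists a unique $p \in \Ker{W_3^{\partial(0,1)}}$ with $A_e(p) = (e,\widetilde{\Pi}) + d\phi$. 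Setting $\Pi := \widetilde{\Pi} - p$ yields existence of the decomposition. Uniqueness follows from injectivity of $A_e$: if $\Pi + p = \Pi' + p'$ are two admissible decompositions, then $p - p' \in \Ker{W_3^{\partial(0,1)}}$ and $(e, p - p') = (e,\Pi') - (e,\Pi) = 0$, so $A_e(p - p') = 0$ forces $p = p'$ and hence $\Pi = \Pi'$.

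The only genuine content of the argument lies in Lemma \ref{lem:technical_for_scalar}, whose proof (deferred to Appendix \ref{a:tec_and_lproofs}) is where the nondegeneracy of $g^\partial$ is actually used, guaranteeing that the restriction of the pairing to $\Ker{W_3^{\partial(0,1)}}$ is an isomorphism onto $\Omega_\partial^{1,0}$. Granting the lemma, the theorem is purely formal, and I do not expect any serious obstacle at this level; the decomposition here plays the same role for $[\Pi]$ that Theorem \ref{thm:omegadecomposition} plays for $[\omega]$.
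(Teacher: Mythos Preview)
Your proposal is correct and follows essentially the same route as the paper: both arguments reduce the decomposition to a single linear equation for $p$ in $\Ker{W_3^{\partial(0,1)}}$ and then invoke the bijectivity of $A_e$ from Lemma~\ref{lem:technical_for_scalar} for existence and uniqueness. Your version is in fact slightly cleaner, since you reformulate the problem as $A_e(p)=(e,\widetilde{\Pi})+d\phi$ from the outset rather than first treating the trivial case and then introducing an auxiliary error term.
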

	\begin{proof}
	    If $\widetilde{\Pi}$ satisfies \eqref{e:constraintscalar} there is nothing to prove. Suppose that $(e,\widetilde{\Pi})-d\phi=K$, then since $A_e$ is bijective, there exists a $p \in \Ker{W_3^{\partial(0,1)}}$ such that $K=(e,p)$. Then $\Pi= \widetilde{\Pi}-p$ satisfies \eqref{e:constraintscalar}. 
	    
	    For uniqueness, suppose that there are two such decompositions $\widetilde{\Pi}= \Pi_1+p_1=\Pi_2+p_2$. Then we would have $(e,\Pi_1)=(e, \Pi_2)$ and consequently $(e,p_1)=(e, p_2)$ with $p_1,p_2 \in \Ker{W_3^{\partial(0,1)}}$. Since $A_e$ is bijective, this implies $p_1=p_2$.
	\end{proof}
	
	Hence from now on we will work on the space of fields given by $e\in\Omega_{nd}^1(\Sigma, \mathcal{V})$, $\omega\in \mathcal{A}(\Sigma)$, $\phi \in \mathcal{C^{\infty}}(\Sigma)$, $\Pi \in\Omega^{(0,1)}_\partial$ such that $e_n d_\omega e	\in \Ima W_1^{\partial(1,1)}$ and $(e,\Pi)=-d\phi$, which is symplectomorphic to $F_\partial$.

	\subsubsection{Poisson Brackets of the Constraints}\label{scal brackets}
	
	We still have to impose the constraints on the space of pre-boundary fields. In order to do so, we recast them into local forms by means of Lagrangian multipliers:
	furthermore, if we split $\tilde{\mu}=\iota_\xi e +\lambda e_n$, from $J_{\tilde{\mu}}$ we obtain two functions:
	\begin{align*}
	    &L_c:=\int_\Sigma ced_\omega e.\\
		&P_\xi=\int_\Sigma\frac{1}{2}\iota_\xi(e^2)F_\omega+\frac{1}{3!}\iota_\xi(e^3\Pi)d\phi + \iota_\xi(\omega-\omega_0)ed_\omega e;\\
		&H_\lambda=\int_\Sigma \lambda e_n \bigg( e F_\omega + \frac{\Lambda}{3!}e^{3}  + \frac{1}{2}e^{2}\Pi d\phi + \frac{1}{2\cdot 3!}e^{3}(\Pi,\Pi) \bigg) .
	\end{align*} 
	\begin{remark}
		It is important to notice that the Lagrange multiplier have the role of the generators of the symmetry. In particular, $c\in\Omega^{(0,2)}_\partial$ generates the internal gauge symmetry,\footnote{Recall that we identify $\mathfrak{so}(3,1)\simeq \wedge^2 \mathcal{V}$} $\xi \in\mathfrak{X}(\Sigma)$ represents the vector field parametrizing the local diffeomorphisms in the direction tangential to the boundary, while $\lambda \in \mathcal{C^{\infty}}(\Sigma)$ is the generator of the local diffeomorphism normal to the boundary. 
	\end{remark}

	For future advantage we added a term in $P_\xi$ proportional to $ed_\omega e$, depending also on a reference connection. The addition of this term does not change the constrained set. It is also important to notice that the terms in $J_{\tilde{\mu}}$ containing $e^3$ disappear in $P_\xi$ because $\iota_\xi(e^4)=0$.  
	
	Furthermore, we assume the Lagrange multipliers to be odd, namely we consider $c \in\Omega^{0,2}_\partial[1]$, $\xi \in\mathfrak{X}[1](\Sigma)$ and $\lambda\in \Omega^{0,0}_\partial[1]$, and we denote with $\mathrm{L}_{\xi}^{\omega}$ the covariant Lie derivative along the odd vector field $\xi$ with respect to a connection $\omega$:
	\begin{align}\label{Lie derivative}
		\mathrm{L}_{\xi}^{\omega} A = \iota_{\xi} d_{\omega} A -  d_{\omega} \iota_{\xi} A \qquad A \in \Omega^{i,j}_{\partial}.
	\end{align}
	
	\begin{theorem} \label{thm:first-class-constraints_scalar}
		With the usual hypothesis that $g^\partial$ is nondegenerate, the functions $L_c$, $P_{\xi}$, $H_{\lambda}$ define a coisotropic submanifold  with respect to the symplectic structure $\varpi_{PC}$. Their Poisson brackets read
		\begin{subequations}\label{brackets-of-constraints_scalar}
			\begin{eqnarray}
				\{L_c, L_c\} = - \frac{1}{2} L_{[c,c]} & \{P_{\xi}, P_{\xi}\}  =  \frac{1}{2}P_{[\xi, \xi]}- \frac{1}{2}L_{\iota_{\xi}\iota_{\xi}F_{\omega_0}} \\
				\{L_c, P_{\xi}\}  =  L_{\mathrm{L}_{\xi}^{\omega_0}c} & \{L_c,  H_{\lambda}\}  = - P_{X^{(a)}} + L_{X^{(a)}(\omega - \omega_0)_a} - H_{X^{(n)}} \\
				\{H_{\lambda},H_{\lambda}\}  =0 & \{P_{\xi},H_{\lambda}\}  =  P_{Y^{(a)}} -L_{ Y^{(a)} (\omega - \omega_0)_a} +H_{ Y^{(n)}} ,
			\end{eqnarray}
		\end{subequations}
		where $X= [c, \lambda e_n ]$, $Y = \mathrm{L}_{\xi}^{\omega_0} (\lambda e_n)$ and $Z^{(a)}$, $Z^{(n)}$ are the components of $Z\in\{X,Y\}$ with respect to the frame $(e_a, e_n)$.	
	\end{theorem}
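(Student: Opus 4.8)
The plan is to reduce everything to the pure-gravity computation of Theorem \ref{thm:first-class-constraints} by showing that the scalar field enters only through a sector that is either decoupled or carried along covariantly. The first observation is that the Gauss constraint $L_c=\int_\Sigma c\,e\,d_\omega e$ is \emph{identical} to its gravitational counterpart and involves neither $\phi$ nor $\Pi$; hence $\{L_c,L_c\}=-\tfrac12 L_{[c,c]}$ is inherited verbatim. For the rest, I would split $P_\xi = P_\xi^{\mathrm{grav}}+P_\xi^{\mathrm{scal}}$ and $H_\lambda = H_\lambda^{\mathrm{grav}}+H_\lambda^{\mathrm{scal}}$, where the gravitational parts are literally the constraints of Section \ref{s:rps_gravity}, and work throughout in the Darboux variable $p:=\tfrac{1}{3!}e^3\Pi$ of Remark \ref{p instead of Pi}, so that the scalar part of the symplectic form is the canonical pairing $\int_\Sigma \delta p\,\delta\phi$. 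The point of this change of variable is that the scalar term of the tangential momentum becomes $P_\xi^{\mathrm{scal}}=\int_\Sigma \iota_\xi(p)\,d\phi$, which depends only on the scalar Darboux pair and not on $(e,\omega)$.

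With this organization several brackets decouple immediately. Since $L_c$ is scalar-independent and $P_\xi^{\mathrm{scal}}$ is gravity-independent in the $p$ variable, the cross brackets $\{L_c,P_\xi^{\mathrm{scal}}\}$ vanish and $\{L_c,P_\xi\}=L_{\mathrm{L}_\xi^{\omega_0}c}$ is unchanged. Likewise $\{P_\xi^{\mathrm{grav}},P_\xi^{\mathrm{scal}}\}=0$, so $\{P_\xi,P_\xi\}$ is the sum of the gravitational result and the scalar self-bracket of $\int_\Sigma \iota_\xi(p)\,d\phi$; the latter reproduces $\tfrac12 P_{[\xi,\xi]}^{\mathrm{scal}}$ from the transport of $(\phi,p)$ along $\xi$, leaving the $-\tfrac12 L_{\iota_\xi\iota_\xi F_{\omega_0}}$ term purely gravitational, exactly as in \eqref{brackets-of-constraints}. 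The genuine gravity–matter entanglement survives only in $H_\lambda$, whose scalar part $\lambda e_n\big(\tfrac12 e^2\Pi\,d\phi+\tfrac{1}{2\cdot 3!}e^3(\Pi,\Pi)\big)$ retains explicit $e$-dependence and, crucially, sits behind the \emph{fixed} frame factor $\lambda e_n$, which is the same mechanism that already couples the sectors in gravity alone.

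To handle the brackets with $H_\lambda$ I would compute its Hamiltonian vector field explicitly, requiring it to be tangent to the structural constraints $e_n d_\omega e\in \Ima W_1^{\partial(1,1)}$ and $(e,\Pi)=-d\phi$; existence and uniqueness of the constrained components are guaranteed by Theorem \ref{thm:omegadecomposition}, Theorem \ref{constr rep scal} and the bijectivity of $A_e$ in Lemma \ref{lem:technical_for_scalar}. The two algebraic inputs for collapsing the scalar terms are the structural identity $(e,\Pi)=-d\phi$, which trades $d\phi$ for $(e,\Pi)$, and the pointwise identity $\tfrac{1}{N}e^N(A,B)=(-1)^{|A|+|B|}e^{N-1}(e,A)B$, which reshapes the quadratic term $e^3(\Pi,\Pi)$ so that it recombines with the $e^2\Pi\,d\phi$ term. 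Because the scalar terms appear with the same $\lambda e_n$ prefactor as the gravitational ones, the brackets $\{L_c,H_\lambda\}$ and $\{P_\xi,H_\lambda\}$ should produce scalar contributions of precisely the $X=[c,\lambda e_n]$ and $Y=\mathrm{L}_\xi^{\omega_0}(\lambda e_n)$ form, reassembling into $-P^{\mathrm{scal}}_{X^{(a)}}-H^{\mathrm{scal}}_{X^{(n)}}$ and $P^{\mathrm{scal}}_{Y^{(a)}}+H^{\mathrm{scal}}_{Y^{(n)}}$, so that the \emph{full} scalar-deformed constraints reappear on the right-hand side with the same structure functions.

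The main obstacle will be the two brackets in which scalar data from both arguments interact, namely $\{H_\lambda,H_\lambda\}$ and $\{P_\xi,H_\lambda\}$. For $\{H_\lambda,H_\lambda\}$ one must verify that the scalar contributions cancel identically so that the bracket remains $0$; this forces a delicate compensation between the canonical scalar bracket of the two $H^{\mathrm{scal}}$ factors and the cross terms generated when the $e^2$ and $e^3$ factors are varied against the gravitational pairing $e\,\delta e\,\delta\omega$, where the projection onto the fixed normal direction $e_n$ and the constraint $(e,\Pi)=-d\phi$ are indispensable. For $\{P_\xi,H_\lambda\}$ the delicate point is to confirm that the transported momentum $e^3\Pi$ behaves as a density of the correct weight under $\mathrm{L}_\xi^{\omega_0}$, leaving no residual term beyond $P^{\mathrm{scal}}_{Y^{(a)}}-L_{Y^{(a)}(\omega-\omega_0)_a}+H^{\mathrm{scal}}_{Y^{(n)}}$. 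Once all six brackets are shown to close on the constraints, first-classness is established, and coisotropy of the common zero locus follows from the non-degeneracy of $g^\partial$ in Assumption \ref{Assumption_metric_non_deg} exactly as in the gravitational case.
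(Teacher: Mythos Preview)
Your proposal is correct and follows essentially the same route as the paper: both use the Darboux variable $p=\tfrac{1}{3!}e^3\Pi$ to isolate the scalar sector of $P_\xi$, compute the Hamiltonian vector fields explicitly (finding in particular $\mathbb{L}_p=\mathbb{L}_\phi=0$, which is your decoupling observation), and then verify each bracket by direct calculation using the structural constraint $(e,\Pi)=-d\phi$ together with the identity $\tfrac{1}{N}e^N(A,B)=(-1)^{|A|+|B|}e^{N-1}(e,A)B$ and its boundary analogue. Your emphasis on splitting each constraint into $\mathrm{grav}+\mathrm{scal}$ pieces and exploiting decoupling is a cleaner organizational device than the paper's more monolithic computation, but the underlying identities and the delicate cancellations in $\{H_\lambda,H_\lambda\}$ and $\{P_\xi,H_\lambda\}$ are handled identically.
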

	\begin{remark}
		As said before, this theorem has the same structure as in \cite{CCS2020}, where the Palatini-Cartan theory without the scalar coupling is analyzed.
	\end{remark}
	
	\begin{proof}
		Theorem \ref{thm:omegadecomposition} allows to have well defined constraint, because of the uniqueness of the representative $\omega$ of $[\omega]$.
		
		In order to compute the brackets of the constraints, we first compute the Hamiltonian vector fields associated to the constraints, defined for a function $f$ on the space of boundary fields as $\mathbb{X}_f$ such that $\iota_{\mathbb{X}_f}\varpi=\delta f$.
		
		Before explicitly computing the vector fields, we recall Remark \ref{p instead of Pi} and notice that $P_\xi$ can also be written as 	
		\begin{equation}
			P_\xi=\int_\Sigma\frac{1}{2}\iota_\xi(e^2)F_\omega+ \iota_\xi(\omega-\omega_0)ed_\omega e + \iota_{\xi}(p)d\phi.
		\end{equation}
		Then the variations of the constraints are
		\begin{align*}
			\delta L_c= \int_{\Sigma}  -\frac{1}{2} c [\delta \omega, ee]  + \frac{1}{2} c d_{\omega}\delta (ee) = \int_{\Sigma} [c, e]  e \delta \omega  +  d_{\omega} c  e\delta e;
		\end{align*}
		\begin{align*}
			\delta P_{\xi} &= \int_{\Sigma}  \iota_{\xi} (e \delta e) F_{\omega}  -\frac{1}{2}\iota_{\xi} (e  e) d_{\omega}\delta \omega + \iota_{\xi} \delta \omega e d_{\omega} e  -\frac{1}{2} \iota_{\xi} (\omega-\omega_0) [\delta \omega, ee]\\
			& \qquad  + \frac{1}{2} \iota_{\xi} (\omega-\omega_0)  d_{\omega}\delta (ee) + \iota_\xi(\delta p)d\phi + \iota_\xi p d(\delta\phi)\\
			& \overset{\diamondsuit}{=} \int_{\Sigma} - e \delta e \iota_{\xi}F_{\omega}  + \frac{1}{2}d_{\omega} \iota_{\xi} (e  e) \delta \omega - \frac{1}{2} \delta \omega \iota_{\xi} d_{\omega}(e e)  + \frac{1}{2}\delta \omega [ \iota_{\xi} (\omega-\omega_0), ee] \\
			& \qquad  + \frac{1}{2} d_{\omega} \iota_{\xi} (\omega-\omega_0) \delta (ee) + \delta p \iota_\xi(d\phi) + d(\iota_\xi p)\delta \phi\\
			& = \int_{\Sigma}  - e \delta e \iota_{\xi}F_{\omega}   -  	(\mathrm{L}_{\xi}^{\omega} e ) e \delta \omega + e \delta \omega [ \iota_{\xi} (\omega-\omega_0), e]  +  d_{\omega} \iota_{\xi} (\omega-\omega_0) e \delta e \\
			& \qquad - \xi(\phi)\delta p - L_\xi^{\omega_0}(p)\delta\phi\\ 
			& = \int_{\Sigma}  - e \delta e (\mathrm{L}_{\xi}^{\omega_0} (\omega-\omega_0) + \iota_ {\xi}F_{\omega_0})  -  (\mathrm{L}_{\xi}^{\omega_0} e ) e \delta \omega - \xi(\phi)\delta p - L_\xi^{\omega_0}(p)\delta\phi.
		\end{align*}
		In the last computation the symbol ($\diamondsuit$) indicates that we used integration by parts.

		\begin{align*}
			\delta H_{\lambda} &= \int_{\Sigma} \lambda e_n  \delta e  F_{\omega}+\frac{1}{2}\Lambda  \lambda e_n e^2 \delta e -\lambda e_n e  d_{\omega} \delta \omega + \delta\bigg[\frac{1}{2\cdot 3!}\lambda e_n e^3 (\Pi,\Pi) + \frac{1}{2}\lambda e_n e^2 \Pi d\phi \bigg]\\
			&=\int_{\Sigma} \lambda e_n \bigg[ \left( F_\omega + \frac{\Lambda}{2}e^2 + \frac{e^2}{4}(\Pi,\Pi) + e\Pi d\phi\right)\delta e + \frac{e^3}{3!}(\Pi,\delta \Pi) + \frac{e^2}{2}d\phi \delta \Pi + \frac{e^2}{2}\Pi d\delta \phi\bigg]\\
			&\qquad + d_\omega(\lambda e_n e)\delta \omega\\
			&\overset{\bigstar \diamondsuit}{=}\int_{\Sigma}  \lambda e_n \bigg[ \left(  F_\omega + \frac{\Lambda}{2}e^2 + \frac{e^2}{4}(\Pi,\Pi) + e\Pi d\phi\right)\delta e + \frac{1}{2}d_\omega(\lambda e_n e^2 \Pi)\delta \phi \\
			&\qquad +\frac{\lambda e_n}{2} e^2 \left[ d\phi + (e,\Pi) \right]\delta \Pi - \lambda \frac{e^3}{3!}(e_n,\Pi)\delta \Pi  + d_\omega(\lambda e_n e)\delta \omega\\
			&=\int_{\Sigma}\left[ \lambda e_n \left( F_\omega + \frac{\Lambda}{2}e^2 +\frac{e^2}{4}(\Pi,\Pi) + e\Pi d\phi\right) - \frac{\lambda}{2}e^2(e_n,\Pi)\Pi \right]\delta e \\
			&\qquad  -\lambda(e_n,\Pi)\delta p + \frac{1}{2}d_\omega(\lambda e_n e^2 \Pi)\delta \phi +  d_\omega(\lambda e_n e)\delta \omega,
		\end{align*}
	
 	where we used $\forall A,B \in \Omega^{(0,1)}_\partial$ the following identity\footnote{a proof can be found in Lemma \ref{useful id boundary}}:
		\begin{equation}\tag{$\bigstar$}
			e_n\frac{e^{N-1}}{(N-1)!}(A,B)=(-1)^{|A|+|B|}\left[ \frac{1}{(N-2)!}e_n e^{N-2}(e,A)B + \frac{e^{N-1}}{(N-1)!}(e_n,A)B \right] \qquad  
		\end{equation}

		The components of the Hamiltonian vector fields of $L_c$ and $P_\xi$ are
		\begin{align*}
				&\mathbb{L}_e = [c,e] &\mathbb{L}_p=0 \\  &\mathbb{L}_\omega = d_{\omega} c + \mathbb{V}_L\label{e:ham_vf_J} &\mathbb{L}_\phi=0\\
				&\mathbb{P}_e = - \mathrm{L}_{\xi}^{\omega_0} e &\mathbb{P}_p=-\mathrm{L}_\xi^{\omega_0}(p)   \\  &\mathbb{P}_\omega = - \mathrm{L}_{\xi}^{\omega_0} (\omega-\omega_0) - \iota_ {\xi}F_{\omega_0} + \mathbb{V}_P   &\mathbb{P}_\phi=-\xi(\phi),
		\end{align*}
		where, e.g., $\mathbb{L}_e \equiv \mathbb{L}(e)$, with $\iota_{\mathbb{L}}\varpi_{PC}= \delta L_c$, and $\mathbb{V}_L, \mathbb{V}_P\in \mathrm{ker}(W_1^{\partial,(1,2)})$.

		The components of the Hamiltonian vector field of $H_\lambda$ are described by
		\begin{equation}\label{e:ham_vf_L}
			\begin{split}
				&\mathbb{H}_e= d_{\omega}(\lambda e_n) + \lambda \sigma \\ 
				&e \mathbb{H}_\omega  =  \lambda e_n \left( F_{\omega}+\frac{1}{2}\Lambda  e^2 + e\Pi d\phi + \frac{1}{4}e^2(\Pi,\Pi) \right) - \frac{\lambda }{2}e^2\Pi(\Pi,e_n) \\ 
				&\mathbb{H}_p=d_\omega\left( \frac{\lambda e_n}{2}e^2 \Pi \right) \\
				&\mathbb{H}_\phi = -\lambda (\Pi,e_n).
			\end{split}
		\end{equation}
		
		As one may see, we did not fully compute $\mathbb{H}_\omega$ from the variation of $H_\lambda$, but we do not need an explicit expression for it, since in the computations we will only need $e\mathbb{H}_\omega$. A similar argument holds for $\mathbb{L}_\omega$ and $\mathbb{P}_{\omega}$, which are defined up to an element in $\mathrm{ker}(W_1^{\partial,(1,2)})$ that will be irrelevant in the following arguments.
		
		\begin{remark}
		We argued that $\lambda$ is the parameter generating the local diffeomorphisms normal to the boundary. We now also see in \eqref{e:ham_vf_L} that $\mathbb{H}_\phi$ depends on $(\Pi,e_n)$. In the cylider $\Sigma\times[0,\epsilon]$ we can apply the equation of motion $(\Pi,e_n)=\partial_n \phi$, hence showing that the (infinitesimal) gauge transformation generated by $\mathbb{H}$ on $\phi$ depends on the transversal component of $\phi$, as predictable.
		\end{remark}
		\bigskip
		We now proceed to compute the Poisson brackets of the constraints. In the following computations we use integration by parts ($\diamondsuit$) and the following identities (for a proof of the second see \cite{CCS2020}):
		\begin{align}
			&\frac{1}{2}\iota_{[\xi,\xi]}A = - \frac{1}{2} \iota_{\xi}\iota_{\xi} d_{\omega_0}A + \iota_{\xi}d_{\omega_0}\iota_{\xi} A- \frac{1}{2} d_{\omega_0} \iota_{\xi}\iota_{\xi} A && \forall A \in \Omega^{i,j}_{\partial} \tag{$\spadesuit$}\\
			& \mathrm{L}_{\xi}^{\omega_0}\mathrm{L}_{\xi}^{\omega_0}B = \frac{1}{2}\mathrm{L}_{[\xi,\xi]}^{\omega_0}B + \frac{1}{2}[\iota_{\xi}\iota_{\xi}F_{\omega_0},B] && \forall B \in \Omega^{i,j}_{\partial}\tag{$\clubsuit$}\\
			&d_{\omega_0}(\omega_0-\omega)= F_{\omega_0} -F_{\omega} +\frac{1}{2}[\omega_0-\omega,\omega_0-\omega]; &&\tag{$\heartsuit$}
		\end{align}
	
		\begin{align*}
			\{L_c,  H_{\lambda}\} & = \int_{\Sigma} [c,e] \lambda e_n F_{\omega} + \frac{\lambda e_n}{4}e^2(\Pi,\Pi)[c,e]+{\lambda e_n}e\Pi d\phi [c,e] - \frac{\lambda}{2}e^2\Pi(\Pi,e_n)[c,e]\\
			&\qquad  +\frac{1}{2}[c,e]\Lambda  \lambda e_n e^2 + d_{\omega} c e (d_{\omega}(\lambda e_n) + \lambda \sigma) \\ 
			& = \int_{\Sigma} \lambda e_n\left( [c,e]  F_{\omega}  +\frac{1}{3!}[c,e^3]\Lambda  + \frac{1}{2\cdot 3!}[c,e^3](\Pi,\Pi) + \frac{1}{2}[c,e^2]\Pi d\phi\right) \\
			&\qquad + d_{\omega} c  d_{\omega}(\lambda e_n e) - \frac{\lambda}{3!}[c,e^3]\Pi(\Pi,e_n) \\
			&\overset{\diamondsuit}{=}  \int_{\Sigma} - [c, \lambda e_n ] \left( eF_{\omega}-\frac{\Lambda}{3!} e^3 - \frac{1}{2\cdot 3!}e^3(\Pi,\Pi) - \frac{1}{2}e^2\Pi d\phi \right)  \\
			&\qquad  -\frac{\lambda e_n}{2}e^2[c,\Pi]d\phi - \frac{\lambda}{3!}e^3[c,\Pi](\Pi,e_n)\\
			& = \int_{\Sigma} -[c, \lambda e_n ]^{(a)}e_a e F_{\omega} -[c, \lambda e_n ]^{(n)}e_n e F_{\omega}-\frac{1}{3!}\Lambda[c, \lambda e_n ]^{(n)}e_n e^3 \\
			&\qquad + [c, \lambda e_n ]^{(a)}e_a \frac{e^2}{2}\Pi d\phi +  [c, \lambda e_n ]^{(n)}e_n \frac{e^2}{2}\Pi d\phi + [c, \lambda e_n ]^{(n)}e_n \frac{e^3}{2\cdot 3!} (\Pi,\Pi)\\
			& = - P_{[c, \lambda e_n ]^{(a)}} + L_{[c, \lambda e_n ]^{(a)}(\omega - \omega_0)_a} - H_{[c, \lambda e_n ]^{(n)}};
		\end{align*}
		In the missing step we used that
		\begin{align*}
			-\frac{\lambda e_n}{2}e^2[c,\Pi]d\phi - \frac{\lambda}{3!}e^3[c,\Pi](\Pi,e_n) & = \frac{\lambda e_n}{3!}e^3\left( [c,\Pi]^{(a)}(\Pi,e_a) + [c,\Pi]^{(n)}(\Pi,e_n) \right)\\
			& = \frac{\lambda e_n}{3!}e^3 ([c,\Pi],\Pi) = \frac{\lambda e_n}{2\cdot 3!}e^3[c,(\Pi,\Pi)]=0.
		\end{align*}
		\begin{align*}
			\{P_{\xi}, P_{\xi}&\}  = \int_{\Sigma} \frac{1}{2}  \mathrm{L}_{\xi}^{\omega_0} (e e )\mathrm{L}_{\xi}^{\omega_0} (\omega - \omega_0) +  \frac{1}{2}  \mathrm{L}_{\xi}^{\omega_0} (e e ) \iota_{\xi}F_{\omega_0} + \xi(\phi) \mathrm{L}^{\omega_0}_\xi(p)   \\
			& \overset{\diamondsuit\clubsuit}{=} \int_{\Sigma} \frac{1}{4} \mathrm{L}_{[\xi,\xi]}^{\omega_0}(e e ) (\omega - \omega_0) + \frac{1}{4}[\iota_{\xi}\iota_{\xi}F_{\omega_0},e e ] (\omega - \omega_0) +  \frac{1}{2}  \mathrm{L}_{\xi}^{\omega_0} (e e ) \iota_{\xi}F_{\omega_0} 	 + d_{\omega_0}(\iota_\xi(p))\iota_\xi(d\phi) \\
			& = \int_{\Sigma} \frac{1}{4} \iota_{[\xi,\xi]}d_{\omega_0}(e e ) (\omega - \omega_0)+ \frac{1}{4}d_{\omega_0}\iota_{[\xi,\xi]}(e e ) (\omega - \omega_0) \\
			& \qquad + \frac{1}{4}[\iota_{\xi}\iota_{\xi}F_{\omega_0},e e ] (\omega - \omega_0) +  \frac{1}{2}  \mathrm{L}_{\xi}^{\omega_0} (e e ) \iota_{\xi}F_{\omega_0} +\iota_\xi(d_{\omega_0}\iota_\xi(p))d\phi\\
			& \overset{\diamondsuit\spadesuit}{=} \int_{\Sigma} \frac{1}{4} \iota_{[\xi,\xi]}d_{\omega}(e e ) (\omega - \omega_0)-\frac{1}{4} \iota_{[\xi,\xi]}[\omega - \omega_0,e e ] (\omega - \omega_0) \\
			& \qquad+ \frac{1}{4}\iota_{[\xi,\xi]}(e e ) d_{\omega_0}(\omega - \omega_0) + \frac{1}{4}[\iota_{\xi}\iota_{\xi}F_{\omega_0},e e ] (\omega - \omega_0) +  \frac{1}{2}  \mathrm{L}_{\xi}^{\omega_0} (e e ) \iota_{\xi}F_{\omega_0} + \frac{1}{2}\iota_{[\xi,\xi]}(p)d\phi\\
			& \overset{\heartsuit}{=} \int_{\Sigma} \frac{1}{4} d_{\omega}(e e ) \iota_{[\xi,\xi]}(\omega - \omega_0)-\frac{1}{4} [\omega - \omega_0,e e ] \iota_{[\xi,\xi]}(\omega - \omega_0)- \frac{1}{4}\iota_{[\xi,\xi]}(e e ) F_{\omega_0} \\
			& \qquad+\frac{1}{4}\iota_{[\xi,\xi]}(e e )F_{\omega} -\frac{1}{8}\iota_{[\xi,\xi]}(e e )[\omega_0-\omega,\omega_0-\omega] \\
			& \qquad + \frac{1}{4}[\iota_{\xi}\iota_{\xi}F_{\omega_0},e e ] (\omega - \omega_0) +  \frac{1}{2}  \mathrm{L}_{\xi}^{\omega_0} (e e ) \iota_{\xi}F_{\omega_0} + \frac{1}{2}\iota_{[\xi,\xi]}(p)d\phi\\
			& \overset{\spadesuit}{=} \int_{\Sigma} \frac{1}{4} d_{\omega}(e e ) \iota_{[\xi,\xi]}(\omega - \omega_0)+\frac{1}{4}\iota_{[\xi,\xi]}(e e )F_{\omega}+ \frac{1}{4}d_{\omega_0}(e e )\iota_{\xi}\iota_{\xi} F_{\omega_0} \\
			& \qquad + \frac{1}{2}d_{\omega_0}\iota_{\xi}(e e ) \iota_{\xi} F_{\omega_0}- \frac{1}{4}\iota_{\xi}\iota_{\xi}F_{\omega_0} [\omega - \omega_0,e e]  \\
			& \qquad +  \frac{1}{2} \left( \iota_{\xi}d_{\omega_0} (e e )-  d_{\omega_0} \iota_{\xi} (e e ) \right) \iota_{\xi}F_{\omega_0} + \frac{1}{2}\iota_{[\xi,\xi]}(p)d\phi \\
			& = \int_{\Sigma} \frac{1}{4} d_{\omega}(e e ) \iota_{[\xi,\xi]}(\omega - \omega_0)+\frac{1}{4}\iota_{[\xi,\xi]}(e e )F_{\omega} + \frac{1}{2}\iota_{[\xi,\xi]}(p)d\phi -\frac{1}{4}d_{\omega}(e e )\iota_{\xi}\iota_{\xi} F_{\omega_0}  \\
			& = \frac{1}{2}P_{[\xi, \xi]} - \frac{1}{2}L_{\iota_{\xi}\iota_{\xi}F_{\omega_0}};
		\end{align*}
		\begin{align*}
			\{L_c, L_c\} & = \int_{\Sigma}  [c,e] e  d_{\omega} c = \int_{\Sigma}  \frac{1}{2} [c,ee]  d_{\omega} c \\
			&= \int_{\Sigma} \frac{1}{4} d_{\omega}[c , c] ee = \int_{\Sigma} -\frac{1}{2} [c , c]  e d_{\omega}e = - \frac{1}{2} L_{[c,c]};
		\end{align*}
		\begin{align*}
			\{L_c, P_{\xi}&\} = \int_{\Sigma} - [c,e] e(\mathrm{L}_{\xi}^{\omega_0} (\omega-\omega_0) + \iota_ {\xi}F_{\omega_0}) - d_{\omega} c e  \mathrm{L}_{\xi}^{\omega_0} e \\
			& = \int_{\Sigma} \frac{1}{2} \left(\mathrm{L}_{\xi}^{\omega_0}c [\omega- \omega_0, ee]+ c [\omega- \omega_0,\mathrm{L}_{\xi}^{\omega_0}( ee)]- c [ee, \iota_ {\xi}F_{\omega_0}]-  d_{\omega} \mathrm{L}_{\xi}^{\omega_0} (e e) c\right)\\
			& = \int_{\Sigma} \frac{1}{2} \mathrm{L}_{\xi}^{\omega_0}c [\omega, ee]- \frac{1}{2} d c \iota_{\xi} d(ee) + \frac{1}{2} [ \iota_{\xi}\omega_0, d(ee)] c\\
			& =  \int_{\Sigma} \frac{1}{2}  \mathrm{L}_{\xi}^{\omega_0}c d_\omega(ee) = \int_{\Sigma} \mathrm{L}_{\xi}^{\omega_0}c e d_\omega e = L_{\mathrm{L}_{\xi}^{\omega_0}c};
		\end{align*}

		\begin{align*}
			\{P_{\xi},H_{\lambda}\} & = \int_{\Sigma} - \mathrm{L}_{\xi}^{\omega_0} e \lambda e_n F_{\omega} -\frac{1}{2}\Lambda \mathrm{L}_{\xi}^{\omega_0} e \lambda e_n e^2 - \frac{\lambda e_n}{2\cdot 3!}(\Pi,\Pi)\mathrm{L}^{\omega_0}_\xi(e^3) - \frac{\lambda e_n}{2}\Pi d\phi \mathrm{L}^{\omega_0}_\xi(e^2)\\
			& \qquad + \frac{\lambda}{3!}\Pi(\Pi,e_n)\mathrm{L}^{\omega_0}_\xi(e^3) - \left( \mathrm{L}_{\xi}^{\omega_0} (\omega-\omega_0)+ \iota_ {\xi}F_{\omega_0}\right)e(d_{\omega}(\lambda e_n) + \lambda \sigma)   \\
			&\qquad + \lambda \mathrm{L}^{\omega_0}_\xi(p)(\Pi,e_n) - \frac{1}{2}d_\omega\left( \lambda e_n e^2 \Pi \right)\iota_\xi d\phi \\  
			& = \int_{\Sigma} - \mathrm{L}_{\xi}^{\omega_0} e \lambda e_n F_{\omega}-\frac{1}{3!}\Lambda \mathrm{L}_{\xi}^{\omega_0} e^3 \lambda e_n  -\left( \mathrm{L}_{\xi}^{\omega_0} (\omega-\omega_0)+ \iota_ {\xi}F_{\omega_0}\right) d_{\omega}(e \lambda e_n)\\
			& \qquad + \mathrm{L}^{\omega_0}_\xi\left( \frac{\lambda e_n}{2\cdot 3!} \right) e^3 (\Pi,\Pi) + \frac{\lambda e_n}{2\cdot 3!} e^3\mathrm{L}^{\omega_0}_\xi(\Pi,\Pi)- \frac{\lambda e_n}{2}\Pi d\phi \mathrm{L}^{\omega_0}_\xi(e^2) \\
			&\qquad  + \frac{\lambda}{3!}\Pi(\Pi,e_n)\mathrm{L}^{\omega_0}_\xi(e^3) + \frac{\lambda}{3!}\mathrm{L}^{\omega_0}_\xi(\pi^n e_n)e^3 (\Pi,e_n) - \frac{\lambda e_n}{3!}\mathrm{L}^{\omega_0}_\xi(e^3)\pi^n \\
			&\qquad + \mathrm{L}^{\omega_0}_\xi\left(\frac{\lambda e_n}{2}\right)e^2 \Pi d\phi + \frac{\lambda e_n}{2}\mathrm{L}^{\omega_0}_\xi(e^2)\Pi d\phi + \frac{\lambda e_n}{2}e^2\mathrm{L}^{\omega_0}_\xi(\Pi)d\phi\\
			& = \int_{\Sigma}  \mathrm{L}_{\xi}^{\omega_0} (\lambda e_n) \left( eF_\omega + \frac{e^2}{2}\Lambda + \frac{e^2}{4}(\Pi,\Pi) + e\Pi d\phi \right)  \\ 
			& \qquad + e   \lambda e_n \mathrm{L}_{\xi}^{\omega_0} F_{\omega} + \left( d_{\omega} \iota_{\xi} (\omega-\omega_0)- \iota_ {\xi}F_{\omega}\right) d_{\omega}(e \lambda e_n)  \\
			& \qquad + \lambda e_n \left[ \frac{e^3}{3!}( \Pi,\mathrm{L}_{\xi}^{\omega_0}(\Pi)) + \frac{e^3}{2} (e,\Pi)\mathrm{L}_{\xi}^{\omega_0}(\Pi) \right] +   \frac{\lambda}{3!}e^3 (\Pi,e_n)\mathrm{L}_{\xi}^{\omega_0}(\Pi) \\
			&\overset{\bigstar}{=}\int_{\Sigma}  \mathrm{L}_{\xi}^{\omega_0} (\lambda e_n) \left( eF_\omega + \frac{e^2}{2}\Lambda + \frac{e^2}{4}(\Pi,\Pi) + e\Pi d\phi \right)  \\ 
			& \qquad + e   \lambda e_n \mathrm{L}_{\xi}^{\omega_0} F_{\omega} + \left( d_{\omega} \iota_{\xi} (\omega-\omega_0)- \iota_ {\xi}F_{\omega}\right) d_{\omega}(e \lambda e_n)  \\
			& \qquad -  \frac{\lambda e_n}{2}e^3 (e,\Pi)\mathrm{L}_{\xi}^{\omega_0}(\Pi) - \frac{\lambda}{3!}e^3 (\Pi,e_n)\mathrm{L}_{\xi}^{\omega_0}(\Pi) \\
			&\qquad + \frac{\lambda e_n}{2}e^3 (e,\Pi)\mathrm{L}_{\xi}^{\omega_0}(\Pi) +   \frac{\lambda}{3!}e^3 (\Pi,e_n)\mathrm{L}_{\xi}^{\omega_0}(\Pi) \\
			& = \int_{\Sigma}    \mathrm{L}_{\xi}^{\omega_0} (\lambda e_n) \left( eF_\omega + \frac{e^2}{2}\Lambda + \frac{e^2}{4}(\Pi,\Pi) + e\Pi d\phi \right) \\
			& = P_{ \mathrm{L}_{\xi}^{\omega_0} (\lambda e_n)^{(a)}} +H_{ \mathrm{L}_{\xi}^{\omega_0} (\lambda e_n)^{(n)}}-L_{ \mathrm{L}_{\xi}^{\omega_0} (\lambda e_n)^{(a)} (\omega - \omega_0)_a},
		\end{align*}

		where we used that $(e,\Pi)=d\phi$

		Finally,	
		\begin{align*}
			\{H_\lambda, H_\lambda\} & =\int_{\Sigma}\left[ \lambda e_n \left( F_\omega + \frac{\Lambda}{2}e^2 +\frac{e^2}{4}(\Pi,\Pi) + e\Pi d\phi\right) - \frac{\lambda}{2}e^2(e_n,\Pi)\Pi \right](d_\omega(\lambda e_n)+\lambda \sigma) \\
			&\qquad - \lambda(e_n,\Pi)d_\omega\left(\frac{\lambda e_n}{2}e^2\Pi\right)\\
			& = \int_{\Sigma} \frac{\lambda}{2}d\lambda e_n e^2 (e_n,\Pi)\Pi - \frac{\lambda}{2}d\lambda e_n e^2 (e_n,\Pi)\Pi = 0 ,
		\end{align*}
		since most of the terms vanish because $e_n^2=0$ and $\lambda^2=0$.
	\end{proof}

	\subsection{BFV Formalism}\label{sec:scalBFV}
	In this section we apply the content of Section \ref{s:BFV_theory}. In particular, we embed $F_\partial$ as the body of a supermanifold $\mathcal{F}$, whose odd coordinates are given by taking the Lagrange multipliers as fields (the ghosts) and adding their momenta (ghost momenta). The result is presented in the following theorem, where we use the notation and quantities of the analogous Theorem \ref{thm:BFVgravity} in which the BFV theory of gravity without matter is described.
	
	\begin{theorem}\label{thm:BFVaction}
		Let $\mathcal{F}_{S}$ be the bundle 
		\begin{equation*}
\mathcal{F}_{S} \longrightarrow \Omega_{nd}^1(\Sigma, \mathcal{V}),
\end{equation*}
with local trivialisation on an open $\mathcal{U}_{\Sigma} \subset \Omega_{nd}^1(\Sigma, \mathcal{V})$
		\begin{equation*}
			\mathcal{F}_S\simeq \mathcal{T}_{PC}\times \Omega^{(0,1)}_{\partial,\text{red}} \times \mathcal{C^\infty}(\Sigma)
		\end{equation*}
		where $\mathcal{T}_{PC}$ was defined in \eqref{LoctrivF1} and the additional fields are denoted by $\Pi\in\Omega^{(0,1)}_{\partial,\text{red}}$ and $\phi\in\mathcal{C^{\infty}}(\Sigma)$ and such that  they satisfy the structural constraints $(e,\Pi)=d\phi$. 
		The symplectic form and the action functional on $\mathcal{F}_S$ are respectively defined by
		\begin{align*}
			\varpi_S &= \varpi_{PC}+\int_{\Sigma}\frac{1}{3!}\delta(e^3\Pi)\delta\phi, \\
			S_S &= S_{PC} +\int_{\Sigma} \frac{1}{3!}\iota_\xi(e^3\Pi)d\phi
			 + \lambda e_n \left( \frac{1}{2\cdot 3!}e^3(\Pi,\Pi) + \frac{1}{2}e^2\Pi d\phi \right).
		\end{align*}
		Then the triple $(\mathcal{F}_S, \varpi_S, S_S)$ defines a BFV structure on $\Sigma$.
	\end{theorem}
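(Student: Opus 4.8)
The plan is to verify the three conditions that make $(\mathcal{F}_S,\varpi_S,S_S)$ a BFV structure in the sense of Section \ref{s:BFV_theory}: that $\varpi_S$ is a closed, non-degenerate two-form of ghost number zero, that $S_S$ is a local functional of ghost number $+1$, and that it obeys the classical master equation $\{S_S,S_S\}=0$. The first two points are immediate. Passing to the variable $p:=\frac{1}{3!}e^3\Pi$ of Remark \ref{p instead of Pi}, the scalar contribution to $\varpi_S$ becomes the Darboux pairing $\int_\Sigma \delta p\,\delta\phi$, so that $\varpi_S=\varpi_{PC}+\int_\Sigma\delta p\,\delta\phi$ is manifestly closed and, in the coordinates $(e,\omega,\phi,p)$ together with the ghost/antighost pairs, block-diagonal and non-degenerate; the structural constraint $(e,\Pi)=d\phi$ selects the representative on which the scalar terms are well defined, exactly as in Theorem \ref{constr rep scal}. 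For the ghost number, each of the two terms added to $S_{PC}$ carries precisely one of the degree-$+1$ multipliers $\xi$ or $\lambda$ and no antighost, so $S_S$ has the same ghost number $+1$ as $S_{PC}$.

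The heart of the matter is the master equation. I would split $S_S=S_{PC}+S_{\mathrm{scal}}$, where $S_{\mathrm{scal}}:=S_S-S_{PC}$ collects the two new scalar terms, and expand $\{S_S,S_S\}=\{S_{PC},S_{PC}\}+2\{S_{PC},S_{\mathrm{scal}}\}+\{S_{\mathrm{scal}},S_{\mathrm{scal}}\}$. The first bracket vanishes by Theorem \ref{thm:BFVgravity}: since $S_{PC}$ is independent of $\phi$ and $p$, the new Darboux directions contribute nothing and the computation collapses to the pure-gravity master equation. It thus remains to show $2\{S_{PC},S_{\mathrm{scal}}\}+\{S_{\mathrm{scal}},S_{\mathrm{scal}}\}=0$, which is the genuinely new content.

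The strategy for this is to track which conjugate pairs actually contribute. As $S_{\mathrm{scal}}$ contains no antighost and no $\omega$, its self-bracket $\{S_{\mathrm{scal}},S_{\mathrm{scal}}\}$ reduces to the scalar Darboux bracket in $(\phi,p)$, while $\{S_{PC},S_{\mathrm{scal}}\}$ receives contributions only from the $(e,\omega)$ pair and from the $(\xi,\xi^\dag)$ and $(\lambda,\lambda^\dag)$ ghost pairs. These are exactly the ghost-covariant versions of the scalar pieces of the constraint brackets $\{P_\xi,P_\xi\}$, $\{L_c,H_\lambda\}$, $\{P_\xi,H_\lambda\}$ and $\{H_\lambda,H_\lambda\}$ computed in the proof of Theorem \ref{thm:first-class-constraints_scalar}; concretely, the Hamiltonian vector fields $\mathbb{P}_\phi,\mathbb{P}_p,\mathbb{H}_\phi,\mathbb{H}_p$ and the scalar pieces of $\mathbb{H}_e,\mathbb{H}_\omega$ found there feed directly into this computation. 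Since that theorem shows the constraints to be first class with exactly the pure-gravity structure functions of \eqref{brackets-of-constraints_scalar}, the antighost-dependent part of $S_S$ is already that of $S_{PC}$, and no additional higher-order (``$R$'') term is forced: the scalar field introduces no new ghosts and hence no new structure beyond that already solving the gravity master equation.

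The step I expect to be the main obstacle is precisely the cancellation $2\{S_{PC},S_{\mathrm{scal}}\}+\{S_{\mathrm{scal}},S_{\mathrm{scal}}\}=0$. Although the constraint brackets have the same formal shape as in gravity, the constraints genuinely depend on $\phi$ and $\Pi$, so one must check that the scalar terms produced by $2\{S_{PC},S_{\mathrm{scal}}\}$ through the $(e,\omega)$ and ghost pairings are annihilated by the $(\phi,p)$ self-bracket $\{S_{\mathrm{scal}},S_{\mathrm{scal}}\}$. The delicate contributions are those in the mixed sectors $\{P_\xi,H_\lambda\}$ and $\{L_c,H_\lambda\}$, where $(\Pi,e_n)$ and the momentum $p$ enter through $\mathbb{H}_\phi$ and $\mathbb{H}_p$; as in the proof of Theorem \ref{thm:first-class-constraints_scalar}, controlling them requires repeated use of the structural constraint $(e,\Pi)=d\phi$, of the identity $(\bigstar)$, and of the nilpotency relations $e_n^2=\lambda^2=0$ that kill the remaining terms of $\{H_\lambda,H_\lambda\}$.
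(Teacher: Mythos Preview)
Your proposal is correct and follows essentially the same approach as the paper: split $S_S=S_{PC}+S_{\mathrm{scal}}$, invoke Theorem \ref{thm:BFVgravity} for $\{S_{PC},S_{PC}\}=0$, and reduce $2\{S_{PC},S_{\mathrm{scal}}\}+\{S_{\mathrm{scal}},S_{\mathrm{scal}}\}=0$ to the constraint algebra of Theorem \ref{thm:first-class-constraints_scalar} using that $S_{\mathrm{scal}}$ contains no ghost momenta. Your treatment is somewhat more explicit about the symplectic non-degeneracy and ghost-number checks and about which conjugate pairs contribute, but the logical skeleton is the same.
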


	\begin{proof}
		We follow the same strategy of \cite{CCS2020}, from which we also borrow the notation. The only bit that we need to prove, is that the new BFV action $S_S$ still satisfies  the classical master equation
			\begin{equation}
				\{S_S,S_S\}=\iota_{Q_S} \iota_{Q_S} \varpi_S =0,
			\end{equation}
		where $Q_S$ is the Hamiltonian vector field of $S_S$, defined by $\iota_{Q_S} \varpi_S= \delta S_S$. In order to do so, we can exploit the results of \cite{CCS2020} and by linearity we get 
		\begin{align*}
		    \{S_S,S_S\}=\{S_{PC},S_{PC}\}+2\{S_{PC},S_{add}\}+\{S_{add},S_{add}\}
		\end{align*}
		where we denoted by $S_{add}$ the part of $S_S$ containing the scalar field and its momentum. We have that $\{S_{PC},S_{PC}\}=0$ from Theorem \ref{thm:BFVgravity}.
		The remaining part $2\{S_{PC},S_{add}\}+\{S_{add},S_{add}\}=0$ is instead a consequence of Theorem \ref{thm:first-class-constraints_scalar}. Indeed, the explicit computation of the second bracket follows verbatim the computation of the brackets between the constraints in the proof of the aforementioned theorem by just considering only  the terms containing $\Pi$ or $\phi$. Nonetheless, the first bracket produces in a trivial way exactly the results of these brackets, since $S_{add}$ does not depend on ghost momenta.
		\end{proof}

\begin{remark}
    The BFV structure of Theorem \ref{thm:BFVaction} depends on a reference connection $\omega_0$. However, performing a change of variables it is possible to obtain a BFV theory not depending on it that still represents a cohomological resolution of the reduced phase space. The precise expression of the change of variables is given in \cite{CCS2020} for the PC theory without matter and does not change in presence of a scalar field.
\end{remark}

\section{Yang--Mills coupled to gravity}\label{s:YM}
	
	We now move our attention to the more complicated (but also more physically interesting) case of the coupling of Yang--Mills field to gravity. Also in this case it is useful to work in the first order formalism.
	
	We start by considering a principal bundle $(R,G,\pi,M)$ over the $N$-dimensional space--time manifold $M$. We assume $G$ to be a compact Lie group with 
	Lie algebra $\mathfrak{g}$.\footnote{All the following considerations actually work for any Lie algebra g.} 
	
The gauge field is defined to be the connection 1-form $A$.
Let $\{T_I\}$ be a basis for $\mathfrak{g}$, then we express locally $A$ as\footnote{Note that we use uppercase latin letters to denote the indices of this Lie algebra in order to distinguish them from the indices of the vector bundle $\mathcal{V}$ which are denoted with lowercase latin letters.} 
		\begin{equation}
			A=A^{I}(x)T_I= A^I_{\mu}T_I dx^{\mu}.
		\end{equation}
	In particular, the gauge fields are in a space locally modeled on $\Gamma(T^*M\otimes \mathfrak{g})$, which we will denote by $\mathcal{A}_{\text{YM}}$. The curvature two-form is as usual defined to be $F_A:=dA+\frac{1}{2}[A,A]$. In coordinates, it reads
		\begin{equation}
			F_A=\left(dA^I + \frac{1}{2}f^I_{JK}A^J A^K\right)T_I=F^I T_I,
		\end{equation}
	where $F^I=\frac{1}{2}F^I_{\mu\nu}dx^\mu\wedge dx^\nu$.
	
	The gauge invariant quantity that we can construct starting from $A$ is Tr$(F_A \wedge \star F_A)$, where $\star$ denotes the Hodge dual. However, in order to define it, we need to use the metric tensor, which as we know is not the fundamental object of our field theoretical description and is found in terms of the vielbein. As in the case of the scalar field, we then need to find a way to encode the dynamics of the Yang--Mills field in an action functional containing the vielbein. To do so, we introduce an independent field $B\in \Gamma(\wedge^2 \mathcal{V}\otimes \mathfrak{g})$, which is a $\mathfrak{g}$-valued section of the second exterior power of the Minkwoski bundle $\mathcal{V}$. In coordinates, it reads $B=B^{\mu\nu|I}e_\mu e_\nu T_I$, where we used $\{e_{\mu}\}$ as a local basis for $V$.
	
	The Yang--Mills action in the first order formalism is
		\begin{equation}
			S_{\text{YM}}:=\int_M \frac{1}{(N-2)!}e^{N-2}\mathrm{Tr}(B F_A) + \frac{1}{2N!}e^N \mathrm{Tr}(B,B),
		\end{equation}
	where $(\cdot,\cdot)$ is the canonical pairing in $\wedge^2 V$ defined in coordinates for all $C,D\in \wedge^2 V$ by $(C,D):=C^{ab}D^{cd}\eta_{ac}\eta_{bd}$ with respect to an orthonormal basis $\{u_a\}$ of $V$.
		
	We compute the variation of the action $S=S_{PC}+S_{\text{YM}}$ and find
		\begin{equation}
			\begin{split}
			\delta S&=\int_M \left[ \frac{e^{N-3}}{(N-3)!}\left( F_\omega + \mathrm{Tr}(B F_A) \right) + \frac{e^{N-1}}{(N-1)!}\left( \Lambda + \frac{1}{2}\mathrm{Tr}(B,B) \right)\right]\delta e \\
			&\qquad + \frac{1}{(N-2)!}d_\omega(e^{N-2})\delta\omega + \frac{e^{N-2}}{(N-2)!}\mathrm{Tr}\left[\left( F_A + \frac{1}{2}(e^2,B) \right)\delta B\right] \\
			& \qquad + \mathrm{Tr}\left[ d_A\left( \frac{e^{N-2}}{(N-2)!}B \right)\delta A \right] -d\left\{\frac{e^{N-2}}{(N-2)!}\left[ \delta \omega + \mathrm{Tr}\left( B \delta A \right) \right]\right\},
			\end{split}
		\end{equation}	
	where to extract $\delta B$ out of the bracket we used the following identity,\footnote{See Lemma \ref{useful identities} in Appendix \ref{a:tec_and_lproofs}.} holding for all $C\in\Omega^{(0,2)}$ and $D\in\Omega^{0,2}[1]$ (the fact that they might also have values in $\mathfrak{g}$ is here irrelevant):
		\begin{equation}\label{delta B}
		\frac{e^N}{N!}(C,D)=\frac{e^{N-2}}{2(N-2)!}(e^2,C)D.
		\end{equation}	
	First of all, we notice that the variation of the action produces a boundary term, which will be the local 1-form on the space of preboundary fields whose vertical differential will give rise to the presymplectic two-form on the boundary. It is given by
		\begin{equation}\label{boundary YM}
			\tilde{\alpha}_{\text{YM}}=\int_{\partial M} \frac{e^{N-2}}{(N-2)!} \delta \omega + \frac{e^{N-2}}{(N-2)!}\mathrm{Tr}\left( B \delta A \right).
		\end{equation}
	The equations of motion are found to be
		\begin{eqnarray}
			 &d_{\omega}e=0 \label{torsion0 YM};\\
			 &\frac{e^{N-3}}{(N-3)!}\left( F_\omega + \mathrm{Tr}(B F_A) \right) + \frac{e^{N-1}}{(N-1)!}\left( \Lambda + \frac{1}{2}\mathrm{Tr}(B,B) \right) \label{Einstein YM};\\
			 & e^{N-2}\left(F_A + \frac{1}{2}(e^2,B)\right) = 0 \label{B YM_0};\\
			 & d_A(e^{N-2}B)=0 \label{gauss YM_0}.
		\end{eqnarray}
	Equation \eqref{B YM_0} can be further simplified by noticing that $W_{N-2}^{(2,0)}$ is injective.\footnote{See Lemma 	\hyperref[lem: W_N-2 bijective]{\ref*{useful id bulk}.(\ref*{lem: W_N-2 bijective})} in Appendix \ref{a:tec_and_lproofs}.} Therefore we obtain
		\begin{equation}
			F_A + \frac{1}{2}(e^2,B)=0  \label{B YM_1},
		\end{equation}
	which in coordinates gives $B^{\mu\nu}=(-1)^Ng^{\mu\rho}g^{\nu\sigma}F_{\rho\sigma}$ (omitting the Lie algebra indices). With this definition, using Corollary \ref{corollary bulk 1} we then find 
		\begin{equation}
			\frac{e^{N-2}}{(N-2)!}BF_A + \frac{e^N}{2N!}(B,B)=-\frac{1}{2}\text{Vol}_g F_{\mu\nu}F^{\mu\nu},
		\end{equation}
	giving (up to factors) the standard Yang--Mills term in the action. 
	
	In the next section we will analyze the boundary structure.

	\subsection{Boundary Structure in \texorpdfstring{$N=4$}{dimension 4}}
	We assume the manifold $M$ to be $4$-dimensional with boundary $\Sigma:=\partial M$. Unlike the case of the scalar field, we will see that the equations of motion produce an additional constraint, hence modifying the boundary structure (but still preserving the first class condition) and the BFV description.

	The boundary term in \eqref{boundary YM} reads
	\begin{equation*}
		\tilde{\alpha}_\text{YM}=\frac{1}{2}\int_\Sigma e^2\delta\omega + \Tr(e^2 B \delta A).
	\end{equation*}

	Here $B$ and $A$ are the fields restricted to the boundary, while $e$ and $\omega$ are as in the previous section, in particular
	\begin{itemize}
		\item $B$ is an element of $\Omega^{(0,2)}_{\partial,\mathfrak{g}}=\Omega^{(0,2)}_\partial\otimes \mathfrak{g}$;
		\item $A$ is an element of $\mathcal{A}^{\text{YM}}_\partial$, locally represented by $\Omega^{(1,0)}_\partial\otimes \mathfrak{g}$.
	\end{itemize}

	The space of preboundary fields is denoted by $\tilde{F}^{\text{YM}}_\partial=\Omega^{(1,1)}_{\text{n.d.}}\times\mathcal{A}_\partial\times\mathcal{A}^{\text{YM}}_\partial\times\Omega^{(0,2)}_{\partial,\mathfrak{g}}$.
	The presymplectic form on $\tilde{F}^{\text{YM}}_\partial$ is defined as the variation of $\tilde{\alpha}_\text{YM}$
	\begin{equation}
		\tilde{\varpi}_\text{YM}:=\int_{\Sigma}e\delta e \delta \omega + \Tr(eB\delta e \delta A) + \frac{1}{2}\Tr(e^2\delta B \delta A).
	\end{equation}
	We are interested in computing the kernel of $\tilde{\varpi}_\text{YM}$ defined
	as $$\Ker{\tilde{\varpi}_\text{YM}}:=\{X\in T\tilde{F}_\partial^\text{YM}\hspace{1mm}\vert\hspace{1mm} \iota_X \tilde{\varpi}_\text{YM}=0\}.$$ Considering a generic vector field $\mathbb{X}=\mathbb{X}_e\frac{\delta}{\delta e} + \mathbb{X}_\omega\frac{\delta}{\delta \omega} + \mathbb{X}_A\frac{\delta}{\delta A} + \mathbb{X}_B \frac{\delta}{\delta B}$, we find ker$(\tilde{\varpi}_\text{YM})$ as the vector fields satisfying
	\begin{align}
		&e\mathbb{X}_e=0  \label{e ker};\\ 
		& e\mathbb{X}_\omega + e B \mathbb{X}_A=0  \label{omega ker};\\
		& e B \mathbb{X}_e + \frac{1}{2}e^2 \mathbb{X}_B = 0  \label{B ker};\\
		& e^2 \mathbb{X}_A = 0 \label{A ker}.
	\end{align}
	We now see by the previous section that \eqref{e ker} is solved by $\mathbb{X}_e=0$, while \eqref{A ker} is solved by $\mathbb{X}_A=0$ by Lemma \eqref{lem: We7}, therefore we are left with $e\mathbb{X}_\omega=0$ and $e^2 \mathbb{X}_B = 0 $.

	As usual, we define the geometric space $F_\partial^\text{YM}$ to be the symplectic reduction of $\tilde{F}^\text{YM}_\partial$, namely it is a bundle over $\Omega^{(1,1)}_{\partial,\text{n.d.}}$ with local trivialization on an open $\mathcal{U}_{\Sigma} \subset \Omega_{nd}^1(\Sigma, \mathcal{V})$  $$F_\partial^\text{YM}\simeq \mathcal{U}_{\Sigma}\times\mathcal{A}^\text{red}_\partial\times\mathcal{A}^{\text{YM}}_\partial\times\Omega^{(0,2)}_{\partial,\text{red}},$$ where $\mathcal{A}^\text{red}_\partial$  was defined in Section \ref{s:rps_gravity} and $\Omega^{(0,2)}_{\partial,\text{red}}:=\Omega^{(0,2)}_{\partial,\mathfrak{g}}/\sim$ with 
	\begin{align}
		B\sim\tilde{B} \hspace{5mm}&\Leftrightarrow \hspace{5mm}B-\tilde{B}=C \hspace{2mm}\text{with}\hspace{2mm}e^2C=0.
	\end{align}
	
	$F_\partial^\text{YM}$ is thus a symplectic manifold with symplectic form
		\begin{equation}
			\tilde{\varpi}_{\text{YM}}=\int_{\Sigma}e\delta e \delta [\omega] + \frac{1}{2}\Tr(\delta(e^2 [B])\delta A).
		\end{equation}
	\begin{remark}
	As one can easily notice, we can rewrite the part of $\varpi_{\text{YM}}$ depending on $A$ and $B$ in Darboux form, by defining $\rho:=\frac{1}{2}e^2 B$, since in this way the components of $B$ which are in the kernel of $e^2$ are automatically suppressed. Therefore we obtain the symplectic form as
		\begin{equation}
			\tilde{\varpi}_{\text{YM}}=\int_{\Sigma}e\delta e \delta [\omega] +\Tr( \delta \rho \delta A).
		\end{equation}
	We can as well consider a generic vector field $\mathbb{X}=\mathbb{X}_e \frac{\delta}{\delta e} + \mathbb{X}_\omega\frac{\delta}{\delta\omega} + \mathbb{X}_\rho\frac{\delta}{\delta\rho} + \mathbb{X}_A\frac{\delta}{\delta A}$, then it will be useful to consider $\iota_\mathbb{X}\varpi_\mathrm{YM}$
		\begin{equation}
			\iota_\mathbb{X}\varpi_\mathrm{YM}=\int_{\Sigma} e \mathbb{X}_e \delta \omega + e \delta e \mathbb{X}_\omega + \Tr(\mathbb{X}_\rho \delta A) + \Tr(\delta \rho \mathbb{X}_A).
		\end{equation}
		
	\end{remark}
	
	As we saw in the previous section, to obtain the physical space of fields on the boundary (i.e. the reduced phase space) we need to impose constraints on $F_\partial^\text{YM}$. Recall that the equations of motion split into evolution equations (containing the derivatives of the fields in the transversal direction with respect to the boundary) and in the constraints, which contain only derivatives tangential to the boundary. 
	The latter are readily obtained as the restriction of the equations of motion to the boundary. 
	
	\subsubsection{Choice of representative via constraints}	
	
	We now fix the representatives of the fields in the geometric phase space. In order to do so, we make use of the constraints, which in $N=4$ are
		\begin{eqnarray}
			&&d_\omega e  =  0  ;\\
			&&e F_\omega + \frac{\Lambda}{3!}e^{3}  + \Tr\left[eBF_A+\frac{1}{2\cdot3!}e^{3}(B,B) =0\right]  \label{einstein em 1};\\
			&&d_A(e^2B) = 0\label{gaussB_2};\\
			&&F_A + \frac{1}{2}(e^2,B)=0 \label{A eq_2}.
		\end{eqnarray}
	The choice of the representative of $[\omega]$ is performed exactly as in Section \ref{rep scalar}.

	To fix the representative of $[B]$ we use \eqref{A eq_2} in an analogous way.
	In particular, we exploit the property of the following Lemma which will be proved in Appendix \ref{a:tec_and_lproofs}.
	\begin{lemma}\label{lem:technical_for_YM}
	    If  $g^\partial$ is nondegenerate, then the map $\phi_e: \Ker{W_2^{\partial}(0,2)}\rightarrow\Omega_{\partial}^{1,0}$, $\phi_e(b)=\frac{1}{2}(e^2,B)$ is bijective.
	\end{lemma}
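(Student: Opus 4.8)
The plan is to reduce the statement to a pointwise (fiberwise) problem in multilinear algebra, and then to recognize $\phi_e$ as the musical isomorphism attached to the induced metric $g^\partial$. Since both $W_2^{\partial(0,2)}=e^2\wedge(\cdot)$ and $\phi_e=\tfrac12(e^2,\cdot)$ are $C^\infty(\Sigma)$-linear and purely algebraic in the fields (no derivatives of $b$ occur, and $e$ enters only through its values at a point), it suffices to show that at each $x\in\Sigma$ the restriction of $\phi_e$ to $\Ker{W_2^{\partial(0,2)}}\subset\wedge^2\mathcal{V}_x$ is a linear isomorphism onto $\wedge^2 T_x^*\Sigma$. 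I work in $N=4$, so $\Sigma$ is three-dimensional and $\mathcal{V}_x$ four-dimensional, and I use the frame $\{e_1,e_2,e_3,e_n\}$ of $\mathcal{V}_x$, with $e_\mu=e(\partial_\mu)$ and $e_n$ completing the basis. Nondegeneracy of $g^\partial$ is precisely the statement that the tangential subspace $W:=\mathrm{span}(e_1,e_2,e_3)=\Ima e$ is nondegenerate for $\eta$.

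First I would identify the kernel. Writing a generic $b=\sum_{i<j}b^{ij}e_i\wedge e_j$ with $i,j\in\{1,2,3,n\}$, the element $e^2\wedge b$ lives in $\wedge^2 T_x^*\Sigma\otimes\wedge^4\mathcal{V}_x$ and decomposes along the independent forms $dx^\mu\wedge dx^\nu$, its $(\mu\nu)$-component being (up to a factor of $2$) $e_\mu\wedge e_\nu\wedge b\in\wedge^4\mathcal{V}_x$. Since $\wedge^4\mathcal{V}_x$ is one-dimensional, $e_\mu\wedge e_\nu\wedge b$ selects exactly the component of $b$ along the complementary pair, while the purely tangential components $b^{\mu\nu}$ (with $\mu,\nu\le 3$) drop out, four indices being unavailable in $\{1,2,3\}$. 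Hence $e^2\wedge b=0$ forces $b^{1n}=b^{2n}=b^{3n}=0$, so $\Ker{W_2^{\partial(0,2)}}=\wedge^2 W$, of dimension $\binom{3}{2}=3$, matching the dimension of the target.

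Next I would compute $\phi_e$ on $\wedge^2 W$. Using $(u\wedge v,x\wedge y)=2\big(\eta(u,x)\eta(v,y)-\eta(u,y)\eta(v,x)\big)$, the component of $\tfrac12(e^2,b)$ along $dx^\mu\wedge dx^\nu$ equals $(e_\mu\wedge e_\nu,b)$, which for $b\in\wedge^2 W$ is a contraction of $b^{ij}$ with the tangential metric $g^\partial_{\mu i}=\eta(e_\mu,e_i)$ alone. Under the isomorphism $\wedge^2 e\colon\wedge^2 T_x\Sigma\xrightarrow{\sim}\wedge^2 W$ (an isomorphism because $e$ is injective), this is exactly the composite $\wedge^2 W\xrightarrow{\flat}(\wedge^2 W)^*\xrightarrow{(\wedge^2 e)^*}\wedge^2 T_x^*\Sigma$, where $\flat$ is induced by the symmetric bilinear form $\eta|_W=g^\partial$. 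Equivalently, $\phi_e(b)=0$ says $b$ is orthogonal to every $e_\mu\wedge e_\nu$, hence to all of $\wedge^2 W$, for the induced pairing; its Gram determinant is a nonzero power of $\det g^\partial\neq 0$, so this pairing is nondegenerate and $b=0$. Thus $\phi_e|_{\wedge^2 W}$ is injective, and since source and target both have dimension $3$ it is bijective, proving the lemma.

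The main obstacle is the bookkeeping in the second and third steps: one must check that the same mixed components $b^{in}$ that are annihilated by the kernel condition are also the ones that decouple from the pairing, so that $\phi_e$ restricted to $\Ker{W_2^{\partial(0,2)}}$ is governed solely by the tangential block $g^\partial$ and not by any $e_n$-pairings. Once this is pinned down the argument is pure multilinear algebra, with nondegeneracy of $g^\partial$ propagating to $\wedge^2 W$. I note in passing that consistency with the Yang--Mills equation of motion \eqref{A eq_2} requires the target of $\phi_e$ to be the two-forms $\Omega_\partial^{(2,0)}$, since $(e^2,B)$ carries form-degree two.
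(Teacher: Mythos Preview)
Your proof is correct and follows essentially the same approach as the paper: both identify the kernel of $W_2^{\partial(0,2)}$ as the purely tangential bivectors $b^{\mu\nu}e_\mu\wedge e_\nu$ (components $b^{\mu n}$ forced to vanish), and then observe that on this subspace $\tfrac12(e^2,b)_{\mu\nu}=g^\partial_{\mu\rho}g^\partial_{\nu\sigma}b^{\rho\sigma}$ is the $\wedge^2$-version of index lowering by $g^\partial$, hence invertible precisely when $g^\partial$ is nondegenerate. The paper carries this out in coordinates and then passes to normal geodesic coordinates to diagonalize, whereas you phrase the same computation invariantly as the composite $\wedge^2 W\xrightarrow{\flat}(\wedge^2 W)^*\xrightarrow{(\wedge^2 e)^*}\wedge^2 T_x^*\Sigma$ and invoke a dimension count plus the nondegeneracy of the induced pairing on $\wedge^2 W$; this is a cosmetic difference, not a different argument. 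Your final remark is also apt: the target of $\phi_e$ should indeed be $\Omega_\partial^{(2,0)}$, as the paper's own proof (with its two indices $\mu\nu$) confirms.
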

	
	Analogously to the case of the scalar field, this lemma provides the tools to prove that \eqref{A eq_2} fixes uniquely the representative of the equivalence class of $[B]$ in an appropriate way:
	\begin{theorem}\label{constr rep ym}
	 Let $g^\partial$ be nondegenerate. Given any $\widetilde{B}\in \Omega_{\partial}^{0,2}\otimes \mathfrak{g}$, there is a unique decomposition $\widetilde{B}= B+b$ such that $b \in \Ker{W_2^{\partial(0,2)}}\otimes \mathfrak{g}$ and
	 \begin{align}\label{e:constraintYM}
	     F_A + \frac{1}{2}(e^2,B)=0
	 \end{align}
	\end{theorem}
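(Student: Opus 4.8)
The plan is to prove the statement as the $\mathfrak{g}$-valued analogue of the scalar case treated in Theorem \ref{constr rep scal}, mimicking that argument essentially verbatim: the role previously played by the pairing map $A_e$ is now played by $\phi_e$, and the only extra ingredient is that the target of the Lie-algebra index is inert, so that all the relevant maps are understood to be extended $\mathfrak{g}$-linearly. The genuine analytic content—namely the bijectivity of $\phi_e$ on $\Ker{W_2^{\partial(0,2)}}$, which is exactly where the nondegeneracy of $g^\partial$ enters—has already been isolated in Lemma \ref{lem:technical_for_YM}; granting it, the theorem becomes a purely formal consequence. So the proof will consist of two short steps, existence and uniqueness, each reducing to an application of that bijectivity.

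For existence I would first introduce the obstruction $K := F_A + \tfrac{1}{2}(e^2,\widetilde{B})$, which lives in the codomain of $\phi_e$ (the $\mathfrak{g}$-valued two-forms on $\Sigma$). If $K=0$ then $\widetilde{B}$ already satisfies \eqref{e:constraintYM} and we take $b=0$, $B=\widetilde{B}$. Otherwise, since $\phi_e$ is bijective by Lemma \ref{lem:technical_for_YM}, there is a unique $b\in\Ker{W_2^{\partial(0,2)}}\otimes\mathfrak{g}$ with $\tfrac{1}{2}(e^2,b)=\phi_e(b)=K$. Setting $B:=\widetilde{B}-b$ and using bilinearity of $(\cdot,\cdot)$ one gets
\begin{equation*}
F_A + \tfrac{1}{2}(e^2,B) = F_A + \tfrac{1}{2}(e^2,\widetilde{B}) - \tfrac{1}{2}(e^2,b) = K - \phi_e(b) = 0,
\end{equation*}
so the constructed $B+b$ is a decomposition of the required form.

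For uniqueness I would assume two such decompositions $\widetilde{B} = B_1 + b_1 = B_2 + b_2$, with each $B_i$ satisfying \eqref{e:constraintYM} and each $b_i\in\Ker{W_2^{\partial(0,2)}}\otimes\mathfrak{g}$. Subtracting the two instances of the constraint gives $\tfrac{1}{2}(e^2,B_1-B_2)=0$, while $B_1-B_2 = b_2-b_1$ lies in $\Ker{W_2^{\partial(0,2)}}\otimes\mathfrak{g}$; hence $\phi_e(b_2-b_1)=\tfrac{1}{2}(e^2,b_2-b_1)=\tfrac{1}{2}(e^2,B_1-B_2)=0$, and injectivity of $\phi_e$ forces $b_1=b_2$ and therefore $B_1=B_2$. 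The only step requiring any care is bookkeeping the $\wedge^2\mathcal{V}$ and $\mathfrak{g}$ degrees so that $K$ indeed lands in the codomain of $\phi_e$; there is no real obstacle in the theorem itself, since all the difficulty has been pushed into Lemma \ref{lem:technical_for_YM}.
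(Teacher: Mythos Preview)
Your proof is correct and follows essentially the same approach as the paper: both define the obstruction $K=F_A+\tfrac{1}{2}(e^2,\widetilde{B})$, invoke the bijectivity of $\phi_e$ from Lemma \ref{lem:technical_for_YM} to solve $\phi_e(b)=K$, and then deduce uniqueness from injectivity of $\phi_e$. If anything, your version is slightly cleaner in its bookkeeping of signs and in explicitly noting the $\mathfrak{g}$-linear extension.
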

	\begin{proof}
	    If $\widetilde{B}$ satisfies \eqref{e:constraintYM} we can just choose $b=0$. On the contrary, suppose that $(e,\widetilde{B})+F_A=K$, then since $\phi_e$ is bijective, there exists a $b \in \Ker{W_3^{\partial(0,1)}}\otimes \mathfrak{g}$ such that $K=-\frac{1}{2}(e^2,b)$. Then $B= \widetilde{B}-b$ satisfies \eqref{e:constraintYM}. 
	    
	    Uniqueness goes exactly as in the case of the scalar field.
	\end{proof}

	\subsubsection{Poisson brackets of the constraints}
	Having defined a symplectic manifold, it is of course possible to define the induced Poisson structure. In this section we will show that also in the case of a Yang--Mills field coupled to gravity the boundary structure is such that it produces first-class constraints, namely a set of functions on the space of fields on the boundary which is algebraically closed with respect to the Poisson bracket. 
	
	As in the case of the scalar field, we use Lagrange multipliers, and we split the constraint \eqref{einstein em 1} (the projection of Einstein's equations to the boundary) into two independent ones. We are left with four constraints:
	\begin{align}
		& L_c:=\int_{\Sigma} c ed_\omega e \label{L constr YM};\\
		& M_\mu:=\int_{\Sigma}\frac{1}{2}\Tr(\mu d_A (e^2B)) \label{M constr YM}; \\
		& 	P_\xi:=\int_{\Sigma} \frac{1}{2}\iota_{\xi}e^2 F_\omega + \frac{1}{2}\iota_\xi e^2 \Tr(BF_A) + \iota_\xi(\omega-\omega_0)ed_\omega e \\
			&\qquad + \frac{1}{2}\Tr\{\iota_{\xi}(A-A_0)d_A(e^2B)\} \label{P constr YM};\\
		& H_\lambda:=\int_{\Sigma} \lambda e_n \left( eF_\omega + \frac{\Lambda}{3!}e^3 + e \Tr(B F_A) + \frac{1}{2\cdot 3!} e^3 \Tr(B,B)\right) \label{H constr YM}	.\\	 
	\end{align}
	\begin{remark}
		Notice that the constraint $M_\mu$ can be rewritten in terms of the fields in Darboux form simply as
		\begin{equation}\label{M constr em Darboux}
			M_\mu=\int_{\Sigma}Tr(\mu d_A\rho).
		\end{equation}
		Concerning $P_\xi$, we added the term $\frac{1}{2}\Tr\{\iota_{\xi}(A-A_0)d_A(e^2B)\}$ with respect to a reference connection $A_0$. Again, this addition does not change the properties of the boundary structure (we are simply adding a term that vanishes on the submanifold defined as the zero-locus of the constraints), but it largely simplifies the calculations, since it allows to find a more explicit form of the Hamiltonian vector field. We might as well rewrite $P_\xi$ in terms of $\rho$ as
		\begin{equation}\label{P constr em Darboux}
			P_\xi=\int_{\Sigma} \frac{1}{2}\iota_{\xi}e^2 F_\omega + \frac{1}{2} \Tr(\iota_\xi\rho F_A) + \iota_\xi(\omega-\omega_0)ed_\omega e + \Tr\{\iota_{\xi}(A-A_0)d_A\rho\}.
		\end{equation}
	\end{remark}

	The Lagrange multipliers are again chosen to be odd, in particular we have $\lambda\in\mathcal{C^{\infty}}[1](\Sigma)$,	$\mu\in\Gamma(\mathfrak{g})[1]$,  $\xi\in\mathfrak{X}[1](\Sigma)$ and $c\in \Omega^{(0,2)}_\partial[1]$.
	
	\begin{remark}
		The new constraint $M_\mu$ is associated with the $G$ gauge symmetry of the Yang--Mills field. In particular, we will see in the proof of Theorem \ref{thm:first-class-constraints_YM} that the Hamiltonian vector field associated to $M_\mu$ exactly generates the infinitesimal $G$ gauge transformations. Furthermore, we notice an analogy between $M$ and $L$, which is not surprising since they both encode the gauge symmetry of the fields, respectively given by a compact Lie group $G$ and by $SO(3,1)$.
	\end{remark}
.
	
	\begin{theorem} \label{thm:first-class-constraints_YM}
		The constraints $L_c$, $M_\mu$, $P_{\xi}$, $H_{\lambda}$ define a coisotropic submanifold  with respect to the symplectic structure $\varpi_{\text{YM}}$. Their Poisson brackets\footnote{We point out that one should not confuse $L$ with $\mathrm{L}$, which respectively indicate the constraint and the Lie derivative} read
		
		\begin{align*}\label{brackets-of-constraints_YM}
			& \{P_{\xi}, P_{\xi}\}  =  \frac{1}{2}P_{[\xi, \xi]}- \frac{1}{2}L_{\iota_{\xi}\iota_{\xi}F_{\omega_0}}-\frac{1}{2}M_{\iota_{\xi}\iota_{\xi}F_{A_0}} &\{H_\lambda,H_\lambda\}=0\\
			&\{M_\mu,M_\mu\}=-\frac{1}{2}M_{[\mu,\mu]}	 & \{M_\mu,L_c\}=0\\
			&\{M_\mu,H_\lambda\}= 0& \{M_\mu,P_\xi\}=M_{\mathrm{L}_\xi^{A_0}\mu}\\
			&\{L_c, P_{\xi}\}  =  L_{\mathrm{L}_{\xi}^{\omega_0}c} & \{L_c, L_c\} = - \frac{1}{2}L_{[c,c]}  	;
		\end{align*}
		\begin{align*}
			& \{L_c,  H_{\lambda}\}  = - P_{X^{(a)}} + L_{X^{(a)}(\omega - \omega_0)_a} -H_{X^{(n)}} + M_{X^{(a)}(A-A_0)_{(a)}} {}\\
			& \{P_{\xi},H_{\lambda}\}  =  P_{Y^{(a)}} -L_{ Y^{(a)} (\omega - \omega_0)_a} +  H_{ Y^{(n)}} - M_{Y^{(a)}(A-A_0)_{(a)}},
		\end{align*}
	where $X= [c, \lambda e_n ]$, $Y = \mathrm{L}_{\xi}^{\omega_0} (\lambda e_n)$ and $Z^{(a)}$, $Z^{(n)}$ are the components of $Z\in\{X,Y\}$ with respect to the frame $(e_a, e_n)$.	
	\end{theorem}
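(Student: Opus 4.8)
The plan is to follow the same three-step scheme already used for pure gravity (Theorem \ref{thm:first-class-constraints}) and the scalar coupling (Theorem \ref{thm:first-class-constraints_scalar}): first establish that the four functions are well defined on $F^{\text{YM}}_\partial$, then compute their Hamiltonian vector fields, and finally evaluate all pairwise brackets and recognize the results as combinations of the constraints. Well-definedness of $L_c$, $P_\xi$, $H_\lambda$ follows as before from the uniqueness of the representative $\omega$ of $[\omega]$ guaranteed by Theorem \ref{thm:omegadecomposition}; for the new field $B$ one invokes Theorem \ref{constr rep ym} to fix its representative via $F_A+\tfrac12(e^2,B)=0$. Throughout I would work in the Darboux variables $(\rho,A)$ with $\rho=\tfrac12 e^2 B$ and canonical pairing $\Tr(\delta\rho\,\delta A)$, which makes the Yang--Mills sector formally parallel to the $(p,\phi)$ pair of the scalar case.

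First I would read off the Hamiltonian vector fields $\mathbb{X}_f$ from $\iota_{\mathbb{X}_f}\varpi_{\text{YM}}=\delta f$. The key new computation is $M_\mu=\int_\Sigma \Tr(\mu\,d_A\rho)$: using $\delta(d_A\rho)=d_A\delta\rho+[\delta A,\rho]$ and integrating by parts gives $\mathbb{M}_A=d_A\mu$, $\mathbb{M}_\rho=[\mu,\rho]$ and $\mathbb{M}_e=\mathbb{M}_\omega=0$, so that $M_\mu$ generates precisely the infinitesimal $G$-gauge transformation. The vector field of $L_c$ is unchanged from the gravity case since $L_c$ is independent of $(\rho,A)$. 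For $P_\xi$ and $H_\lambda$ the gravitational components are as before, supplemented by Yang--Mills components obtained with the identity $(\bigstar)$ exactly as the $(\Pi,\Pi)$ terms were handled for the scalar; as in that proof I would track only $e\mathbb{X}_\omega$ and $e\mathbb{H}_\omega$, since the undetermined $\Ker{W_1^{\partial,(1,2)}}$ part drops out of every pairing.

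Next I would assemble the brackets, grouping them by difficulty. The brackets $\{L_c,L_c\}$ and $\{L_c,P_\xi\}$ are inherited verbatim from the gravity result because $L_c$ has no $(\rho,A)$ legs. The $M$-brackets are the cleanest new ones: $\{M_\mu,M_\mu\}=-\tfrac12 M_{[\mu,\mu]}$ is the standard closure of the Gauss law; $\{M_\mu,L_c\}=0$ holds because $\mathbb{M}$ has vanishing $e,\omega$ components while $\delta L_c$ only has $\delta e,\delta\omega$ legs; and $\{M_\mu,H_\lambda\}=0$ follows from the manifest $G$-invariance of the traced quantities $\Tr(BF_A)$, $\Tr(B,B)$ entering $H_\lambda$. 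The nonzero $\{M_\mu,P_\xi\}=M_{\mathrm{L}_\xi^{A_0}\mu}$ arises precisely from the term $\tfrac12\Tr\{\iota_\xi(A-A_0)d_A(e^2B)\}$ in $P_\xi$, which depends on $A$ non-invariantly, so that a gauge transformation transports the parameter by the covariant Lie derivative.

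The genuinely laborious brackets are $\{P_\xi,P_\xi\}$, $\{L_c,H_\lambda\}$ and $\{P_\xi,H_\lambda\}$, and I expect these to be the main obstacle. They must be computed by direct substitution, repeated integration by parts and the identities $(\spadesuit)$, $(\clubsuit)$, $(\heartsuit)$ together with $(\bigstar)$, and the challenge is to show that the extra Yang--Mills contributions reorganize into multiples of the \emph{four} constraints — in particular producing the new terms $M_{X^{(a)}(A-A_0)_{(a)}}$ in $\{L_c,H_\lambda\}$, $M_{Y^{(a)}(A-A_0)_{(a)}}$ in $\{P_\xi,H_\lambda\}$, and $-\tfrac12 M_{\iota_\xi\iota_\xi F_{A_0}}$ in $\{P_\xi,P_\xi\}$. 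The mechanism mirrors the scalar proof: just as $(e,\Pi)=-d\phi$ was used there to close $\{P_\xi,H_\lambda\}$, here the representative-fixing relation $F_A+\tfrac12(e^2,B)=0$ of Theorem \ref{constr rep ym} together with the cyclicity of $\Tr$ must be used to cancel every term that is not manifestly proportional to a constraint. The delicate bookkeeping is the decomposition of $X=[c,\lambda e_n]$ and $Y=\mathrm{L}_\xi^{\omega_0}(\lambda e_n)$ into frame components $(a)$ and $(n)$ and checking that they distribute correctly across $P_\xi$, $L_c$, $H_\lambda$ and the new $M_\mu$ pieces, with no anomalous non-constraint remainder surviving; this closure is exactly what establishes the first-class (coisotropic) property.
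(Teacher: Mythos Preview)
Your proposal is correct and follows essentially the same route as the paper: compute the Hamiltonian vector fields (with the Darboux pair $(\rho,A)$ and the parallel $\tfrac12 e^2\mapsto\rho$, $\omega_{(0)}\mapsto A_{(0)}$ for the $\{P_\xi,P_\xi\}$ Yang--Mills block), then close the brackets using $(\spadesuit),(\clubsuit),(\heartsuit),(\bigstar)$ and the representative-fixing relation of Theorem~\ref{constr rep ym}. The only points where the paper differs in detail are that $\{M_\mu,H_\lambda\}=0$ is obtained there by an explicit computation that invokes the structural constraint $F_A+\tfrac12(e^2,B)=0$ (your gauge-invariance argument is the conceptual reason this works), and the $\{P_\xi,H_\lambda\}$ calculation additionally uses the Bianchi identities $d_A F_A=0$, $d_A^2\alpha=[F_A,\alpha]$, which you should add to your list of tools.
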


	\begin{proof}
	We start by computing the Hamiltonian vector fields associated to the constraints. Many of the calculations will be exactly the same as in the previous section, therefore we refer to Section \ref{scal brackets} for the parts that we leave out.
	\begin{align*}
		\delta L_c=\int_{\Sigma} [c, e]  e \delta \omega  +  d_{\omega} c  e\delta e;
	\end{align*}
	\begin{align*}
		\delta P_{\xi} & = \int_{\Sigma}  - e \delta e (\mathrm{L}_{\xi}^{\omega_0} 	(\omega-\omega_0) + \iota_ {\xi}F_{\omega_0})  -  (\mathrm{L}_{\xi}^{\omega_0} e ) e \delta \omega +\Tr\left[\delta( \iota_{\xi}\rho F_A)+ \delta(\iota_\xi(A-A_0) d_A\rho) \right]\\
		& = \int_{\Sigma} (\cdots) - \Tr\left\{ \iota_\xi\delta \rho F_A 	-\iota_\xi\rho d_A\delta A -\iota_\xi(\delta A)d_A\rho  -\iota_\xi(A-A_0)[\delta A,\rho] + \iota_\xi(A-A_0)d_A\delta\rho\right\}\\
		&=\int_{\Sigma} (\cdots) - \Tr\left\{\delta\rho (\iota_\xi F_A - 	d_A\iota_\xi(A-A_0)) + (-\iota_\xi d_A\rho + d_A\iota_\xi\rho + [\iota_{\xi}(A-A_0),\rho])\delta A \right\}\\
		& = \int_{\Sigma}  - e \delta e (\mathrm{L}_{\xi}^{\omega_0} 	(\omega-\omega_0) + \iota_ {\xi}F_{\omega_0})  -  (\mathrm{L}_{\xi}^{\omega_0} e ) e \delta \omega  - \Tr\left\{ \delta \rho(\mathrm{L}_\xi^{A_0}(A-A_0) + \iota_\xi F_{A_0}) + \mathrm{L}_\xi^{A_0} \rho \delta A\right\} ;
	\end{align*}
	\begin{align*}
		\delta M_\mu&=\int_{\Sigma} \Tr\left[ \mu \delta(d_A\rho) \right]=\int_{\Sigma} \Tr\left[- \mu([\delta A,\rho] + d_A(\delta \rho))  \right]\\
		&=\int_\Sigma \Tr(\delta A[\mu,\rho] + d_{A}\mu \delta \rho );
	\end{align*}
	\begin{align*}
		\delta H_\lambda&=\int_{\Sigma}(\cdots) + \delta\Tr \left[ \lambda e_n BF_A + \frac{\lambda e_n}{2\cdot 3!}e^3 (B,B) \right]\\
		&=\int_{\Sigma}(\cdots) +\Tr\left\{ \lambda e_n\left[ BF_A + \frac{e^2}{4}(B,B) \right]\delta e + \lambda e_n e \delta B F_A -\lambda e_n e b d_A(\delta A) +\frac{\lambda e_n}{3!}e^3(B,\delta B) \right\}\\
		&\overset{\bigstar}{=}(\cdots) +\int_{\Sigma}\Tr\left\{ \lambda e_n\left[ BF_A + \frac{e^2}{4}(B,B) \right]\delta e + \lambda e_n e \delta B F_A + d_A(\lambda e_n e b )\delta A  \right\}\\
		&\qquad + \Tr\left\{ \frac{\lambda e_n}{2}e(e^2,B)\delta B + \frac{\lambda}{2}(B,e_n e)e^2 \delta B \right\}\\
		&=\int_{\Sigma}(\cdots) +\Tr\left\{ \left[\lambda e_n (BF_A + \frac{e^2}{4}(B,B)) - \lambda e B(B,e_n e) \right]\delta e  \right\}\\
		&\qquad + \Tr\left\{ d_A(\lambda e_n e B )\delta A + \lambda(B,e_n e)\delta \rho \right\},
	\end{align*}
	where we used a generalization of \eqref{delta B} to the boundary in $N=4$. Assuming $C\in\Omega^{(0,2)}_\partial$ and $D\in\Omega^{(0,2)}_\partial$, we find the following useful identity\footnote{See Lemma \ref{useful id boundary} for $N=4$ in Appendix \ref{a:tec_and_lproofs}}
		\begin{equation}\tag{$\bigstar$}
			\frac{\lambda e_n}{3!}e^3(C,D)= \frac{\lambda}{2}(C,e_n e)e^2 D + \frac{\lambda e_n}{2}e(e^2,C)D .
		\end{equation}
	
	The components of the Hamiltonian vector fields therefore are

	\begin{align*}
		&e\mathbb{H}_\omega = \lambda e_n \left( F_{\omega} + \frac{\Lambda}{2}e^2 + \frac{1}{4}e^2\Tr(B,B)
		+ \Tr(B F_A) \right) - \lambda e \Tr(B (B,e_n e))\\				&\mathbb{H}_e=d_\omega(\lambda e_n) + \lambda \sigma\\		&\mathbb{H}_\rho=d_A(\lambda e_n e B) \\
		& \mathbb{H}_A=\lambda(B,e e_n)  	
	\end{align*}
	\begin{align*}
		&\mathbb{L}_e = [c,e] &&\mathbb{L}_A=0 \\  
		&\mathbb{L}_\omega = d_{\omega} c + \mathbb{V}_L &&e^2\mathbb{L}_B=e^2[c,B ]\\
		&\mathbb{M}_e=0 && \mathbb{M}_A=d_A\mu \\		
		&\mathbb{M}_\omega=0 && \mathbb{M}_\rho=[\mu,\rho]\\
		&\mathbb{P}_e=-\mathrm{L}_\xi^{\omega_0}(e) && \mathbb{P}_\omega=-\mathrm{L}_\xi^{\omega_0}(\omega-\omega_0) -\iota_\xi(F_{\omega_0}) + \mathbb{V}_P\\
		&\mathbb{P}_\rho=-\mathrm{L}_\xi^{A_0}(\rho) && \mathbb{P}_A=-\mathrm{L}_\xi^{A_0}(A-A_0) -\iota_\xi(F_{A_0}) .
	\end{align*}
	We can now start computing the Poisson brackets of the constraints. We notice that since $L_c(\rho)=0$ and $L_c(A)=0$, the brackets $\{L_c,L_c\}$ and $\{L_c,P_\xi\}$ will be computed exactly as in Section \ref{scal brackets}.  Also $\{M_\mu,L_c\}=0$ is seen very easily without the need of any calculation.

	\begin{align*}
		\{M_\mu,M_\mu\}&=\int_{\Sigma} \Tr\left( d_A\mu[\mu,\rho] \right)=\int_{\Sigma}-\Tr\left( [\mu,d_A\mu]\rho \right)\\
		&=\frac{1}{2}\int_{\Sigma}\Tr\left( d_A[\mu,\mu]\rho \right)=-\frac{1}{2}\int_{\Sigma}\Tr\left( [\mu,\mu]d_A\rho \right)\\
		&=-\frac{1}{2}M_{[\mu,\mu]}			;
	\end{align*}

	\begin{align*}
		\{M_\mu,P_\xi\}&=\int_{\Sigma}-\Tr\left\{ [\mu,\rho]\left( \mathrm{L}_\xi^{A_0}(A-A_0) +\iota_\xi F_{A_0} \right) + \mathrm{L}_\xi^{A_0}\rho d_A\mu \right\}\\
		&=\int_{\Sigma}\Tr\left\{ \mathrm{L}_\xi^{A_0}\mu[A-A_0,\rho] +\mu[A-A_0,\mathrm{L}_\xi^{A_0}(\rho)]-\mu[\rho,\iota_{\xi}F_{A_0}] - d_A\mathrm{L}_\xi^{A_0}(\rho)\mu \right\}\\
		&=\int_{\Sigma}\Tr\left\{ \mathrm{L}_\xi^{A_0}\mu[A,\rho] -d\mu \iota_{\xi}d\rho + [\iota_{\xi}A_0,d\rho]\mu \right\}\\
		&=\int_{\Sigma} \Tr\left\{ \mathrm{L}_\xi^{A_0}(\mu)d_A\rho \right\}=M_{ \mathrm{L}_\xi^{A_0}\mu};
	\end{align*}

	\begin{align*}
		\{M_\mu,H_\lambda\}&=\Tr\int_{\Sigma} [\mu,\rho] \lambda(B,ee_n) + d_A(\lambda e_n eB)d_A\mu\\
		&=\Tr\int_{\Sigma} d(\lambda e_n e B)[A,\mu]+ [A,\lambda e_n e B]d_\mu + [A,\lambda e_n e B][A,\mu] + \frac{\lambda}{2}e^2(ee_n, B)[\mu,B]\\
		&\overset{\bigstar}{=}\Tr\int_{\Sigma}-\lambda e_n e B [dA,\mu] + \lambda e_n e B [A,d\mu - \lambda e_n e B[A,d\mu] + \frac{\lambda e_n}{2}e B [\mu,[A,A]]\\
		&\qquad \frac{\lambda e_n}{3!}e^3(B,[\mu,B])-\frac{\lambda e_n}{2}e(B,e^2)[\mu,B]\\
		&=\Tr\int_{\Sigma} \lambda e_n e B \left([\mu,F_A] + \frac{1}{2}[\mu,(e^2,B)]\right)+\frac{\lambda e_n}{2\cdot 3!}e^3[\mu,(B,B)]=0 ,
	\end{align*}
	where in the last passage we used that $\frac{(B,e^2)}{2}+F_A=0$ and that $\Tr[\mu,(B,B)]=0$
	
	The computation of the YM part of $\{P_\xi,P_\xi\}$ depending only on $\rho$ and $A$ is exactly equivalent to the computation of the free part of $\{P_\xi,P_\xi\}$ (i.e. the one depending only on $e$ and $\omega$), as one can  notice by substituting $\frac{1}{2}e^2\mapsto \rho$ and $\omega_{(0)}\mapsto A_{(0)}$, then we obtain
	
	\begin{align*}
		\{P_\xi,P_\xi\}&=\int_{\Sigma}\frac{1}{4} d_{\omega}(e e ) \iota_{[\xi,\xi]}(\omega - \omega_0)+\frac{1}{4}\iota_{[\xi,\xi]}(e e )F_{\omega}-\frac{1}{4}d_{\omega}(e e )\iota_{\xi}\iota_{\xi} F_{\omega_0}\\
		&\qquad  + \Tr\left\{ \frac{1}{2} d_{A}(\rho ) \iota_{[\xi,\xi]}(A - A_0)+\frac{1}{2}\iota_{[\xi,\xi]}(\rho )F_{A}-\frac{1}{2}d_{A}(\rho )\iota_{\xi}\iota_{\xi} F_{A_0}  \right\}\\
		&=\frac{1}{2}P_{[\xi,\xi]}-\frac{1}{2}L_{\iota_\xi \iota_\xi F_{\omega_{0}}}-\frac{1}{2}M_{\iota_{\xi}\iota_{\xi}F_{A_0}};
	\end{align*}
	\begin{align*}
		\{H_\lambda,H_\lambda\}&=\int_{\Sigma}(\cdots) - \lambda e B (B, e_n e) d_\omega(\lambda e_n) + \lambda(B,e_n e) d_A(\lambda e_n e B)\\
		&=\int_{\Sigma}(\cdots) -\lambda e B (B,e_n e)d\lambda e_n  + \lambda e B (B,e_n e)d\lambda e_n = 0 ;
	\end{align*}
	\begin{align*}
		\{P_\xi,H_\lambda\} & = \int_{\Sigma} (\cdots)+ \Tr \int_{\Sigma}  -\frac{\lambda e_n}{4} e^2 (B,B)\mathrm{L}^{\omega_0}_\xi(e) - \lambda e_n B F_A \mathrm{L}^{\omega_0}_\xi(e) + \lambda e B(B,e_n e) \mathrm{L}^{\omega_0}_\xi(e) \\
		& \qquad -\lambda (B,e_n e) \mathrm{L}^{A_0}_\xi(\rho)+ d_A(\lambda e_n e B)(-\iota_{\xi}F_A + d_A\iota_{\xi}(A-A_0))\\
		& = \int_{\Sigma} (\cdots) + \Tr \int_{\Sigma}-\frac{\lambda e_n}{2 \cdot 3!}(B,B)\mathrm{L}^{\omega_0}_\xi(e^3) - \lambda e_n B F_A\mathrm{L}^{\omega_0}_\xi(e) + \frac{\lambda}{2}B(B,e_n e) \mathrm{L}^{\omega_0}_\xi(e^2)\\
		&\qquad -\frac{\lambda}{2}B(B,e_n e) \mathrm{L}^{\omega_0}_\xi(e^2) - \frac{\lambda}{2}e^2(B,e_n e) \mathrm{L}^{\omega_0 + A_0}_\xi(B) - \lambda e_n e B d_A(-\iota_{\xi}F_A + d_A\iota_{\xi}(A-A_0))\\
		& \overset{\blacktriangle}{=} \int_{\Sigma}(\cdots) + \Tr \int_{\Sigma}  \mathrm{L}^{\omega_0}_\xi(\lambda e_n)\frac{e^3}{2 \cdot 3!}(B,B)+ \frac{\lambda e_n}{2\cdot 3!}e^3\mathrm{L}^{\omega_0 + A_0}_\xi(B, B) +\mathrm{L}^{\omega_0}_\xi (\lambda e_n)eB F_A\\
		& \qquad +\lambda e_n e \mathrm{L}^{\omega_0 + A_0}_\xi(BF_A) - \frac{\lambda}{2}(B,e_ne) e^2 \mathrm{L}^{\omega_0 + A_0}_\xi(B) - \lambda e_n e B\left\{-d_A\iota_\xi F_A+[F_A,\iota_\xi(A-A_0)]\right\} \\
		& \overset{\bigstar}{=} \int_{\Sigma}(\cdots) + \Tr \int_{\Sigma}+ \mathrm{L}^{\omega_0}_\xi(\lambda e_n)  \left(\frac{1}{2\cdot 3!} e^3(B,B) + eBF_A \right) +\frac{\lambda e_n}{3!}e^3(B,\mathrm{L}^{\omega_0 + A_0}_\xi B) \\
		&\qquad+ \lambda e_n e \mathrm{L}^{\omega_0 + A_0}_\xi(B) F_A + \lambda e_n e B \mathrm{L}^{ A_0}_\xi F_A - \frac{\lambda e_n}{3!}e^3 (B,\mathrm{L}^{\omega_0 + A_0}_\xi B) + \frac{\lambda e_n}{2}e (e^2,B) \mathrm{L}^{\omega_0 + A_0}_\xi B\\
		&\qquad - \lambda e_n e B\left\{-d_{A_0}\iota_\xi F_A+\iota_{\xi}[A-A_0,F_A]\right\} \\
		& = \int_{\Sigma}(\cdots) + \Tr \int_{\Sigma}+ \mathrm{L}^{\omega_0}_\xi(\lambda e_n)  \left(\frac{1}{2\cdot 3!} e^3(B,B) + eBF_A \right) + \lambda e_n e B \mathrm{L}^{A_0}_\xi(F_A) \\
		&\qquad - \lambda e_n e B\left( \iota_{\xi}d_{A_0}F_A -d_{A_0}\iota_\xi F_A\right)\\
		& =\int_{\Sigma}  \mathrm{L}^{\omega_0}_\xi(\lambda e_n) \left( eF_\omega + \frac{\Lambda}{3!}e^3 + \Tr\left[\frac{1}{2\cdot 3!}e^3(B,B) + eBF_A\right] \right)\\
		& = P_{\mathrm{L}^{\omega_0}_\xi(\lambda e_n)^{(a)} } + H_{\mathrm{L}^{\omega_0}_\xi(\lambda e_n)^{(n)} } -L_{\mathrm{L}^{\omega_0}_\xi(\lambda e_n)^{(a)}(\omega-\omega_0)_{(a)}}-M_{\mathrm{L}^{\omega_0}_\xi(\lambda e_n)^{(a)}(\omega-\omega_0)_{(a)}},
	\end{align*}
	where we also used the Bianchi identities
	\begin{equation}\tag{$\blacktriangle$}
		d^2_A\alpha=[F_A,\alpha]\qquad d_AF_A=0.
	\end{equation}

	\begin{align*}
		\{L_c,H_\lambda\} & = \int_{\Sigma}(\cdots) + \Tr \int_{\Sigma}\lambda e_n \left( \frac{1}{4} e^2(B,B)[c,e] + BF_A [c,e] \right) +\lambda e B(B,e_n e)[c,e] \\
		& =\int_{\Sigma} (\cdots) + \Tr \int_{\Sigma} -[c,\lambda e_n]\left( \frac{1}{2\cdot 3!} e^3 (B,B) + eBF_A \right) -\lambda e_n e [c,B] F_A\\
		&\qquad +\frac{\lambda}{2}e^2(B,e_ne)[c,B]\\
		&  \overset{\bigstar}{=}\int_{\Sigma} (\cdots)  + \Tr \int_{\Sigma} -[c,\lambda e_n]\left( \frac{1}{2\cdot 3!} e^3 (B,B) + eBF_A \right) -\lambda e_ne F_A[c,B]\\
		&\qquad + \frac{\lambda e_n}{2\cdot3!}e^3[c,(B,B)] - \frac{\lambda e_n}{2}e(e^2,B)[c,B]\\
		& = \int_{\Sigma}  -[c,\lambda e_n]\left( eF_\omega +\frac{\Lambda}{3!}e^3  + \Tr\left\{\frac{1}{2\cdot 3!} e^3 (B,B) + eBF_A\right\} \right) \\
		& = - P_{[c,\lambda e_n]^{(a)}} + L_{[c,\lambda e_n]^{(a)}(\omega - \omega_0)_a} - H_{\lambda e_n^{(n)}} + M_{[c,\lambda e_n]^{(a)}(A-A_0)_{(a)}}.
	\end{align*}

\end{proof}

\subsection{The BFV Formalism in the YMPC Theory}\label{BFV YM}
	As we did for the case of the scalar field, we replicate the discussion about the BFV formalism applied to the space of boundary fields, which is now promoted to a graded symplectic manifold by considering the Lagrange multipliers as ghost fields and adding ghost momenta. We express the BFV quantities in the following theorem starting from the quantities of gravity alone described in Theorem \ref{thm:BFVgravity}.

	\begin{theorem}\label{thm:BFVactionYM}
		Let $\mathcal{F}^{\mathrm{YM}}$ be the bundle
				\begin{equation*}
\mathcal{F}_{YM} \longrightarrow \Omega_{nd}^1(\Sigma, \mathcal{V}),
\end{equation*}
with local trivialisation on an open $\mathcal{U}_{\Sigma} \subset \Omega_{nd}^1(\Sigma, \mathcal{V})$
		\begin{equation}
			\mathcal{F}_{\mathrm{YM}}\simeq \mathcal{T}_{PC} \times\mathcal{A}^{\text{YM}}_\partial\times\Omega^{(0,2)}_{\partial,\text{red}} \oplus T^* \left(\Gamma[1](\mathfrak{g})\right),
		\end{equation}
		where  where $\mathcal{T}_{PC}$ was defined in \eqref{LoctrivF1} and the additional fields in degree zero are denoted  by  $A\in\mathcal{A}^{\text{YM}}_\partial$ and $B\in \Omega^{(0,2)}_\partial$ and they  satisfy the structural constraint $1/2(e^2,B)+F_A=0$. The additional ghost field is denoted by  $\mu\in \Gamma[1](\mathfrak{g})$ and its antifield by $\mu^\dagger \in \Gamma[-1](\wedge^3 T^*\Sigma \otimes \wedge^4 \mathcal{V}\otimes \mathfrak{g})$.
		
		We define an action functional and a symplectic form on $\mathcal{F}_{\mathrm{YM}}$ by
			\begin{align}	
				S_{\text{YM}}= S_{PC} + \int_{\Sigma} &  \Tr\{\iota_{\xi}(A-A_0)d_A\rho\}  +\Tr(\iota_\xi\rho F_A) +\lambda e_n \left(e \Tr(B 	F_A) + \frac{1}{2\cdot 3!} e^3 \Tr(B,B)\right)\nonumber\\ &+ \Tr(\mu d_A\rho) + \Tr\left\{ \frac{1}{2} [\mu,\mu]\mu^\dagger - 	\mathrm{L}_\xi^{A_0}(\mu)\mu^\dagger + \frac{1}{2}\iota_\xi\iota_\xi F_{A_0}\mu^\dagger\right\} \nonumber\\ &+ \Tr\left\{\left[ \mathrm{L}_\xi^{\omega_0}(\lambda e_n) ^{(a)}-[c,\lambda e_n]^{(a)}\right](A-A_0)_{a}\mu^\dagger \right\}\label{action_NCYM1}.\\
				\varpi_\mathrm{YM} =\varpi_{PC} + \int_{\Sigma} &\Tr(\delta \rho \delta A)  + \Tr(\delta \mu \delta \mu^\dagger) \label{symplectic_form_YM} 
			\end{align}
		Then the triple $(\mathcal{F}_\mathrm{YM}, \varpi_\mathrm{YM}, S_\mathrm{YM})$ defines a BFV structure on $\Sigma$.
	\end{theorem}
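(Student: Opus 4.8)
The only condition to verify is the classical master equation (CME) $\{S_\mathrm{YM}, S_\mathrm{YM}\} = 0$; the underlying graded manifold, the symplectic form $\varpi_\mathrm{YM}$, and the identification of $L_c, M_\mu, P_\xi, H_\lambda$ as the constraints paired with the ghosts $c, \mu, \xi, \lambda$ (and the fiber coordinates $c^\dagger, \mu^\dagger, \xi^\dagger, \lambda^\dagger$ as their momenta) are dictated by the construction of Section \ref{s:BFV_theory}. Since Theorem \ref{thm:first-class-constraints_YM} establishes that these constraints are first class, the general BFV theorem recalled in Section \ref{s:BFV_theory} guarantees that a completion $R$ solving the CME exists; the content of the theorem is that the explicit $S_\mathrm{YM}$ above is one such completion.

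I would write $S_\mathrm{YM} = S_{PC} + S_{\mathrm{add}}$, with $S_{PC}$ the pure-gravity BFV action of \eqref{action_NC1} and $S_{\mathrm{add}}$ the remaining Yang--Mills and ghost terms, and expand
\begin{equation*}
\{S_\mathrm{YM}, S_\mathrm{YM}\} = \{S_{PC}, S_{PC}\} + 2\{S_{PC}, S_{\mathrm{add}}\} + \{S_{\mathrm{add}}, S_{\mathrm{add}}\}.
\end{equation*}
Because the Poisson bivector of $\varpi_\mathrm{YM}$ is block diagonal in the gravity and Yang--Mills sectors and $S_{PC}$ depends only on the gravity fields, the first bracket equals its pure-gravity value and vanishes by Theorem \ref{thm:BFVgravity}. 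It then remains to show $2\{S_{PC}, S_{\mathrm{add}}\} + \{S_{\mathrm{add}}, S_{\mathrm{add}}\} = 0$, which I would organize by polynomial degree in the antifields. At antifield degree zero the computation reproduces the bracket relations of Theorem \ref{thm:first-class-constraints_YM} in the form $\lambda_i\lambda_j\{\psi_i,\psi_j\}$; at antifield degree one these are cancelled by the cross brackets of the cubic terms $\tfrac{1}{2}f^k_{ij}\lambda^\dagger_k\lambda_i\lambda_j$ with the linear terms $\lambda_k\psi_k$, the coefficients $f^k_{ij}$ being exactly those appearing on the right-hand sides in Theorem \ref{thm:first-class-constraints_YM}.

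Unlike the scalar case (Theorem \ref{thm:BFVaction}), where $S_{\mathrm{add}}$ carried no antifields and the check was immediate, here $S_{\mathrm{add}}$ genuinely contains the ghost-momentum terms $\Tr\{\tfrac{1}{2}[\mu,\mu]\mu^\dagger - \mathrm{L}_\xi^{A_0}(\mu)\mu^\dagger + \tfrac{1}{2}\iota_\xi\iota_\xi F_{A_0}\mu^\dagger\}$ and $\Tr\{[\mathrm{L}_\xi^{\omega_0}(\lambda e_n)^{(a)} - [c,\lambda e_n]^{(a)}](A-A_0)_a\mu^\dagger\}$. These are dictated precisely by the occurrences of $M$ on the right-hand sides of $\{P_\xi,P_\xi\}$, $\{M_\mu,M_\mu\}$, $\{M_\mu,P_\xi\}$, $\{L_c,H_\lambda\}$ and $\{P_\xi,H_\lambda\}$. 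Concretely I would (i) reuse verbatim the gravity identities for every term free of $A, B, \mu, \mu^\dagger$; (ii) verify that the Yang--Mills pieces added to $P_\xi$ and $H_\lambda$ reproduce in the brackets the $M$-valued terms of Theorem \ref{thm:first-class-constraints_YM}; and (iii) match these term by term against the cross brackets of the new $\mu^\dagger$ terms with the linear constraint $M_\mu = \int_\Sigma \Tr(\mu d_A\rho)$.

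The main obstacle will be the antifield-degree-two sector: one must confirm that the brackets \emph{among} the cubic $\mu^\dagger$ terms, and their cross brackets with the gravity antifield terms in $S_{PC}$, produce no residual obstruction, so that no term of order $\geq 2$ in the antifields beyond those written is required. This reduces to a finite collection of generalized Jacobi identities for the structure functions of Theorem \ref{thm:first-class-constraints_YM}. Since the $M$-sector is an ordinary $\mathfrak{g}$-gauge algebra twisted by the diffeomorphism and internal Lorentz ghosts, these identities follow from the Jacobi identity of $\mathfrak{g}$ together with the Yang--Mills analogue of the identity $(\clubsuit)$, namely $\mathrm{L}_\xi^{A_0}\mathrm{L}_\xi^{A_0} = \tfrac{1}{2}\mathrm{L}_{[\xi,\xi]}^{A_0} + \tfrac{1}{2}[\iota_\xi\iota_\xi F_{A_0},\,\cdot\,]$. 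As in the scalar case, I would close with the observation that the resulting structure depends on the reference connections $\omega_0, A_0$, but a change of variables as in \cite{CCS2020} removes this dependence without affecting the cohomology.
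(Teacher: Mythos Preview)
Your proposal is correct and follows essentially the same strategy as the paper: split the action, invoke Theorem \ref{thm:BFVgravity} for the pure-gravity piece, and reduce the remaining CME check to the bracket relations of Theorem \ref{thm:first-class-constraints_YM} together with Jacobi-type identities (including the Yang--Mills analogue of $(\clubsuit)$). The paper's bookkeeping is slightly different---it decomposes $S_\mathrm{YM}=S_0^0+S_0^1+S_1^0+S_1^1$ by gravity/YM and constraint/antifield sectors and separates the $f$- and $g$-brackets, then disposes of the $f$-bracket piece $\{S_1^0,S_1^1\}_f+\tfrac12\{S_1^1,S_1^1\}_f$ in the main text via $\lambda^2=0$ (rather than a Jacobi identity) and relegates the remaining term-by-term cancellations to an appendix---but the underlying argument is the same as yours.
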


	\begin{proof}
		We need to prove that $\{S_\mathrm{YM},S_\mathrm{YM}\}=0$. 
		We split the symplectic form into the classical part and the ghost part	
			\begin{eqnarray}
				&\varpi_{\text{YM},f}=\int_{\Sigma} e\delta e \delta \omega + \Tr(\delta \rho \delta A);\\
				&\varpi_{\text{YM},g}=\int_{\Sigma}\delta c \delta c^\dagger + \delta \lambda \delta \lambda^\dagger + \iota_{\delta \xi}\delta \xi^\dagger + \Tr(\delta \mu \delta \mu^\dagger).
			\end{eqnarray}
		Furthermore it is useful to employ the already known results and split $S_\mathrm{YM}=S_0^{\mathrm{YM}}+S_1^{\mathrm{YM}}$, with $S_0^{\mathrm{YM}}=S_0^0+S_0^1$ and $S_1^{\mathrm{YM}}=S_1^0+S_1^1$ defined such that
			\begin{align}
				S_0^0= \int_{\Sigma} & c e d_{\omega} e + \frac{1}{2}\iota_{\xi}e^2 F_\omega  + \iota_\xi(\omega-\omega_0)ed_\omega e + \lambda e_n \left( eF_\omega + \frac{\Lambda}{3!}e^3 \right);\\
				S_0^1=\Tr\int_{\Sigma} & \iota_\xi\rho F_A + \iota_\xi(A-A_0)d_A\rho + \lambda e_n\left( eBF_A + \frac{e^3}{2\cdot 3!}e^3(B,B) \right) + \mu d_A\rho;\\
				S_1^1=\int_{\Sigma} & \frac{1}{2} [c,c] c^{\dag} - \mathrm{L}_{\xi}^{\omega_0} c c^{\dag} +\frac{1}{2}\iota_{\xi}\iota_{\xi}F_{\omega_0}c^{\dag}  + [c, \lambda e_n ]^{(a)}(\xi_a^{\dag}- (\omega - \omega_0)_a c^\dag) \nonumber\\
				&+ [c, \lambda e_n ]^{(n)}\lambda^\dag  - \mathrm{L}_{\xi}^{\omega_0} (\lambda e_n)^{(a)}(\xi_a^{\dag}- (\omega - \omega_0)_a c^\dag) - \mathrm{L}_{\xi}^{\omega_0} (\lambda e_n)^{(n)}\lambda^\dag \nonumber \\
				& - \frac{1}{2}\iota_{[\xi,\xi]}\xi^{\dag};\\
				S_1^1=\Tr\int_{\Sigma} & \frac{1}{2} [\mu,\mu]\mu^\dagger - \mathrm{L}_\xi^{A_0}(\mu)\mu^\dagger + \frac{1}{2}\iota_\xi\iota_\xi F_{A_0}\mu^\dagger +\mathrm{L}_\xi^{\omega_0}(\lambda e_n) ^{(a)} (A-A_0)_{a}\mu^\dagger \nonumber \\
				& -[c,\lambda e_n]^{(a)}(A-A_0)_{a}\mu^\dagger.
			\end{align}
		The cohomological vector field $Q$ splits into $Q=Q_0^0+Q_0^1+Q_1^0+Q_1^1$, such that $\iota_{Q^i_j}\varpi_{\text{YM}}=\delta S^i_j$.	
		
		The classical master equation reads 
		\begin{align*}
			\{S,S\}=\{S_0,S_0\}_f+2\{S_0,S_1\}_f+2\{S_0,S_1\}_g+\{S_1,S_1\}_f+\{S_1,S_1\}_g.
		\end{align*}
		Of course we have $\{S_0,S_0\}_f+2\{S_0,S_1\}_g=0$ by ``definition'' and $\{S_0,S_0\}_g=0$ since $S_0$ has no antighost part. Again we should prove separately that $2\{S_0,S_1\}_f+\{S_1,S_1\}_g=0$ and $\{S_1,S_1\}_f=0$. This means 
			\begin{align}
				&\{S_0^1,S_1^0\}_f + \{S_0^0,S_1^1\}_f + \{S_0^1,S_1^1\}_f + \{S_1^0,S_1^1\}_g + \frac{1}{2}\{S_1^1,S_1^1\}_g=0 \label{eq 1};\\
				&\{S_1^0,S_1^1\}_f + \frac{1}{2}\{S_1^1,S_1^1\}_f=0 \label{eq 2}.
			\end{align}
		
		We compute them explicitly. In order to do so, we first need to find $Q_1^1$
			\begin{align*}
				\delta S_1^1=&\Tr\int_{\Sigma} \iota_{\delta\xi}\iota_\xi F_{A_0}\mu^\dag + \frac{1}{2}\iota_\xi\iota_\xi F_{A_0}\delta \mu\dag - \delta \mu [\mu,\mu^\dag] - \iota_{\delta\xi}d_{A_0}\mu \mu^\dag - \delta\mu \mathrm{L}_\xi^{A_0}(\mu^\dag)\\
				&\qquad -\mathrm{L}_\xi^{A_0}(\mu)\delta \mu^\dag + \{ (\iota_{\delta \xi}d_{\omega_0}(\lambda e_n))^{(a)} - \mathrm{L}_\xi^{\omega_0}(\delta \lambda e_n)^{(a)} - \mathrm{L}_\xi^{\omega_0}(\lambda e_n)^{(b)}\delta e_b^{(a)}\\
				&\qquad -[\delta c, \lambda e_n]^{(a)}+[c,\delta \lambda e_n]^{(a)} + [c,\lambda e_n]^{(b)}\delta e_b^{(a)}  \}(A-A_0)_a \mu^\dag +\\
				&\qquad (\mathrm{L}_\xi^{\omega_0}(\lambda e_n)^{(a)}-[c,\lambda e_n]^{(a)})\delta A_a \mu^\dag + (\mathrm{L}_\xi^{\omega_0}(\lambda e_n)^{(a)}-[c,\lambda e_n]^{(a)})(A-A_0)_a \delta \mu^\dag.
			\end{align*}
		From this variation we find that $Q_{1A}^1, Q_{1e}^1, Q_{1\lambda}^1, Q_{1c}^1, Q_{1\xi}^1$ vanish. In particular, we are also able to explicitly compute $Q_{1\mu}^1$ and $Q_{1\mu^\dag}^1$ 
			\begin{align*}
				&Q_{1\mu}^1=\frac{1}{2}\iota_\xi\iota_\xi F_{A_0}\mu^\dag + \frac{1}{2}[\mu,\mu] -\mathrm{L}_\xi^{A_0}(\mu) + (\mathrm{L}_\xi^{\omega_0}(\lambda e_n)^{(a)}-[c,\lambda e_n]^{(a)})(A-A_0)_a\\
				&Q_{1\mu^\dag}^1=-[\mu,\mu^\dag] - \mathrm{L}_\xi^{A_0}(\mu^\dag).
			\end{align*}
		
		The components of $Q_0^0$ and $Q_0^1$ are recovered from the Hamiltonian vector fields in the previous sections, while $Q_1^0$ is the same as in \cite{CCS2020}.
		
		We now prove \eqref{eq 2} and we leave the other identity for the appendix. First, we notice that $\{S_1^1,S_1^1\}_f=0$ because $Q_{1A}^1=0$ and $Q_{1e}^1=0$. Furthermore
		
			\begin{align*}
				\{S_1^0,S_1^1\}_f &=	\iota_{Q_1^0}\iota_{Q_1^1}\int_{\Sigma} e\delta e \delta \omega + \Tr(\delta \rho \delta A)\\
				&=\iota_{Q_1^0}\int_{\Sigma} ([c,\lambda e_n]^{(b)}-\mathrm{L}_\xi^{\omega_0}(\lambda e_n)^{(b)})\delta_b^{(a)}(A-A_0)_a \mu^\dag\\
				&=\int_{\Sigma} \underbrace{([c,\lambda e_n]^{(b)}-\mathrm{L}_\xi^{\omega_0}(\lambda e_n)^{(b)})}_{\propto \lambda} \underbrace{(Q_{1e}^0)^{(a)}_b}_{\propto \lambda} (A-A_0)_a\mu^\dag \propto \int_{\Sigma} \lambda^2=0.
			\end{align*}	
	\end{proof}

\begin{remark}
    As in the case of the scalar field the BFV structure of Theorem \ref{thm:BFVactionYM} depends on reference connections $\omega_0$ and $A_0$. In this case the change of variables that brings to a BFV theory not depending on them, is slightly different, having to account also for $A_0$:
    \begin{align*}
        c' = c + \iota_\xi (\omega-\omega_0) \qquad \xi^{'\dag}_a = \xi^{\dag}_a - (\omega - \omega_0)_a c^\dag - \operatorname{Tr}\left[(A - A_0)_a \mu^\dag\right] \qquad \mu'= \mu + \iota_\xi (A-A_0).
    \end{align*}
\end{remark}

\section{Spinor field coupled to gravity}\label{s:spinor}

We now want to describe the interaction of gravity with fermionic spin $1/2$ matter, i.e.\ with those particles that obey the Fermi--Dirac statistics: \text{fermions}. The standard discussion about fermions in Quantum Field Theory is developed on a flat $4$-dimensional space--time with a Minkowskian signature by means of an algebraic construction involving Clifford algebras (see appendix \ref{clif spin} for the definitions and properties). In Minkowski space--time, fermions are described by spinors, which are sections of a vector bundle with fibers carrying a linear representation of the Clifford algebra and therefore with an induced action of the universal covering of the group of rotations. In particular, a rotation of $2\pi$ will not act as the identity, but a rotation of $4\pi$ will. This property is expressed mathematically by asking that they transform under the spin 1/2 representation of the double cover of the Lorentz group: the spin group.

The traditional approach in the construction of spinor fields on a curved background involves the definition of \text{spin structures}. A spin structure is defined as a principal bundle morphisms $\Lambda\colon\overline{\Sigma}\rightarrow \mathrm{SO}(M,g)$, where $\overline{\Sigma}$ is a principal fiber bundle having Spin$(N-1,1)$ as its structure group and $\mathrm{SO}(M,g)$ is the space of orthonormal frames on the $N$-dimensional pseudoriemannian manifold $M$ with respect to a metric $g$ of signature $(N-1,1)$.

$\overline{\Sigma}$ is usually called \text{spin bundle} and such a structure exists  only if $M$ meets certain topological requirements (see \cite{spingeom}). If this is the case, one takes a (complex) $N$-dimensional vector space $V$ and defines the bundle $E_\lambda:=\overline{\Sigma}\times_\lambda V$ associated to the spin bundle via a representation $\lambda$ (with half-integer spin). $E_\lambda$ is called a  \text{spinor bundle} and spinor fields are defined to be sections of it.

Spin structures are useful since they allow to overcome technical difficulties in the definition of the Dirac equation on the manifold $M$, in the sense that they resolve possible glueing issues, and are in general used when describing spinors on a fixed (curved) background.

However, the dependence of the spin structure on a reference metric does not allow for a coherent description in which gravity interacts with the matter field as a dynamical field. Furthermore, once a metric has been fixed, there might be more inequivalent choices of spin bundles on $M$.

Therefore we need to move our attention to a more general construction allowing to consider the metric as a a dynamical field while preserving the possibility of introducing spinors. This is done in terms of \text{spin frames}.

This section is largely based on \cite{fatibene2018}, \cite{FatiNoris22}, \cite{Fatibene1996}, \cite{Romero2021} and \cite{spingeom}.

\subsection{Spin frames and spinor fields}

As usual, before moving to the $4$-dimensional case, we will be looking at the general construction on an $N$-dimensional pseudoriemannian manifold $M$. 
As we explain in Appendix \ref{clif spin}, there exists a group homomorphism $l\colon\mathrm{Spin}(N-1,1)\rightarrow \mathrm{SO}(N-1,1)$ which is a double covering. The spin group is defined within a Clifford algebra $\mathcal{C}(N-1,1)$ whose basis is given in terms of \text{gamma matrices} (in the gamma representation) that satisfy
\begin{equation}\label{e:anticommrelation_gamma}
	\{\gamma_a,\gamma_b\}=-2\eta_{ab}\mathbb{1}.
\end{equation}
We can also define $\gamma_a^\dagger$ (the adjoint gamma matrix) by
\begin{equation}
	\gamma_0\gamma_a^\dagger \gamma_0=\gamma_a.
\end{equation}
The covering map $l \colon \text{Spin}(N-1,1)\rightarrow \text{SO}(N-1,1)$, $S\mapsto l(S)$ is defined via
\begin{equation}
	S\gamma_aS^{-1}=\gamma_bl^b_a(S).
\end{equation}
Now let us consider a principal fiber bundle $\hat{P}$ whose structure group is $\mathrm{Spin}(N-1,1)$. We find that $\hat{P}$ is a double covering of a principal orthogonal bundle $P$ such that the following diagram commutes
\begin{equation}
	\begin{tikzcd}
		\hat{P} \arrow[d, "\hat{p}"] \arrow[r, "\hat{l}"] & P \arrow[d, "p"] \\
		M  \arrow[r,"\mathrm{id}"]           & M                      
	\end{tikzcd}
\end{equation}
where $\hat{l}\colon \hat{P}\rightarrow P:[x,S]\mapsto[x,l(S)]$ (one can prove that it is global and independent of the trivialization). 

In analogy with the vielbein map, we define a \text{spin frame} to be an equivariant principal morphism $\hat{e}\colon\hat{P}\rightarrow LM$, namely such that the following diagram commutes

\begin{align}
\tag{equivariance}
	&R_{i\circ l (S)}\circ \hat{e}=\hat{e}\circ R_S	\quad &&\begin{tikzcd}[ampersand replacement=\&]
		\hat{P} \arrow[d, "R_S"] \arrow[r, "\hat{e}"] \& LM \arrow[d, "R_{i\circ l (S)}"] \\
		\hat{P} \arrow[r, "\hat{e}"]                  \& LM                      
	\end{tikzcd}
\end{align}
where $LM$ is the frame bundle (a principal-$GL(N, \mathbb{R})$ bundle).

As in the case of the vielbein, thanks to equivariance, we can uniquely determine a \text{spin frame} $\hat{e}$ once we know it on a local section.

As usual, a family of local sections $\hat{\sigma}_{(\alpha)}\colon U_{(\alpha)}\rightarrow \hat{P}$ induces a local trivialization on $P$. For any spin frame $\hat{e}$, this defines a local moving frame $\hat{e}(\hat{\sigma}_{(\alpha)})=(x,e_a^{(\alpha)})$, where $e_a^{(\alpha)}=(e^{(\alpha)})^\mu_a\partial_{\mu}$. On the overlap of two local trivializations the moving frames change by an orthogonal transformation defined by 
\begin{equation}
	e_{(\beta)}=e_{(\alpha)}\cdot S^{(\alpha\beta)} \hspace{2mm} \Rightarrow \hspace{2mm} e^{(\beta)}_a=e^{(\alpha)}_bl^{b}_a(S^{(\alpha\beta)}).
\end{equation}

Also in this case we can define \text{spin coframes} as duals of spin frames, then for each frame we obtain a unique metric $g$ induced by	
\begin{equation}
	g_{\mu\nu}=e^a_\mu \eta_{ab} e^{b}_\nu.
\end{equation}
\begin{remark}
	The image $\hat{e}(\hat{P})\subset LM$ coincides with the orthonormal frames defined by means of the induced metric $g$, namely $\hat{e}(\hat{P})=\mathrm{SO}(M,g)$. 
\end{remark}

The trivialization on $\hat{P}$ induces a trivialization on $P$  by post-composition with $\hat{l}\colon \hat{P}\rightarrow P$. For each family of local sections $\hat{\sigma}_{(\alpha)}$ we obtain $\sigma_{(\alpha)}:=\hat{l}\circ\hat{\sigma}_{(\alpha)}\colon U_{(\alpha)}\rightarrow P$, which is equivalent to having the following diagram commute
\begin{equation}\label{spin fram factorization}
	\begin{tikzcd}
		\hat{P} \arrow[rdd, "\hat{p}", bend right] \arrow[rr, "\hat{e}"] \arrow[rd, "\hat{l}"] &                                  & LM \arrow[ldd, "\pi"', bend left] \\
		& P \arrow[d, "p"] \arrow[ru, "e"] &                                   \\
		& M                                &                                  
	\end{tikzcd}
\end{equation}

\begin{remark}
    Notice that we do not need a metric to define spin frames, indeed one is induced by spin coframes. However, when dealing with spin geometry, one usually considers \text{spin structures}, which are defined in terms of a (pseudo--Riemannian) metric $g$ on $M$. In particular, a spin structure is an equivariant morphism $\Lambda \colon \hat{P}\rightarrow SO(M,g)$, where $\hat{P}$ is a spin bundle. An important result (chap. 2 \cite{spingeom}) states that a spin structure exists if and only if the second Stiefel--Whitney class of $M$ vanishes. 
    Then the following question arises naturally: when do spin frames exist? An answer is given by the following result \cite{FatiNoris22}:
        A spin frame $\hat{e}$ on $M$ exists if and only if there exists a spin structure $\Lambda\colon \hat{P}\rightarrow SO(M,g)$ for a suitable metric $g$ on $M$.
\end{remark}

Now we can as usual define the \text{Minkowski bundle} $\hat{\mathcal{V}}:=\hat{P}\times_{\hat{\rho}}V$, where $V$ is an $N$-dimensional (real) vector space and $\hat{\rho}:=\rho\circ l$ is the vector (i.e.\ spin 1) representation of Spin$(N-1,1)$ on $V$ corresponding to the fundamental representation of SO$(N-1,1)$. 

A \text{spin coframe} can then be seen as an isomorphism $TM\rightarrow \hat{\mathcal{V}}$ which produces the same dynamics of the vielbein. Indeed diagram \eqref{spin fram factorization} exactly tells us that the dynamics of the spin frame factorizes through the dynamics of the vielbein. This is also true for any matter field coupled to spin frames transforming under a \text{tensor representation} $\hat{\lambda}$ (i.e. with integer spin) of the spin group, since in this case we also have the factorization 
\begin{equation}
	\hat{\lambda}(S)=\lambda(l(S)),
\end{equation}
where $\lambda$ is the corresponding representation of SO$(N-1,1)$.

This is not the case for spinors and that is precisely why we needed to introduce spin frames (indeed spinors are defined to be those matter fields which couple to spin frames ``non-tensorially'')

\begin{definition}[Spinor bundle and spinor fields]
	Let $W$ be an $N$-dimensional complex vector space and let $\lambda\colon\mathrm{Spin}(N-1,1)\times W\rightarrow W$ be a non-tensorial representation of the spin group on $W$. 
	The \text{spinor bundle} $E_\lambda$ is defined to be the associated bundle to $\hat{P}$ 
	\begin{equation}
		E_\lambda:=\hat{P}\times_{\lambda} W.
	\end{equation}	
	
	When considering $2m$--dimensional manifolds, thanks to the gamma representation, we can define the \text{bundle of Dirac spinors} as 
	    \begin{equation}
	        S:=\hat{P}\times_{\gamma} \mathbb{C^{2^m}}.
	    \end{equation}
    Sections of $S$ are called \text{Dirac spinors}, indicated as $\psi\in S(M):=\Gamma(M,S)$.

    \begin{remark}
        In our case, given that Dirac spinors obey the so-called Fermi-Dirac statistics, it is appropriate to consider $\psi \in \Gamma(M , \Pi S)$, where $\Pi $ indicates the parity-reversal operation. In other words, the components of Dirac spinors are now defined to take values in Grassmann numbers, hence the parity of $\psi$ is set to be 1.
    \end{remark}
\end{definition}

\subsection{Coupling of the spinor field and the Dirac Lagrangian}

In the coupling of the spinor field, we start by considering an orthogonal principal connection $\omega$ on the principal bundle $P$ with structure group SO$(N-1,1)$. Since the map $\hat{l}\colon \hat{P}\rightarrow P$ is a local diffeomorphism, we can pull back connections from $P$ onto $\hat{P}$. Locally we have
\begin{equation}
	\omega=\omega^{ab}_\mu v_a\wedge v_b dx^\mu.
\end{equation}
When we pull it back to $\hat{P}$, in the gamma representation, we obtain
\begin{equation}
	\hat{\omega}=-\frac{1}{4}\omega^{ab}_\mu \gamma_a \gamma _b dx^\mu.
\end{equation}
Indeed this is a $\mathfrak{spin}(N-1,1)$-valued 1-form, since $\mathfrak{spin}(N-1,1)=\mathfrak{so}(N-1,1)$ and since $-\frac{1}{4}[\gamma_a,\gamma_b]$ provides a basis for it.

At this point it is easy to define the covariant derivative of a Dirac spinor field $\psi$:
\begin{equation}\label{e:cov_der_spinor}
	d_\omega \psi:=d\psi + [\omega,\psi]=d\psi - \frac{1}{4}\omega^{ab}\gamma_a\gamma_b \psi.
\end{equation}
We briefly check that it transforms well under a gauge transformation $\psi\mapsto\psi'=S(x)\psi$, where for each $x$, $S(x)\in\mathrm{Spin}(N-1,1)$

\begin{align*}
	d_{\omega'}\psi'&=d_{\omega'}(S\psi)=(d_{\omega'}S)\psi + Sd_{\omega'}\psi\\
	&=(dS)\psi + [\omega',S]\psi + Sd\psi + S[\omega',\psi]\\
	&=(dS)\psi + \omega'(S\psi)- S \omega'(\psi) + Sd\psi  +S\omega'(\psi)\\
	&=(dS)\psi + S\omega(\psi) -(dS\psi) +Sd\psi=S\{ d\psi +\omega(\psi)\}\\
	&=S\{d\psi + [\omega,\psi]\}=Sd_\omega\psi=(d_\omega \psi)',
\end{align*}
where we used $\omega'=S\omega S^{-1} - (dS)S^{-1}$.

We now need to construct a invariant Lagrangian. In order to do so, we proceed in the standard way and introduce the \text{hermitian conjugate} $\overline{\psi}$ of the field $\psi$.\footnote{In this case we consider the parity of $\psi$ to be 1, meaning that it anticommutes with other odd quantities.} To define it properly, we consider the hermitian conjugate $\overline{W}$ of the complex vector space $W$. Of course the representation $\lambda$ of Spin$(N-1,1)$ on $W$ will induce a representation $\overline{\lambda}$ on $\overline{W}$. We can then define the \text{adjoint spinor bundle} to be $\overline{E}_\lambda:=\hat{P}\times_{\overline{\lambda}}\overline{W}$.
Hence we take $\overline{\psi}\in\Gamma(\overline{E}_\lambda)=:\overline{S}(M)$.

The relation between $\psi$ and its hermitian conjugate in our setting reads $\overline{\psi}:=\psi^{\dagger}\gamma_0$. As we will see later, this relation gives the right equations of motion.

We denote the canonical (hermitian) pairing between sections of $E_\lambda$ and $\overline{E}_\lambda$ by
\begin{equation}
	\overline{\psi}\psi:=<\overline{\psi},\psi>=\overline{\psi}_A \psi^A,
\end{equation}	
where $A=1,\cdots,N$ are the spinor indices.

We define the covariant derivative of the hermitian conjugate of $\psi$ such that $\overline{d_\omega \psi}=d_\omega \overline{\psi}$, hence obtaining
\begin{equation}
	d_\omega \overline{\psi}=d\overline{\psi}+ [\omega,\overline{\psi}]= d\overline{\psi} - \frac{1}{4}\omega^{ab}\overline{\psi}\gamma_a \gamma_b.
\end{equation}

The definition of covariant derivative extends also to the gamma matrices and we get the following result
\begin{lemma}
    Let $\gamma:=\gamma^a v_a\in V\otimes \mathcal{C}(N-1,1)$ be an element of the vector space $V$ with values in the Clifford algebra (seen as endomorphisms of the spin bundle $E_\lambda$). Then
    $$d_\omega \gamma=0.$$
\end{lemma}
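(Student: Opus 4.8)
The object $\gamma=\gamma^a v_a$ is best understood as the canonical intertwiner between the vector representation of $\mathrm{Spin}(N-1,1)$ on $V$ (through the covering $l$) and the action by conjugation on the Clifford algebra, namely the fixed map $V\to\mathcal{C}(N-1,1)$ sending $v_a\mapsto\gamma_a$. Accordingly I would regard $\gamma$ as a section of $V\otimes\mathcal{C}(N-1,1)$ and let $d_\omega$ act on both tensor slots by the Leibniz rule: on the $V$-index by the vector $\mathfrak{so}(N-1,1)$-action induced by $\omega$, and on the Clifford index (as endomorphisms of $E_\lambda$) by the adjoint, i.e.\ commutator, action of the lifted connection $\hat{\omega}=-\tfrac14\omega^{ab}\gamma_a\gamma_b$. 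The plan is therefore to write
\[
d_\omega\gamma=\bigl(d\gamma^a+[\hat{\omega},\gamma^a]\bigr)v_a+\gamma^a\,d_\omega v_a ,
\]
to observe that the $\gamma^a$ are constant matrices and the $v_a$ a fixed orthonormal frame, so that in a local trivialisation $d\gamma^a=0$ and $d v_a=0$, and thereby to reduce the claim to a purely algebraic cancellation between the commutator term $[\hat{\omega},\gamma^a]$ and the frame term $\gamma^a\,d_\omega v_a$.

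The conceptual reason this cancellation must hold is that $\gamma$, being an \emph{equivariant} (indeed invariant) contraction, is automatically parallel for the induced connections; the identity to be checked is nothing but the infinitesimal form of the covering relation $S\gamma_aS^{-1}=\gamma_b\,l^b_a(S)$, differentiated at $S=\mathrm{id}$. Concretely I would first establish the Clifford-algebra identity
\[
[\gamma_c\gamma_d,\gamma_a]=2\bigl(\eta_{ca}\gamma_d-\eta_{da}\gamma_c\bigr),
\]
which follows directly from \eqref{e:anticommrelation_gamma} by commuting $\gamma_a$ through the product $\gamma_c\gamma_d$ twice. Contracting with $-\tfrac14\omega^{cd}$ and using the antisymmetry $\omega^{cd}=-\omega^{dc}$ then gives
\[
[\hat{\omega},\gamma_a]=-\omega_a{}^{\,b}\gamma_b ,
\]
so that $\hat{\omega}$ acts on the $\gamma_a$ through the same $\mathfrak{so}(N-1,1)$-connection that $\omega$ acts with on the frame $v_a$.

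Finally I would match this against the vector-index term, where the $\mathfrak{so}(N-1,1)$-action on the orthonormal frame yields $d_\omega v_a=-\omega_a{}^{\,b}v_b$. Substituting both expressions into the displayed formula for $d_\omega\gamma$ and relabelling the contracted indices, the two contributions combine into a single term proportional to $\omega^c{}_a+\omega_a{}^{\,c}$, which vanishes by the antisymmetry $\omega^{ab}=-\omega^{ba}$; this is exactly the mechanism underlying the invariance of the contraction $\gamma^a v_a$. I expect the only delicate point to be the index bookkeeping—raising and lowering with $\eta$ and tracking the antisymmetry of $\omega^{ab}$—so that the sign produced by the commutator with $\hat{\omega}$ lines up with that of the vector action; once both are written in the same basis the cancellation is manifest, and no analytic input beyond the Clifford relation \eqref{e:anticommrelation_gamma} is required.
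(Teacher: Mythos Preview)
Your proposal is correct and follows essentially the same route as the paper: write the covariant derivative of $\gamma$ as the sum of the vector-index action of $\omega$ and the commutator action of the lifted spin connection $\hat\omega=-\tfrac14\omega^{ab}\gamma_a\gamma_b$, then use the Clifford relation \eqref{e:anticommrelation_gamma} to show these two contributions cancel, and conclude using $d\gamma^a=0$. Your presentation is somewhat more explicit (you spell out $[\gamma_c\gamma_d,\gamma_a]=2(\eta_{ca}\gamma_d-\eta_{da}\gamma_c)$ and frame the cancellation as the infinitesimal form of the covering identity $S\gamma_a S^{-1}=\gamma_b\,l^b_a(S)$), whereas the paper simply writes the component formula $(d_\omega\gamma)^b=(d\gamma)^b+\omega^{bc}\gamma_c-\tfrac14\omega^{ac}(\gamma_a\gamma_c\gamma^b-\gamma^b\gamma_a\gamma_c)$ and asserts the vanishing; but the content is the same.
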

\begin{proof}
    $\gamma$ is a section of $\mathcal{V}$ and an endomorphism of the spin bundle $E_{\lambda}$. Hence its covariant derivative reads $$(d_{\omega}\gamma)^b=(d\gamma)^b +  \omega^{bc}\gamma_c - \frac{1}{4}\omega^{ac}(\gamma_a\gamma_c \gamma^b - \gamma^b \gamma_a \gamma_c) .$$
    Note that this formula implies the correct Leibniz rule for $d_{\omega}(\gamma\psi)$.
    Using the anti-commutation relation \eqref{e:anticommrelation_gamma} we can show that $\omega^{bc}\gamma_c - \frac{1}{4}\omega^{ac}\eta^{bd} (\gamma_a\gamma_c \gamma_d - \gamma_d \gamma_a \gamma_c) = 0$ and conclude the proof by choosing $\gamma$ constant. 
\end{proof}

At this point the action functional containing the spinor field is written as
\begin{equation}
	S_{\text{Dirac}}:=\int_{M} i\frac{e^{(N-1)}}{2(N-1)!}\left[ \overline{\psi} \gamma d_\omega \psi - d_\omega \overline{ \psi}\gamma \psi \right],
\end{equation}
Alternatively, we can write the action as
\begin{equation}
	S_{\text{Dirac}}=\int_{M}\frac{e^N}{2N!}\left\{ i\overline{\psi}\gamma^a\nabla_a\psi - i \nabla_a\overline{\psi}\gamma^a \psi  \right\},
\end{equation}
with $\nabla_a\psi:=e^\mu_a \left(\partial_\mu \psi -\frac{1}{4}\omega^{ab}_\mu \gamma_a\gamma_b\psi\right)$.

	\subsubsection{Equations of motion}
	
	We now consider the full action $S:=S_{\text{PC}}+S_{\text{Dirac}}$ and take its variation:
		\begin{equation}
			\begin{split}
				\delta S=& \delta_\omega S + \int_M \left[ 	\frac{e^{N-3}}{(N-3)!}F_\omega + i\frac{e^{N-2}}{2(N-2)!}\left( \overline{\psi}\gamma d_\omega\psi - d_\omega \overline{\psi}\gamma \psi\right)\right]\delta e\\
				&\qquad\qquad + i \delta \overline{\psi}\left[ 	\frac{e^{N-1}}{(N-1)!}\gamma d_\omega \psi - d_\omega\left( \frac{e^{N-1}}{2(N-1)!} \right)\gamma \psi  \right]\\
				&\qquad \qquad +i\left[\frac{e^{N-1}}{(N-1)!}d_\omega 	\overline{\psi}\gamma +  d_\omega\left( \frac{e^{N-1}}{2(N-1)!}\right)\overline{\psi}\gamma   \right]\delta \psi,\\
			\end{split}
		\end{equation}
	where we used that $d_\omega \gamma=0$.

	To compute $\delta_\omega S$, first we define the internal contraction on $\mathcal{V}$. In particular, for any $X\in V$ and for all $\alpha\in \wedge^k V$, we define for all $\alpha=\frac{1}{k!}\alpha^{i_1\cdots i_k}v_{i_1}\wedge\cdots\wedge v_{i_k}$
		\begin{equation}
			j_X \alpha:=\frac{\eta_{ab}}{(k-1)!} X^a \alpha^{b {i_2}\cdots {i_k}} 	v_{i_2}\wedge \cdots \wedge v_{i_k}.
		\end{equation}

    With this definition, we obtain 
        \begin{equation}
            [\alpha,\psi]=\frac{1}{4}j_\gamma j_\gamma \alpha \psi, \qquad \text{and}  \qquad [\alpha,\overline{\psi}]=-\frac{(-1)^{|\alpha||\psi|}}{4}\overline{\psi}j_{\gamma} j_{\gamma} \alpha.
        \end{equation}

	We know $\delta_\omega S=\delta_\omega S_{PC} + \delta_\omega S_{\text{Dirac}}$, with
		\begin{equation}
			\delta_\omega S_{PC}=\int_M \frac{e^{N-3}}{(N-3)!}d_\omega e 	\delta \omega + \int_{\partial M} \frac{e^{N-2}}{(N-2)!}\delta \omega,
		\end{equation}
	while
		\begin{equation}
			\begin{split}
				\delta_\omega S_{\text{Dirac}}&= \int_M \frac{i e^{N-1} }{ 2\cdot (N-1)!} \left[ \bar{\psi}\gamma [\delta \omega. \psi] - [\delta \omega. \bar{\psi}] \gamma \psi \right] \\
                &= \int_M \frac{i e^{N-1} }{ 8\cdot (N-1)!} \bar{\psi} \left[ \gamma j_\gamma j_\gamma \delta \omega + j_\gamma j_\gamma \delta \omega \gamma \right] \psi\\
                &=\int_{M} 	\frac{i }{ 8\cdot (N-1)!} \bar{\psi} \left[ \gamma j_\gamma j_\gamma e^{N-1}  + j_\gamma j_\gamma e^{N-1} \gamma \right] \psi \delta \omega ,
			\end{split}
		\end{equation}
	where in the last step we used the following result
        \begin{equation}\label{jgamma omega e}
            e^{N-1} \left[ \gamma j_\gamma j_\gamma \delta \omega + j_\gamma j_\gamma \delta \omega \gamma \right] = \delta \omega \left[ \gamma j_\gamma j_\gamma e^{N-1}  + j_\gamma j_\gamma e^{N-1} \gamma \right],
        \end{equation}

Now, before looking at the equations of motion, we notice that we obtain a boundary term after imposing Stokes' theorem:
\begin{equation}
	\tilde{\alpha}=\int_{\partial M}\frac{e^{N-2}}{(N-2)!} \delta \omega + i\frac{e^{N-1}}{2(N-1)!}\left(  \overline{\psi}\gamma\delta\psi - \delta \overline{\psi}\gamma \psi \right).
\end{equation}
The equations of motion become
\begin{align}
	&\frac{e^{N-3}}{(N-3)!}F_\omega + i\frac{e^{N-2}}{2(N-2)!}\left( \overline{\psi}\gamma d_\omega\psi - d_\omega \overline{\psi}\gamma \psi\right)=0 \label{einstein spinor},\\
	&\frac{e^{N-3}}{(N-3)!}d_\omega e +  \frac{i}{8(N-1)!}\overline{\psi}\left( j_{\gamma} j_{\gamma} e^{N-1} \gamma + \gamma j_{\gamma} j_{\gamma} e^{N-1} \right)\psi=0 \label{torsion spinor 0},\\
	&\frac{e^{N-1}}{(N-1)!}\gamma d_\omega \psi - d_\omega\left( \frac{e^{N-1}}{2(N-1)!} \right)\gamma \psi   =0 \label{Dirac spinor 00},\\
	&\frac{e^{N-1}}{(N-1)!}d_\omega \overline{\psi}\gamma +  d_\omega\left( \frac{e^{N-1}}{2(N-1)!}\right)\overline{\psi}\gamma  =0 \label{Dirac spinor 10}.
\end{align}
Notice that, once we impose $\overline{\psi}=\psi^\dag \gamma_0$, then equations \eqref{Dirac spinor 00} and \eqref{Dirac spinor 10} are one the Hermitian conjugate of the other, representing Dirac equation on a curved background.

\subsection{Boundary structure in \texorpdfstring{$N=4$}{Lg}}

We now look at the boundary structure of the fields in the theory.
As usual, we restrict the space of fields to the boundary, obtaining $\tilde{F}^{\text{s}}_\partial=\mathcal{A}_\Sigma\times\Omega_{\partial\text{,n.d.}}^{(1,1)}\times S(\Sigma)\times \overline{S}(\Sigma)$, where $S(\Sigma):=\Gamma(\Sigma,E_{\lambda}\vert_\Sigma)$. 

The presymplectic form on the space of preboundary fields is given as usual by the variation of the boundary 1-form resulting from the variation of the action. We obtain
\begin{equation}
	\tilde{\varpi}_\text{s}=\int_{\Sigma} e\delta e \delta \omega + i\frac{e^2}{4}\left( \overline{\psi}\gamma \delta \psi - \delta \overline{\psi}	\gamma \psi \right)\delta e + i \frac{e^3}{ 3!}  \delta \overline{\psi} \gamma \delta \psi,
\end{equation}
while
    \begin{align*}
        \iota_{\mathbb{X}}\tilde{\varpi}_\text{s}&=\int_\Sigma e \mathbb{X}_e \delta \omega +\left[ e\mathbb{X}_\omega +\ \frac{i}{4}e^2(  \overline{\psi} \gamma \mathbb{X}_\psi -  \mathbb{X}_{\overline{\psi}}\gamma \psi )\right]\delta e \\
        &\qquad + i\delta \overline{\psi} \left( -\frac{e^2}{4}\gamma \psi \mathbb{X}_e + \frac{e^3}{3!}\gamma \mathbb{X}_\gamma\right) + i \left( \frac{e^2}{4}\overline{\psi}\gamma \mathbb{X}_e + \frac{e^3}{3!}\mathbb{X}_{\overline{\psi}}\gamma \right)\delta \psi
    \end{align*}
The kernel of the presymplectic form is hence given by the following system of equations:
\begin{align*}
    & e \mathbb{X}_e=0 & & e\mathbb{X}_\omega + i\frac{e^2}{4}\left( \overline{\psi}\gamma \mathbb{X}_\psi +- \mathbb{X}_{\overline{\psi}}	\gamma \psi \right)=0 \\
    &  -\frac{e^2}{4}\gamma \psi \mathbb{X}_e + \frac{e^3}{3!}\gamma \mathbb{X}_\psi=0 & & \frac{e^2}{4}\overline{\psi}\gamma \mathbb{X}_e + \frac{e^3}{3!}\mathbb{X}_{\overline{\psi}}\gamma=0.
\end{align*}
We can first solve the last two equations, using that $\gamma$ is invertible and that  $W_3^{\partial,(0,0)}$ is injective (see Lemma \ref{lem:We_boundary}.\ref{lem: ker We0} in Appendix \ref{a:tec_and_lproofs}). We then find $\mathbb{X}_e=\mathbb{X}_\psi=\mathbb{X}_{\overline{\psi}}=0$ and 
$e\mathbb{X}_\omega=0$. The geometric phase space is a bundle over $\Omega_{\partial\text{,n.d.}}^{(1,1)} $ with local trivialization $F^\partial_\text{s}\simeq F^{\partial}\times S(\Sigma)\times \overline{S}(\Sigma)$.

\subsubsection{Choice of representative of \texorpdfstring{$[\omega]$}{Lg}}

First of all, we point out that in $N=4$ we can further simplify eq. \eqref{torsion spinor 0} in the bulk by noticing that $j_\gamma j_\gamma e^3 = 6 e j_\gamma e j_\gamma e  = 3 e j_\gamma j_\gamma e $, hence obtaining

    \begin{equation}\label{torsion spin}
        e\left[ d_\omega e + \frac{i}{4}(\bar{\psi}\gamma[e^2,\psi] - [e^2,\bar{\psi}]\gamma \psi) \right]=0
    \end{equation}

We will provide a generalization of Theorem \ref{thm:omegadecomposition} which allows to consider the newly found constraint \eqref{torsion spin}. We notice that, when restricted to the boundary, it splits into two equations
\begin{align*}
	&e\left[ d_\omega e + \frac{i}{4}(\bar{\psi}\gamma[e^2,\psi] - [e^2,\bar{\psi}]\gamma \psi )\right],\\
	&e_n\left[ d_\omega e + \frac{i}{4}(\bar{\psi}\gamma[e^2,\psi] - [e^2,\bar{\psi}]\gamma \psi )\right]=e (d_{\omega}e)_n.
\end{align*}
We will take the second one as an inspiration for the structural constraint (which enables us to fix the representative of $[\omega]$), while the first one is the invariant one.
Following \cite{CCS2020}, we reformulate the theorem fixing the representative of $\omega$ in the new setting.
	
\begin{theorem}\label{thm: omegadecompspin}
 Suppose that  $g^{\partial}$, the metric induced on the boundary, is nondegenerate. Given any $\widetilde{\omega} \in \Omega^{1,2}$, there is a unique decomposition 
	\begin{equation} \label{omegadecompspin}
		\widetilde{\omega}= \omega +v,
	\end{equation}
	with $\omega$ and $v$ satisfying 
	\begin{equation}\label{omegareprfix2spin}
		ev=0 \quad \text{ and } \quad e_n\left[ d_\omega e + \frac{i}{4}(\bar{\psi}\gamma[e^2,\psi] - [e^2,\bar{\psi}]\gamma \psi )\right]\in \Ima W_1^{\partial,(1,1)}.
	\end{equation}
\end{theorem}
\begin{proof}
	Let $\widetilde{\omega} \in \Omega_\partial^{1,2}$. From Lemma \ref{lem:Omega2,2_d4} we deduce that there exist unique $\sigma \in \Omega_\partial^{1,1}$ and $v \in \text{Ker} W_1^{\partial,(1,2)}$ such that 
	\begin{align*}
		e_n\left[ d_\omega e + \frac{i}{4}(\bar{\psi}\gamma[e^2,\psi] - [e^2,\bar{\psi}]\gamma \psi )\right]  = e \sigma + e_n [v,e].
	\end{align*}
	We define $\omega := \widetilde{\omega} - v $. Then $\omega$ and $v$ satisfy \eqref{omegadecompspin} and \eqref{omegareprfix2spin}.
	
	For uniqueness, suppose that $\widetilde{\omega}= \omega_1 + v_1 = \omega_2 +v_2$ with $ev_i =0$ and $e_n d_{\omega_i} e \in \Ima W_1^{\partial,(1,1)}$ for $i=1,2$. Hence 
	$$e_n  d_{\omega_1} e- e_n  d_{\omega_2} e = e_n  [v_2-v_1, e] \in  \Ima W_1^{\partial,(1,1)}.$$ Hence from Lemma \ref{lem:Omega2,1_d4} and Lemma \ref{lem:Omega2,2_d4} (for which we need nondegeneracy of $g^\partial$), we deduce $v_2-v_1 =0$, since $v_2-v_1 \in Ker W_1^{\partial,(1,2)}$.
\end{proof}

\subsubsection{Poisson brackets of the constraints}

We now project the equations of motion to the boundary, casting them into constraints, which will define the physical space of boundary fields as the coisotropic submanifold of the geometric phase space given by the zero-locus of the constraints. Dirac equation \eqref{Dirac spinor 00} does not project to the boundary because it is a top form, hence we only obtain the constraints

\begin{align*}
	L_c=&\int_{\Sigma}c\left( e d_{{\omega}}e + \frac{i}{(8 \cdot 3!)}\overline{\psi}\left( j_{\gamma}j_{\gamma} e^3 \gamma + \gamma j_{\gamma}j_{\gamma} e^3 \right)\psi\right),\\
		P_\xi&=\int_{\Sigma} \frac{1}{2}\iota_\xi e^2 F_\omega + \iota_\xi(\omega-\omega_{0}) \left( e d_{{\omega}}e - \frac{i}{8 \cdot 3!}\overline{\psi}\left( j_{\gamma}j_{\gamma} e^3 \gamma + \gamma j_{\gamma}j_{\gamma} e^3 \right)\psi\right)\\
		&\qquad+  \frac{i}{2\cdot 3!}\iota_\xi e^3(\overline{\psi}\gamma d_\omega \psi -  d_\omega\overline{\psi}\gamma \psi),\\
	 H_\lambda =& \int_{\Sigma} \lambda e_n\left[ eF_\omega + \frac{e^3}{3!}\Lambda + i\frac{e^2}{4}\left( \overline{\psi}\gamma d_\omega \psi - d_\omega\overline{\psi}\gamma \psi \right) \right].
\end{align*}
	
	\begin{remark}
		As it turns out, using \eqref{jgamma alpha e }, we can rewrite $L_c$ to make the action of the internal symmetry group on the fields more evident. In particular we obtain 
			\begin{equation}
				L_c=\int_{\Sigma} c e d_\omega e - i \frac{e^3}{2\cdot 3!} \left( [c,\overline{\psi}]\gamma \psi - \overline{\psi} \gamma [c,\psi] \right),
			\end{equation}
		while $P_\xi$ becomes
			\begin{equation}
				\begin{split}
				P_\xi&= \int_\Sigma \frac{1}{2}\iota_\xi e^2 F_\omega + \iota_\xi (\omega- \omega_0) e d_\omega e  - \frac{i}{8\cdot 3!}\iota_\xi e^3 \overline{\psi}\left( -[\omega-\omega_0,\overline{\psi}\gamma\psi + \overline{\psi}\gamma[\omega-\omega_0,\psi] \right)\psi\\
				&\qquad + \frac{i}{2\cdot 3!}\iota_\xi e^3 (\overline{\psi}\gamma d_\omega \psi - d_\omega \overline{\psi}\gamma \psi)\\
				&=\int_{\Sigma} \frac{1}{2}\iota_\xi e^2 F_\omega + \iota_\xi(\omega-\omega_{0}) e  d_{{\omega}}e - i \frac{e^3}{2\cdot 3!} \left( \overline{\psi} \gamma \iota_\xi d_{\omega_0}(\psi) - \iota_\xi d_{\omega_0}(\overline{\psi})\gamma \psi  \right)\\
				&=\int_{\Sigma} \frac{1}{2}\iota_\xi e^2 F_\omega + \iota_\xi(\omega-\omega_{0}) e  d_{{\omega}}e - i \frac{e^3}{2\cdot 3!} \left( \overline{\psi} \gamma \mathrm{L}_{\xi}^{\omega_0}(\psi) - \mathrm{L}_\xi^{\omega_0}(\overline{\psi})\gamma \psi  \right).
				\end{split}
			\end{equation}
	\end{remark}	
	\begin{theorem} \label{thm:first-class-constraints_spin}
		The constraints $L_c$, $P_{\xi}$, $H_{\lambda}$ define a coisotropic submanifold  with respect to the symplectic structure $\varpi_{\text{s}}$. Their Poisson brackets\footnote{We point out that one should not confuse $L$ with $\mathrm{L}$, which respectively indicate the constraint and the Lie derivative} read
		
		\begin{align*}\label{brackets-of-constraints_YM}
			& \{P_{\xi}, P_{\xi}\}  =  \frac{1}{2}P_{[\xi, \xi]}- \frac{1}{2}L_{\iota_{\xi}\iota_{\xi}F_{\omega_0}} &\{H_\lambda,H_\lambda\}=0\\
			&\{L_c, P_{\xi}\}  =  L_{\mathrm{L}_{\xi}^{\omega_0}c} & \{L_c, L_c\} = - \frac{1}{2}L_{[c,c]}  	;
		\end{align*}
		\begin{align*}
			& \{L_c,  H_{\lambda}\}  = - P_{X^{(a)}} + L_{X^{(a)}(\omega - \omega_0)_a} -H_{X^{(n)}} {}\\
			& \{P_{\xi},H_{\lambda}\}  =  P_{Y^{(a)}} -L_{ Y^{(a)} (\omega - \omega_0)_a} +  H_{ Y^{(n)}} ,
		\end{align*}
	where $X= [c, \lambda e_n ]$, $Y = \mathrm{L}_{\xi}^{\omega_0} (\lambda e_n)$ and $Z^{(a)}$, $Z^{(n)}$ are the components of $Z\in\{X,Y\}$ with respect to the frame $(e_a, e_n)$.	
	\end{theorem}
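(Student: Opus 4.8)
The plan is to follow the same three-step strategy employed in the proofs of Theorems \ref{thm:first-class-constraints}, \ref{thm:first-class-constraints_scalar} and \ref{thm:first-class-constraints_YM}. First one checks that the constraints descend to well-defined functions on the geometric phase space: this is guaranteed by Theorem \ref{thm: omegadecompspin}, which singles out a unique representative $\omega$ of the class $[\omega]$ adapted to the spinorial torsion term appearing in \eqref{torsion spinor 0}, so that $L_c$, $P_\xi$ and $H_\lambda$ are unambiguous on $F^\partial_\text{s}$.

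Next I would compute the Hamiltonian vector fields of the three constraints by varying each one and matching the result against $\iota_\mathbb{X}\tilde{\varpi}_\text{s}=\delta(\cdot)$, using the explicit expression for $\iota_\mathbb{X}\tilde{\varpi}_\text{s}$ recorded above. Since the symplectic form now contains the spinorial pieces $i\frac{e^2}{4}(\overline{\psi}\gamma\delta\psi-\delta\overline{\psi}\gamma\psi)\delta e$ and $i\frac{e^3}{3!}\delta\overline{\psi}\gamma\delta\psi$, each constraint acts nontrivially on $\psi$ and $\overline{\psi}$; in particular, unlike the scalar case, already $L_c$ moves the matter field. By analogy with the gravity and Yang--Mills cases I expect the gauge parameters to act on the spinor as $\mathbb{L}_\psi=[c,\psi]$ and $\mathbb{P}_\psi=-\mathrm{L}_\xi^{\omega_0}\psi$ (with the conjugate formulas for $\overline{\psi}$), while the $e$- and $\omega$-components reproduce the pure-gravity vector fields up to matter corrections and $\mathbb{H}$ picks up spinor-dependent pieces analogous to the scalar $\mathbb{H}_\phi$. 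As in the earlier proofs, only $e\mathbb{H}_\omega$ rather than $\mathbb{H}_\omega$ itself is needed, the residual ambiguity living in $\mathrm{ker}(W_1^{\partial,(1,2)})$.

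With the vector fields in hand I would compute the six brackets as $\iota_{\mathbb{X}_F}\iota_{\mathbb{X}_G}\tilde{\varpi}_\text{s}$. The purely gravitational contributions are already known to give the right-hand sides of Theorem \ref{thm:first-class-constraints}, so the real content is to show that every spinorial contribution either cancels or reassembles into the same combinations of $L$, $P$ and $H$. This should rely on a spinorial analogue of the $(\bigstar)$ identity used for the scalar and Yang--Mills fields, relating $e_n\frac{e^3}{3!}$ times a spinor bilinear to lower powers of $e$ contracted with $e_n$, together with the identity $d_\omega\gamma=0$, the anticommutation relations \eqref{e:anticommrelation_gamma}, and repeated integration by parts. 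In particular $\{H_\lambda,H_\lambda\}=0$ should follow, exactly as in the scalar case, once the spinor terms are rearranged and one invokes $e_n^2=0$ and $\lambda^2=0$.

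The main obstacle will be the bookkeeping of the gamma-matrix ordering and the Grassmann signs. Unlike the scalar and Yang--Mills bilinears, the spinor bilinears hide $j_\gamma$-contractions inside $[c,\psi]=-\frac{1}{4}j_\gamma j_\gamma c\,\psi$ and carry a fixed matrix ordering of $\gamma$ between $\overline{\psi}$ and $\psi$, so the integration-by-parts and Fierz-type rearrangements must preserve this ordering while tracking the anticommutativity of $\psi$, $\overline{\psi}$ and of the odd multipliers $c$, $\xi$, $\lambda$. The crucial structural point is that no new first-class constraint appears: because the spin group is the double cover of $SO(3,1)$, the spinor carries no independent internal gauge symmetry beyond the Lorentz one already generated by $L_c$, so the algebra must close on precisely the gravity generators $L$, $P$, $H$ with the unchanged structure functions of Theorem \ref{thm:first-class-constraints}. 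Establishing that the spinorial cancellations produce exactly these coefficients, and no residual $M$-type term as in the Yang--Mills case, is the heart of the argument.
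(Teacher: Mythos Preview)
Your plan matches the paper's proof essentially step for step: it computes the Hamiltonian vector fields (finding exactly the $\mathbb{L}_\psi=[c,\psi]$, $\mathbb{L}_{\overline{\psi}}=[c,\overline{\psi}]$, $\mathbb{P}_\psi=-\mathrm{L}_\xi^{\omega_0}\psi$ you predict, and only $e\mathbb{H}_\omega$ for $H_\lambda$), then checks each bracket directly, with $\{H_\lambda,H_\lambda\}=0$ coming precisely from $\lambda^2=0$ and $e_n^2=0$. The one refinement worth noting is that the workhorse identity is not a spinorial analogue of $(\bigstar)$ but the Clifford-algebraic relation $j_\gamma j_\gamma c\,\gamma=-\gamma\, j_\gamma j_\gamma c+4[c,\gamma]$ (together with $d_\omega\gamma=0$ and the derived identities $(\triangledown)$, $(\blacklozenge)$); this is exactly what controls the gamma-ordering issue you correctly flag as the main obstacle.
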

	\begin{proof}
	We first compute the Hamiltonian vector fields of the constraints:
	
		\begin{align*}
			\delta L_c & = \int_{\Sigma} [c,e] e \delta \omega + \left( e d_\omega c   + \frac{i}{4} e^2  \left(\overline{\psi} \gamma [c,\psi]  - [c,\overline{\psi}] \gamma \psi \right)\right)\delta e       \\
			& \qquad + \frac{i}{2\cdot 3!} e^3 \left[ \delta \bar{\psi} \gamma [c,\psi] - \bar{\psi}\gamma [c,\delta \psi]  + [c,\delta \bar{\psi}] \gamma \psi + [c,\bar{\psi}]\gamma \delta \psi \right]\\
            & \overset{\blacktriangledown}{=} \int_{\Sigma} [c,e] e \delta \omega + \left( e d_\omega c   + \frac{i}{4} e^2  \left(\overline{\psi} \gamma [c,\psi]  - [c,\overline{\psi}] \gamma \psi \right)\right)\delta e\\
            & \qquad + \frac{i e^3}{3!} \left[ [c,\bar{\psi}]\gamma \delta \psi + \delta\bar{\psi}\gamma[c,\psi] \right] - \frac{i e^3}{2\cdot 3!} \left[ \delta\bar{\psi}[c,\gamma]\psi + \bar{\psi}[c,\gamma]\delta\psi \right]\\
			& =\int_{\Sigma} [c,e] e \delta \omega + \left( e d_\omega c   + \frac{i}{4} e^2  \left([c,\overline{\psi}] \gamma \psi + \overline{\psi} \gamma [c,\psi]  \right)\right)\delta e \\
			& \qquad + \frac{i}{3!} e^3 \left[ \delta\overline{\psi}\left( \frac{1}{2}[c,\gamma] \psi + \gamma[c,\psi] \right) + \left( [c,\overline{\psi}]\gamma - \frac{1}{2}\bar{\psi}[c,\gamma] \right)\delta\psi \right],
		\end{align*}
	where in the last passage we used that
        \begin{align}
            \bar{\psi}\gamma[c,\delta\psi]&=\bar{\psi}[c,\gamma]\delta\psi-[c,\bar{\psi}]\gamma\delta\psi\\
            [c,\delta\bar{\psi}]\gamma\psi&=\delta\bar{\psi}\gamma[c,\psi]-\delta\bar{\psi}[c,\gamma]\psi
        \end{align}
    which can easily be proved using the following identity\footnote{A proof of this identity can be found in appendix \ref{proof id iotagammac}}
		\begin{equation}\tag{$\blacktriangledown$}\label{identity on iotagammac}
			j_{\gamma}j_{\gamma} c  \gamma= -\gamma j_{\gamma} j_{\gamma} c - 4 j_{\gamma} c =  -\gamma j_{\gamma} j_{\gamma} c + 4 [c,\gamma] .
		\end{equation}
		We also get
		\begin{align*}
			\delta P_\xi & =\int_{\Sigma} - e \delta e \left( \mathrm{L}_\xi^{\omega_0}(\omega-\omega_0) + \iota_\xi F_{\omega_0} - \frac{i}{4} e \left( \overline{\psi}\gamma \mathrm{L}_\xi^{\omega_0}(\psi) - \mathrm{L}_\xi^{\omega_0}(\overline{\psi})\gamma \psi  \right) \right)\\
			& \qquad- \mathrm{L}_\xi^{\omega_0}(e) e \delta \omega
			+ i \delta\overline{\psi} \left( - \frac{e^3}{2\cdot 3!} \gamma \mathrm{L}_\xi^{\omega_0}(\psi) \right) + \frac{i e^3}{2\cdot 3!} \overline{\psi} \gamma \mathrm{L}_\xi^{\omega_0}(\delta \psi) \\
			& \qquad - \frac{i e^3}{2\cdot 3!} \mathrm{L}_\xi^{\omega_0}(\delta \overline{\psi}) \gamma \psi -  \frac{i}{2\cdot 3!} e^3 \mathrm{L}_\xi^{\omega_0}(\overline{\psi}) \gamma \delta \psi \\
			&= \int_{\Sigma}  - e \delta e \left( \mathrm{L}_\xi^{\omega_0}(\omega-\omega_0) + \iota_\xi F_{\omega_0} - \frac{i}{4} e \left( \overline{\psi}\gamma \mathrm{L}_\xi^{\omega_0}(\psi) - \mathrm{L}_\xi^{\omega_0}(\overline{\psi})\gamma \psi  \right) \right)\\
			& \qquad - \mathrm{L}_\xi^{\omega_0}(e) e \delta \omega 	- i \delta\overline{\psi} \left(  \frac{e^3}{ 3!} \gamma \mathrm{L}_\xi^{\omega_0}(\psi) - \frac{1}{2\cdot 3!} \mathrm{L}_\xi^{\omega_0}(e^3) \gamma \psi \right) \\
			& \qquad -i \left(  \frac{e^3}{3!} \mathrm{L}_\xi^{\omega_0}(\overline{\psi}) \gamma + \frac{1}{2\cdot 3!} \mathrm{L}_\xi^{\omega_0}(e^3)  \overline{\psi}\gamma \right) \delta \psi,
		\end{align*}	
		
	    \begin{align*}
	        \delta H_\lambda &=\int_{\Sigma} \lambda e_n \left[ F_\omega + \frac{\Lambda }{2}e^2 + i \frac{e}{2}\left(\overline{\psi}\gamma d_\omega \psi - d_\omega \overline{\psi}\gamma \psi \right) \right]\delta e +  d_\omega (\lambda e_n e) \delta \omega\\
	        &\qquad + \frac{i}{4}\lambda e_n e^2 \bigg[ \delta \overline{\psi} \gamma d_\omega \psi - \overline{\psi}\gamma d_\omega \delta \psi + d_\omega\delta \overline{\psi} \gamma \psi + d_\omega \overline{\psi}\gamma \delta\psi\\
	        &\qquad+ \overline{\psi}\gamma[\delta \omega,\psi] -[\delta\omega,\overline{\psi}]\gamma\psi  \bigg]\\
	        &= \int_{\Sigma} \lambda e_n \left[ F_\omega + \frac{\Lambda }{2}e^2 + i \frac{e}{2}\left(\overline{\psi}\gamma d_\omega \psi - d_\omega \overline{\psi}\gamma \psi \right) \right]\delta e +  d_\omega (\lambda e_n e) \delta \omega\\
	        &\qquad + i \delta \overline{\psi}\left[ \lambda e_n \frac{e^2}{4}\gamma d_\omega \psi - d_\omega\left( \lambda e_n \frac{e^2}{4}\gamma \psi \right)  \right]\\
	        &\qquad +i \left[ \lambda e_n \frac{e^2}{4}d_\omega \overline{\psi}\gamma + d_\omega \left( \lambda e_n \frac{e^2}{4}\overline{\psi}\gamma \right)\right]\delta \psi \\
	        &\qquad + \frac{i}{16}\lambda \overline{\psi}\left( j_{\gamma} j_{\gamma} (e_n e^2)\gamma - \gamma j_{\gamma} j_{\gamma} (e_n e^2) \right)\psi \delta \omega.
	    \end{align*}	
	
	We are then left with
		\begin{align*}
			&\mathbb{L}_e = [c,e] &\mathbb{L}_\psi=[c,\psi] \\  &\mathbb{L}_\omega = d_{\omega} c + \mathbb{V}_L\label{e:ham_vf_J} &\mathbb{L}_{\overline{\psi}}=[c,\overline{\psi}]\\
			&\mathbb{P}_e = - \mathrm{L}_{\xi}^{\omega_0} e &\mathbb{P}_\psi=-\mathrm{L}_\xi^{\omega_0}(\psi)   \\  &\mathbb{P}_\omega = - \mathrm{L}_{\xi}^{\omega_0} (\omega-\omega_0) - \iota_ {\xi}F_{\omega_0} + \mathbb{V}_P   &\mathbb{P}_{\overline{\psi}}=-\mathrm{L}_{\xi}^{\omega_0}(\overline{\psi}).
		\end{align*}
	    \begin{align*}
	         &\mathbb{H}_e = d_\omega (\lambda e_n ) + \lambda \sigma + \frac{i}{4}\lambda\overline{\psi}\left(j_{\gamma} e_n j_{\gamma} e \gamma - \gamma j_{\gamma} e_n j_{\gamma} e   \right)\psi\\
	         &e \mathbb{H}_\omega = \lambda e_n \left( F_\omega + \frac{\lambda}{2}e^2 \right) - i \frac{\lambda e_n}{4} e (\overline{\psi}\gamma d_\omega \psi - d_\omega\overline{\psi} \gamma \psi)\\
	         & \frac{e^3}{3!}\gamma \mathbb{H}_{\psi}= \frac{\lambda e_n}{2} e^2 \gamma d_\omega \psi - \frac{\lambda e_n}{4} ed_\omega e \gamma \psi+ \frac{i}{64} \lambda e \left[\overline{\psi}\left( j_{\gamma}j_{\gamma}(e_n e^2)\gamma - \gamma j_{\gamma} j_{\gamma}(e_n e^2) \right)  \psi\right]\gamma\psi\\
	         & \frac{e^3}{3!}\mathbb{H}_{\overline{\psi}} \gamma = \frac{\lambda e_n}{2} e^2  d_\omega \overline{\psi} \gamma + \frac{\lambda e_n}{4} ed_\omega e \overline{\psi}  \gamma- \frac{i}{64} \lambda e \overline{\psi}\gamma\left[\overline{\psi}\left( j_{\gamma}j_{\gamma}(e_n e^2)\gamma - \gamma j_{\gamma} j_{\gamma}(e_n e^2) \right)  \psi\right]
	    \end{align*}

	The Poisson brackets of the constraints are:
		
		\begin{align*}
			\{L_c,L_c\}&=\int_{\Sigma} (\cdots) - \frac{i}{4}e^2\left( -\frac{1}{4}\overline{\psi}j_{\gamma}j_{\gamma} c \gamma \psi + \frac{1}{4}\overline{\psi}\gamma j_{\gamma} j_{\gamma} c \psi \right)[c,e] \\
			&\qquad + \frac{i}{3!}e^3[c,\overline{\psi}]\gamma[c,\psi]\\
			&=\int_{\Sigma} (\cdots) + \frac{i}{8\cdot 3!}\overline{\psi}(j_{\gamma} j_{\gamma} c \gamma - \gamma j_{\gamma} j_{\gamma} c) \psi [c,e^3] \\
			&\qquad + \frac{i}{16\cdot 3!}e^3 \overline{\psi}j_{\gamma}j_{\gamma} c \gamma j_{\gamma}j_{\gamma} c \psi\\
			&\overset{\blacktriangledown}{=}\int_\Sigma (\cdots) - \frac{i}{2 \cdot 3!}\left( [c,\overline{\psi}]\gamma \psi - \overline{\psi}\gamma [c,\psi] \right)[c,e^3] \\
			&\qquad + \frac{i}{32}\frac{e^3}{3!}\overline{\psi}\left( -\gamma j_{\gamma}j_{\gamma} c j_{\gamma}j_{\gamma} c + 4 [c,\gamma]j_{\gamma}j_{\gamma} c - j_{\gamma}j_{\gamma} c j_{\gamma}j_{\gamma} c \gamma + 4 j_{\gamma}j_{\gamma} c [c,\gamma] \right)\psi\\
			&=\int_{\Sigma} (\cdots) + \frac{i}{2\cdot 3!}e^3 \left( \overline{\psi}\gamma[c,[c,\psi]] - [c,[c,\overline{\psi}]]\gamma\psi \right) \\
			&=\int_\Sigma -\frac{1}{2}[c,c]e d_\omega e + \frac{i}{4\cdot 3!}e^3 \left( [[c,c],\overline{\psi}]\gamma \psi - \overline{\psi}\gamma[[c,c],\psi] \right)\\
			&=-\frac{1}{2}L_{[c,c]},
		\end{align*}
	where in last few steps we used the graded Jacobi identity to prove
	    \begin{equation*}
	        [c,[c,\psi]=-\frac{1}{2}[[c,c],\psi]
	    \end{equation*}
	and the fact that
		\begin{equation*}
			\gamma j_{\gamma}j_{\gamma} c j_{\gamma}j_{\gamma} c= j_{\gamma}j_{\gamma} c j_{\gamma}j_{\gamma} c \gamma  + 4 j_{\gamma}j_{\gamma} c  j_{\gamma} c + 4 j_{\gamma} c j_{\gamma}j_{\gamma} c . 
		\end{equation*}

	\begin{align*}
		\{L_c,P_\xi\}&=\int_\Sigma (\cdots) - \frac{i}{2\cdot 3!}e^3 \left( [c,\overline{\psi}]\gamma \mathrm{L}_\xi^{\omega_0}\psi - \overline{\psi}\gamma \mathrm{L}_\xi^{\omega_0}([c,\psi]) + \mathrm{L}_\xi^{\omega_0}([c,\overline{\psi}])\gamma\psi + \mathrm{L}_\xi^{\omega_0}\overline{\psi}\gamma [c,\psi]\right)\\
		&\qquad - \frac{i}{2\cdot 3!}[c,e^3] \left( \overline{\psi}\gamma \mathrm{L}_\xi^{\omega_0}\psi - \mathrm{L}_\xi^{\omega_0}\overline{\psi}\gamma \psi \right)\\
		&=\int_\Sigma (\cdots) - \frac{i}{2\cdot 3!}e^3\big( [c,\overline{\psi}]\gamma \mathrm{L}_\xi^{\omega_0}\psi + \mathrm{L}_\xi^{\omega_0}\overline{\psi}\gamma [c,\psi] - \overline{\psi}\gamma [\mathrm{L}_\xi^{\omega_0}c,\psi] + \overline{\psi}\gamma [c,\mathrm{L}_\xi^{\omega_0}\psi]\\
		&\qquad -[\mathrm{L}_\xi^{\omega_0}c,\overline{\psi}]\gamma\psi - [c,\mathrm{L}_\xi^{\omega_0}\overline{\psi}]\gamma\psi \big) - \frac{i}{2\cdot 3!}[c,e^3](\overline{\psi}\gamma \mathrm{L}_\xi^{\omega_0} \psi - \mathrm{L}_\xi^{\omega_0}\overline{\psi}\gamma \psi)\\
		&=\int_\Sigma (\cdots) - \frac{i}{2\cdot 3!}\big( [c,\overline{\psi}]\gamma \mathrm{L}_\xi^{\omega_0}\psi + \mathrm{L}_\xi^{\omega_0} \overline{\psi}\gamma [c,\psi] - \overline{\psi}\gamma [\mathrm{L}_\xi^{\omega_0}c,\psi] + [\mathrm{L}_\xi^{\omega_0}c, \overline{\psi}]\gamma \psi \\
		&\qquad -[c,\overline{\psi}]\gamma \mathrm{L}_\xi^{\omega_0}\psi - \overline{\psi}[c,\gamma]\mathrm{L}_\xi^{\omega_0}\psi -\mathrm{L}_\xi^{\omega_0}\overline{\psi}[c,\gamma]\psi - \mathrm{L}_\xi^{\omega_0}\overline{\psi}\gamma [c,\psi]\\
		&\qquad +\overline{\psi}[c,\gamma]\mathrm{L}_\xi^{\omega_0}\psi + \mathrm{L}_\xi^{\omega_0}\overline{\psi}[c,\gamma]\psi \big)\\
		&=\int_{\Sigma} \mathrm{L}_\xi^{\omega_0}c e d_\omega e - \frac{i}{2\cdot 3!}e^3 \big( [\mathrm{L}_\xi^{\omega_0}c,\overline{\psi}]\gamma \psi - \overline{\psi}\gamma [\mathrm{L}_\xi^{\omega_0}c,\psi]  \big)\\
		&= L_{\mathrm{L}_\xi^{\omega_0}c},
	\end{align*}
	where in the second to last passage we used that 
	\begin{align*}
	    \overline{\psi}\gamma[c,\mathrm{L}_\xi^{\omega_0}\psi]&=-[c,\overline{\psi}]\gamma \mathrm{L}_\xi^{\omega_0}\psi - \overline{\psi}[c,\gamma]\mathrm{L}_\xi^{\omega_0}\psi,\\
	    [c,\mathrm{L}_\xi^{\omega_0}\overline{\psi}]\gamma \psi&=\mathrm{L}_\xi^{\omega_0}\overline{\psi}[c,\gamma]\psi + \mathrm{L}_\xi^{\omega_0}\overline{\psi}\gamma[c,\psi].
	    \end{align*}
	
		\begin{align*}
		\{L_c,H_\lambda\}&=\mathbb{L}_c(H_\lambda)=\int_{\Sigma}(\cdots) + \lambda e_n \bigg\{ \frac{i}{4}[c,e^2]\big( \overline{\psi}\gamma d_\omega \psi -  d_\omega \overline{\psi}\gamma \psi \big) + \frac{i}{4}e^2 \big( [c,\overline{\psi}]\gamma d_\omega \psi \\
		&\qquad - \overline{\psi}\gamma d_\omega [c,\psi] + d_\omega ([c,\overline{\psi}])\gamma\psi + d_\omega \overline{\psi}\gamma [c,\psi]\\
		&\qquad +\overline{\psi}\gamma[d_\omega c ,\psi] - [d_\omega c, \overline{\psi}]\gamma \psi \big) \bigg\}\\
		&\overset{\triangledown}{=}\int_{\Sigma}(\cdots) - [c,\lambda e_n]\frac{i}{4}e^2\big( \overline{\psi}\gamma d_\omega  \psi -  d_\omega \overline{\psi}\gamma \psi\big) - i\frac{\lambda e_n}{4}e^2 \bigg\{ -\overline{\psi}[c,\gamma]d_\omega \psi - d_\omega \overline{\psi}[c,\gamma]\psi \\
		&\qquad + \overline{\psi}[c,\gamma d_\omega\psi -\overline{\psi}\gamma [c,d_\omega\psi] + [d_\omega c, \overline{\psi}]\gamma \psi - [c, d_\omega \overline{\psi}]\gamma \psi + [c,d_\omega\overline{\psi}]\gamma \psi\\
		&\qquad + d_\omega \overline{\psi}[c,\gamma]\psi + \overline{\psi}\gamma [d_\omega c,\psi] - [d_\omega c, \overline{\psi}] \gamma \psi - \overline{\psi}\gamma [d_\omega c,\psi] +  \overline{\psi} \gamma [c,d_\omega \psi]   \bigg\}\\
		&=\int_\Sigma -[c,\lambda e_n]\big( e F_\omega + \frac{\Lambda}{2}e^2 + \frac{i}{4}e^2 (\overline{\psi}\gamma d_\omega \psi - d_\omega \overline{\psi}\gamma \psi)  \big)\\
		&= - P_{[c,\lambda e_n]^{(a)}} - H_{[c,\lambda e_n]^{(a)}} + L_{ [c,\lambda e_n]^{(a)}(\omega-\omega_0)_{(a)} }
	\end{align*}
	having used the following identities, which can be easily found
	    \begin{equation}\tag{$\triangledown$}
	        \begin{split}
	             d_\omega \overline{\psi}\gamma [c,\psi]&=[c,d_\omega \overline{\psi}]\gamma \psi + d_\omega \overline{\psi}[c,\gamma]\psi,\\
            [c,\overline{\psi}]\gamma d_\omega  \psi&= \overline{\psi}[c,\gamma] d_\omega \psi - \overline{\psi}\gamma [c,d_\omega \psi].
	        \end{split}
	    \end{equation}

	\begin{align*}
		\{P_\xi,P_\xi\}&=\int_{\Sigma}  (\cdots) + \frac{i}{2\cdot 3!}\mathrm{L}_\xi^{\omega_0}(e^3)\big( \overline{\psi}\gamma \mathrm{L}_\xi^{\omega_0}\psi - \mathrm{L}_\xi^{\omega_0}\overline{\psi}\gamma\psi  \big) - \frac{i}{2\cdot 3!}e^3 \big\{ -\mathrm{L}_\xi^{\omega_0}\overline{\psi }\gamma \mathrm{L}_\xi^{\omega_0}\psi\\
		& \qquad +  \overline{\psi}\gamma \mathrm{L}_\xi^{\omega_0}\mathrm{L}_\xi^{\omega_0}\psi -\mathrm{L}_\xi^{\omega_0}\mathrm{L}_\xi^{\omega_0}\overline{\psi}\gamma \psi  -\mathrm{L}_\xi^{\omega_0}\overline{\psi}\gamma \mathrm{L}_\xi^{\omega_0}\psi \big\}\\
		&=\int_\Sigma (\cdots) - \frac{i}{2\cdot 3!}e^3 \big\{ \mathrm{L}_\xi^{\omega_0}\overline{\psi}\gamma \psi + \overline{\psi}\gamma \mathrm{L}_\xi^{\omega_0}\mathrm{L}_\xi^{\omega_0}\psi - \mathrm{L}_\xi^{\omega_0}\mathrm{L}_\xi^{\omega_0}\overline{\psi }\gamma \psi + \mathrm{L}_\xi^{\omega_0}\overline{\psi}\gamma \mathrm{L}_\xi^{\omega_0} \psi \\
		&\qquad  -\mathrm{L}_\xi^{\omega_0}\overline{\psi }\gamma \mathrm{L}_\xi^{\omega_0}\psi +  \overline{\psi}\gamma \mathrm{L}_\xi^{\omega_0}\mathrm{L}_\xi^{\omega_0}\psi - \mathrm{L}_\xi^{\omega_0}\mathrm{L}_\xi^{\omega_0}\overline{\psi}\gamma \psi - \mathrm{L}_\xi^{\omega_0} \overline{\psi} \gamma \mathrm{L}_\xi^{\omega_0}\psi \big\} \\
		&=\int_\Sigma (\cdots) - \frac{i}{3!}e^3 \big( \overline{\psi}\gamma\mathrm{L}_\xi^{\omega_0}\mathrm{L}_\xi^{\omega_0}\psi -\mathrm{L}_\xi^{\omega_0}\mathrm{L}_\xi^{\omega_0}\overline{\psi}\gamma \psi \big)\\
		&\overset{\clubsuit}{=}\int_\Sigma (\cdots) -\frac{i}{2\cdot 3!}e^3 \big( \overline{\psi}\gamma \mathrm{L}_{[\xi,\xi]}^{\omega_0}\psi - \mathrm{L}_{[\xi,\xi]}^{\omega_0}\overline{\psi}\gamma\psi \big)\\
		&\qquad + \frac{i}{2\cdot 3!}e^3 \big( [\iota_\xi\iota_\xi F_{\omega_0},\overline{\psi}]\gamma\psi - \overline{\psi}\gamma[\iota_\xi\iota_\xi F_{\omega_0},\psi] \big)\\
		&=\frac{1}{2}P_{[\xi,\xi]}-\frac{1}{2}L{\iota_\xi\iota_\xi F_{\omega_{0}}};
	\end{align*}

    \begin{align*}
	 	\{P_\xi,H_\lambda\}&=\int_{\Sigma} (\cdots) + \lambda e_n \bigg\{ -\frac{i}{4}\mathrm{L}_\xi^{\omega_0}(e^2)(\overline{\psi}\gamma d_\omega \psi - d_\omega \overline{\psi}\gamma \psi + \frac{i}{4}e^2 \big[ -\mathrm{L}_\xi^{\omega_0}\overline{\psi}\gamma d_\omega\psi \\
	 	&\qquad +\overline{\psi}\gamma d_\omega\mathrm{L}_\xi^{\omega_0}\psi - d_\omega \mathrm{L}_\xi^{\omega_0}\overline{\psi}\gamma \psi - d_\omega \overline{\psi}\gamma \mathrm{L}_\xi^{\omega_0}\psi \\
	 	&\qquad - \overline{\psi}\gamma[ \iota_\xi F_{\omega_0} + \mathrm{L}_\xi^{\omega_0}(\omega- \omega_0), \psi ]+ [\iota_\xi F_{\omega_0}+ \mathrm{L}_\xi^{\omega_0}(\omega-\omega_0),\overline{\psi}]\gamma\psi  \big]\bigg\}\\
	 	&=\int_\Sigma (\cdots) + i  \mathrm{L}_\xi^{\omega_0}(\lambda e_n)\frac{e^2}{4}\big( \overline{\psi}\gamma d_\omega \psi - d_\omega \overline{\psi}\gamma\psi\big) +i \frac{\lambda e_n}{4}e^2 \big\{ \overline{\psi}\gamma \mathrm{L}_\xi^{\omega_0} d_\omega \psi \\
	 	&\qquad -\mathrm{L}_\xi^{\omega_0}d_\omega \overline{\psi}\gamma \psi + \overline{\psi} \gamma d_\omega \mathrm{L}_\xi^{\omega_0}\psi - d_\omega \mathrm{L}_\xi^{\omega_0}\overline{\psi}\gamma \psi\\
	 	&\qquad - \overline{\psi}\gamma[ \iota_\xi F_{\omega_0} + \mathrm{L}_\xi^{\omega_0}(\omega- \omega_0), \psi ]+ [\iota_\xi F_{\omega_0}+ \mathrm{L}_\xi^{\omega_0}(\omega-\omega_0),\overline{\psi}]\gamma\psi \big\}\\
	 	&\overset{\blacklozenge}{=}\int_\Sigma (\cdots) + i  \mathrm{L}_\xi^{\omega_0}(\lambda e_n)\frac{e^2}{4}\big( \overline{\psi}\gamma d_\omega \psi - d_\omega \overline{\psi}\gamma\psi\big) +i \frac{\lambda e_n}{4}e^2 \big\{ \overline{\psi}\gamma[ \mathrm{L}_\xi^{\omega_0}\omega,\psi] - [\mathrm{L}_\xi^{\omega_0}\omega,\overline{\psi}]\gamma\psi\\
	 	&\qquad - \overline{\psi}\gamma[ \iota_\xi F_{\omega_0} + \mathrm{L}_\xi^{\omega_0}(\omega- \omega_0), \psi ]+ [\iota_\xi F_{\omega_0}+ \mathrm{L}_\xi^{\omega_0}(\omega-\omega_0),\overline{\psi}]\gamma\psi  \big\}\\
	 	&=\int_\Sigma (\cdots) + i  \mathrm{L}_\xi^{\omega_0}(\lambda e_n)\frac{e^2}{4}\big( \overline{\psi}\gamma d_\omega \psi - d_\omega \overline{\psi}\gamma\psi\big) + i \frac{\lambda e_n}{4}e^2 \big\{ \overline{\psi}\gamma [\mathrm{L}_\xi^{\omega_0}\omega_0 - \iota_\xi F_{\omega_0},\psi]\\
	 	&\qquad -[\mathrm{L}_\xi^{\omega_0}\omega_0 - \iota_\xi F_{\omega_0},\overline{\psi}]\gamma\psi\big\}\\
	 	&=\int_\Sigma \mathrm{L}_\xi^{\omega_0}(\lambda e_n) \left( eF_\omega + \frac{\Lambda}{2}e^2 + \frac{i}{4}e^2 \big( \overline{\psi}\gamma d_\omega \psi-d\omega \overline{\psi}\gamma\psi \big) \right)\\
        &=P_{ \mathrm{L}_\xi^{\omega_0}(\lambda e_n)^{(a)}} + H_{ \mathrm{L}_\xi^{\omega_0}(\lambda e_n)^{(a)}} - L_{ \mathrm{L}_\xi^{\omega_0}(\lambda e_n)^{(a)}(\omega-\omega_0)},
	 \end{align*}
	 where we used that $\mathrm{L}_\xi^{\omega_0}\omega_0 - \iota_\xi F_{\omega_0}=-d\iota_\xi \omega_0$ and the following identity:
	    \begin{equation}\tag{$\blacklozenge$}
	        \begin{split}
	            \mathrm{L}_\xi^{\omega_0}d_\omega\psi = - d_\omega \mathrm{L}_\xi^{\omega_0} \psi  + [\mathrm{L}_\xi^{\omega_0}\omega, \psi].
	        \end{split}
	    \end{equation}
    Furthermore, recalling that $d_{\omega_0}\gamma=0$, it is quite easy to see that
        \begin{equation*}
            \begin{split}
                &\overline{\psi}\gamma [d\iota_\xi \omega_0,\psi]- [d \iota_\xi \omega_0,\overline{\psi}]\gamma\psi=-[d\iota_\xi \omega_0,\overline{\psi}\gamma\psi]=0.
            \end{split}
        \end{equation*}

    Now, before computing $\{H_\lambda,H_\lambda\}$, we first notice that the Hamiltonian vector field associated to $H_\lambda$ can be rewritten as
    \begin{align*}
        e\gamma \mathbb{H}_\psi&= 3 \lambda e_n \gamma d_\omega \psi - \frac{3}{2}\lambda \sigma \gamma \psi+ \frac{3i}{8}\lambda \beta\\ 
        e \mathbb{H}_{\overline{\psi}}\gamma &= 3 \lambda e_n d_\omega \overline{\psi}\gamma + \frac{3}{2}\overline{\psi}\gamma \lambda \sigma - \frac{3}{8}i \lambda \overline{\beta},
    \end{align*}
    with $\beta:=\overline{\psi}\big( j_{\gamma} e_n j_{\gamma} e \gamma - \gamma j_{\gamma} e_n j_{\gamma} e \big)\gamma\psi$, hence
    \begin{align*}
        \{ H_\lambda , H_\lambda \}&= \int_\Sigma i \left[ \frac{\lambda e_n}{2} \mathbb{H}_{\overline{\psi}}e^2 \gamma d_\omega \psi - \frac{1}{4}d_\omega(\lambda e_n) e^2 \mathbb{H}_{\overline{\psi}}\gamma\psi - \frac{\lambda e_n}{2}d_\omega e e \mathbb{H}_{\overline{\psi}}\gamma\psi  \right]\\ 
        & \qquad +i \left[  \left( \frac{\lambda e_n}{2}d_\omega \overline{\psi} + \frac{1}{4}d_\omega(\lambda e_n) \overline{\psi}  \right) e^2 \gamma \mathbb{H}_\psi + \frac{\lambda e_n}{2}d_\omega e \overline{\psi} e \gamma \mathbb{H}_\psi  \right]\\
        &=\int_\Sigma \frac{3}{4\cdot 32} d_\omega(\lambda e_n) \lambda \overline{\psi}\gamma \psi \big[ \overline{\psi} \big( j_{\gamma}j_{\gamma}(e_n e^2) \gamma - \gamma j_{\gamma} j_{\gamma} (e_n e^2) \big)\psi \big]\\
        &\qquad - \frac{3}{4\cdot 32} d_\omega(\lambda e_n) \lambda \overline{\psi}\gamma \psi \big[ \overline{\psi} \big( j_{\gamma}j_{\gamma}(e_n e^2) \gamma - \gamma j_{\gamma} j_{\gamma} (e_n e^2) \big)\psi \big]=0,
    \end{align*}
    where all the remaining terms vanish because they are either proportional to $\lambda^2=0$ or $e_n^2=0$.
	\end{proof}
	
	\subsection{The BFV formalism of the theory of gravity coupled to the spinor field}\label{BFV_spinor}

     Since there are no additional constraints, the BFV discussion of the spinor field coupled to gravity is very similar to the case of the scalar field.
     
     \begin{theorem}\label{thm:BFVaction spinor}
		Let $\mathcal{F}_{s}$ be the bundle 
		\begin{equation*}
			\mathcal{F}_s= \mathcal{F}_{PC} \times S(\Sigma) \times \overline{S}(\Sigma),
		\end{equation*}
		where the additional fields are denoted by $\psi\in S(\Sigma)$ and $\overline{\psi}\in\overline{S}(\Sigma)$. 
		The symplectic form and the action functional on $\mathcal{F}_s$ are respectively defined by
		\begin{align*}
			\varpi_s &= \varpi_{PC}+\int_{\Sigma} i \frac{e^3}{3!}\delta\overline{\psi}\gamma \delta \psi - \frac{i}{4}e^2\delta e\left( \delta \overline{\psi} \gamma \psi - \overline{\psi}\gamma \delta \psi\right) , \\
			S_s &= S_{PC} +\int_{\Sigma}i \frac{e^3}{2\cdot3!}\left( \overline{\psi}\gamma d_\omega \psi - d_\omega\overline{\psi} \gamma \psi \right).
		\end{align*}
		Then the triple $(\mathcal{F}_s, \varpi_s, S_s)$ defines a BFV structure on $\Sigma$.
	\end{theorem}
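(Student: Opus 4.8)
The plan is to check that the pair $(\varpi_s, S_s)$ satisfies the defining conditions of a BFV structure; the only nontrivial one is the classical master equation $\{S_s, S_s\} = 0$, where the bracket is induced by $\varpi_s$ and, writing $Q_s$ for the Hamiltonian vector field of $S_s$, one has $\{S_s, S_s\} = \iota_{Q_s}\iota_{Q_s}\varpi_s$. The essential structural observation, which I would record first, is that coupling a spinor to gravity introduces no new gauge symmetry: the spinor transforms under the same internal $SO(3,1)$ rotations (generated by $c$) and under the diffeomorphisms (generated by $\xi$ and $\lambda$) already acting in pure Palatini--Cartan gravity. Consequently---in contrast with the Yang--Mills case of Theorem \ref{thm:BFVactionYM}, where the extra gauge constraint $M_\mu$ forced the ghost pair $(\mu, \mu^\dagger)$---the graded manifold $\mathcal{F}_s$ carries no additional ghosts or ghost momenta beyond those of $\mathcal{F}_{PC}$, exactly as in the scalar case of Theorem \ref{thm:BFVaction}. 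This is why the same $\mathcal{T}_{PC}$ block suffices and only the matter fields $\psi, \overline{\psi}$ are adjoined.

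Following this, I would decompose $S_s = S_{PC} + S_{\mathrm{add}}$, with $S_{\mathrm{add}}$ collecting the spinor-dependent contributions, and expand by bilinearity:
\begin{equation*}
\{S_s, S_s\} = \{S_{PC}, S_{PC}\} + 2\{S_{PC}, S_{\mathrm{add}}\} + \{S_{\mathrm{add}}, S_{\mathrm{add}}\}.
\end{equation*}
The first summand vanishes by Theorem \ref{thm:BFVgravity}. For the remaining two, the decisive point is that $S_{\mathrm{add}}$ depends only on the degree-zero fields and on the ghosts $c, \xi, \lambda$, but not on the ghost momenta $c^\dagger, \xi^\dagger, \lambda^\dagger$. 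As in the proof of Theorem \ref{thm:BFVaction}, this means that $\{S_{\mathrm{add}}, S_{\mathrm{add}}\}$ reproduces verbatim the spinor-dependent parts of the Poisson brackets among $L_c, P_\xi, H_\lambda$ computed in Theorem \ref{thm:first-class-constraints_spin}, while $2\{S_{PC}, S_{\mathrm{add}}\}$ supplies precisely the structure-function terms on the right-hand sides of the constraint algebra of that theorem needed to cancel them. The cancellation is thus a direct transcription of the first-class property already established.

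The main obstacle is bookkeeping rather than conceptual. The spinor brackets rest on the Clifford-algebra manipulations used in Theorem \ref{thm:first-class-constraints_spin}---in particular the contraction identity $(\blacktriangledown)$, the relation $d_\omega \gamma = 0$, and the graded Jacobi identity producing $[c,[c,\psi]] = -\tfrac{1}{2}[[c,c],\psi]$---together with careful tracking of the signs coming from the odd (fermionic) parity of $\psi, \overline{\psi}$ and of the odd ghosts. The one genuinely new feature relative to the scalar case is that $\varpi_s$ is not in Darboux form in the matter sector: it carries the $e$-dependent cross terms $-\tfrac{i}{4}e^2\delta e(\delta\overline{\psi}\gamma\psi - \overline{\psi}\gamma\delta\psi)$, so inverting $\varpi_s$ to read off the components of $Q_s$ on $\psi, \overline{\psi}, e, \omega$ requires the same work already done in computing the Hamiltonian vector fields $\mathbb{H}_\psi, \mathbb{H}_{\overline{\psi}}, \mathbb{H}_\omega$ there. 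Once those vector fields are imported, the master equation reduces to the same cancellations---those giving $\{H_\lambda, H_\lambda\} = 0$ being the most delicate, as they hinge on the vanishing of terms proportional to $\lambda^2$ and $e_n^2$---and no new identity is required.
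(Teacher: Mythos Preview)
Your proposal is correct and follows exactly the approach the paper takes: the paper's proof is the single line ``copied \emph{mutatis mutandis} from the one of Theorem \ref{thm:BFVaction}'', and what you have written is precisely the unpacking of that sentence---the bilinear splitting $\{S_s,S_s\}=\{S_{PC},S_{PC}\}+2\{S_{PC},S_{\mathrm{add}}\}+\{S_{\mathrm{add}},S_{\mathrm{add}}\}$, the observation that $S_{\mathrm{add}}$ carries no ghost momenta, and the reduction to the first-class algebra of Theorem \ref{thm:first-class-constraints_spin}. Your additional remarks on the non-Darboux form of $\varpi_s$ and the Clifford identities are accurate and go beyond what the paper spells out.
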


	\begin{proof}
		The proof can be copied \emph{mutatis mutandis} from the one of Theorem \ref{thm:BFVaction}
\end{proof}

\newpage
\appendix
\section{Clifford algebras and spin groups}\label{clif spin}

This first appendix is useful when defining exactly what spinor fields appear in the context of field theory. We will mainly follow \cite{fatibene2018}, \cite{spingeom} and \cite{Kostant1987}.

\subsection{Clifford algebras}
Let $V$ be a real vector space of dimension $N$ with an inner product of signature $(r,s)$. Let $\eta_{ab}$ be the matrix diag$(-1,\cdots,-1,1,\cdots,1)$ with $r$ plus 1 and $s$ minus 1, giving the inner product on $V$ with respect to an orthonormal basis $\{v_a\}$. 

We define the Clifford algebra on $V$ by means of its universal property. In particular	
\begin{definition}[Clifford map]
	A \text{Clifford map} is a pair consisting of an associative algebra $A$ with unity and a linear map $\phi \colon V\rightarrow A$ satisfying $\forall u,v\in V$
	\begin{equation}
		\phi(u)\phi(u)=-\eta(u,u)\mathbb{1}_A
	\end{equation}
\end{definition}

The Clifford algebra of $V$ is the solution corresponding to the universal problem, that is
\begin{definition}[Clifford algebra]
	The Clifford algebra $\mathcal{C}(V)$ is an associative algebra with unit together with a Clifford map $i\colon V\rightarrow \mathcal{C}(V)$ such that any Clifford map factors through a unique algebra homomorphism from $C(V)$. In other words, given any Clifford map $(A,\phi)$ there is a unique algebra homomorphism $\Phi\colon \mathcal{C}( V) \rightarrow A$ such that $\phi = \Phi \circ i$
	\begin{equation}
		\begin{tikzcd}
			V \arrow[r, "\phi"] \arrow[d, "i"'] & A \\
			\mathcal{C}(V) \arrow[ru, "\Phi"']            &  
		\end{tikzcd}
	\end{equation}
\end{definition}

The Clifford algebra of $V$ is unique up to isomorphisms.

We give a model for such an algebra. Consider the tensor algebra $T(V):=\mathbb{R}\oplus V \oplus V \oplus \cdots$ and quotient it out  by the two-sided ideal $I(V)$ generated by $v\otimes v + \eta(v,v)\mathbb{1}$, i.e.
\begin{equation}
    \mathcal{C}(V):=\frac{T(V)}{I(V)}.
\end{equation}

Notice that $T(V)$ is a $\mathbb{Z}$-graded algebra. The ideal $I(V)$ is spanned by elements that are not necessarily homogeneous, therefore the $\mathbb{Z}$--grading is lost in the Clifford algebra. However, the generators of $I(V)$ are even, therefore $\mathcal{C}(V)$ will be $\mathbb{Z}_2$-graded. In particular, it splits into
\begin{equation}
	\mathcal{C}(V)=\mathcal{C}_0(V) \oplus \mathcal{C}_1(V)
\end{equation}
Another important property, for any two vectors $v,w \in V$, is the following
\begin{equation}
	\begin{split}
		(v+w)^2& =v^2+ vw + wv + w^2 = - \eta(v,v)\mathbb{1} - \eta(w,w)\mathbb{1} + \{v,w\}\\
		& = - \eta(v+w,v+w)\mathbb{1}=-\eta(v,v)\mathbb{1}  - \eta(w,w)\mathbb{1} -2\eta(v,w)\mathbb{1}\\
		\Rightarrow&\quad \{v,w\}:=vw+wv=-2\eta(v,w)\mathbb{1}
	\end{split}
\end{equation}
Now, considering an orthonormal basis $\{v_a\}$ of $V$, setting the first $s$ elements $\{v_A\}$ such that $\eta(v_A,v_A)=-1$ and the second $r$ elements $\{v_i\}$ such that $\eta(v_i,v_i)=1$, we obtain $\{v_a,v_b\}=-2\eta_{ab}\mathbb{1}$. This means that when $a\neq b$, $v_av_b=-v_bv_a$ and that $v_av_a=\pm \mathbb{1}$.

At this point, since every element in the tensor algebra $T(V)$ is a finite linear combination of the product of finite elements in the basis of $V$, then to obtain elements in $\mathcal{C}(V)$ we simply apply the constraint $\{v_a,v_b\}=-2\eta_{ab}\mathbb{1}$. In other words, a basis of Clifford algebra is in the form
\begin{equation}
	\mathbb{1} \quad v_a \quad v_{ab}:=v_a v_b \quad v_{abc}:=\underset{a<b<c}{v_a v_b v_c} \quad \cdots \quad v:=v_0v_1\cdots v_{N-1}
\end{equation}
The $\mathbb{Z}_2$-grading is now clearer, as we can interpret even (odd) elements of $\mathcal{C}(V)$ to be finite linear combinations of products of an even (odd) number of elements of the basis $V$. In particular, the even part $\mathcal{C}_0(V)$ is a sub-algebra of $\mathcal{C}(V)$, while the odd part $\mathcal{C}_1(V)$ is not (it does not contain the unity). They are both $2^{N-1}$-dimensional, making $\mathcal{C}(V)$ $2^N$-dimensional.

\subsection{Pin and spin groups}

\begin{definition}[grading map]
	Consider the Clifford map $i\colon V\rightarrow \mathcal{C}(V)$. By abuse of notation, this map sends $v$ to $v$ inside $\mathcal{C}(V)$. Defining $a:=-i\colon v\rightarrow \mathcal{C}(V):v\mapsto -v$, it has the property that $a(v)a(v)=-\eta(v,v)\mathbb{1}$. We can extend it to the whole $\mathcal{C}(V)$ as $\alpha\colon \mathcal{C}(V)\rightarrow \mathcal{C}(V)$ by restricting it to the identity on even elements, to minus the identity on odd elements. This map is called \text{grading} since it essentially defines the $\mathbb{Z}_2$-grading on $\mathcal{C}(V)$.
\end{definition}
Clearly we have that $\alpha\circ\alpha=\mathbb{1}$, therefore $\alpha$ is invertible and equal to its inverse.

\begin{definition}[transpose]
	Let $S=v_1 v_2\cdots v_k\in \mathcal{C}(V)$. We define the \text{transpose} of $S$ to be
	\begin{equation}
		{}^t(S)={}^t(v_1 v_2\cdots v_k):=v_k\cdots v_2 v_1=:\overline{S}
	\end{equation}
	It is well defined since the generators of the Clifford ideal are invariant under the transposition.
\end{definition}
Furthermore, the transpose preserves the grading, namely ${}^t(\alpha(S))=\alpha(^t(S))$.

\begin{definition}[Pin and Spin groups]
	It is a well known fact that not all elements in $\mathcal{C}(V)$ are invertible. Let us define the multiplicative subgroup $C(V)\subset \mathcal{C}(V)$ of invertible elements and the further subgroup $S(V)\subset C(V)\subset \mathcal{C}(V)$ of invertible elements $S$ whose inverse is proportional to their transpose, namely such that $S \overline{S}\propto \mathbb{1}$. 
	
	We define the \text{Pin group} Pin$(V)$ to be the subgroup of $S(V)$ generated by unit vectors (i.e. such that $v^2=\eta(v,v)=\pm {1}$). The \text{Spin group} Spin$(V)$ is defined to be the intersection of Pin$(V)$ with the even Clifford subalgebra $\mathcal{C}(V)$. 
\end{definition}

Elements in Spin$(V)$ are products of an even number of unit vectors, $S=v_1 v_2\cdots v_{2k}$. In this case it is easy to find the inverse of $S$, as	
\begin{equation}
	S^{-1}=\frac{(-1)^{n}}{|v_1|^2\cdots |v_{n}|^2}v_n\cdots v_2 v_1=\pm {}^tS
\end{equation}

\subsection{The covering of spin groups}
Consider an element $S$ in Pin$(V)$, namely $S=v_1v_2\cdots v_k$ and a vector $w\in V$. By abuse of notation, we denote $w:=i(w)\in \mathcal{C}(V)$. We also denote $w^\parallel:=\frac{\eta(v,w)}{\eta(v,v)}v$ to be the component of $w$ parallel to $v\in V$, assuming $v$ to be a unit vector. The perpendicular component is defined as $w^\perp := w - w^\parallel$

We define a linear map on $V$ depending on the unit vector $v$ as $$l(v)\colon V\rightarrow V:w\mapsto \alpha(v)w v^{-1}$$

\begin{lemma}
	The map $l(v)$ is a reflection of $w$ about the plane orthogonal to the unit vector $v$
\end{lemma}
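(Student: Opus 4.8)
The plan is to compute $l(v)(w)=\alpha(v)\,w\,v^{-1}$ directly from the defining Clifford relation $vw+wv=-2\eta(v,w)\mathbb{1}$ together with the behaviour of the grading map $\alpha$, and then to read off the geometric content by splitting $w$ into its parts parallel and perpendicular to $v$. First I would unwind the three ingredients in the formula. Since $v$ is a unit vector it lies in odd degree, so $\alpha(v)=-v$. From $v^2=-\eta(v,v)\mathbb{1}$ and the fact that $v$ being a unit vector forces $\eta(v,v)=\pm1$, hence $\eta(v,v)^2=1$, the element $v$ is invertible with $v^{-1}=-\eta(v,v)\,v$. Substituting both facts into the definition collapses the expression to
\[
l(v)(w)=(-v)\,w\,(-\eta(v,v)\,v)=\eta(v,v)\,v\,w\,v .
\]

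The heart of the argument is then the simplification of the central product $vwv$. Using the anticommutation relation in the form $vw=-wv-2\eta(v,w)\mathbb{1}$ and then $v^2=-\eta(v,v)\mathbb{1}$, I obtain
\[
vwv=\bigl(-wv-2\eta(v,w)\bigr)v=-w\,v^2-2\eta(v,w)\,v=\eta(v,v)\,w-2\eta(v,w)\,v .
\]
Crucially, the right-hand side is a linear combination of $w$ and $v$, i.e.\ an element of (the image of) $V$, so this step simultaneously shows that $l(v)$ is genuinely a well-defined endomorphism of $V$ rather than landing in a higher-degree part of $\mathcal{C}(V)$. Multiplying by the prefactor $\eta(v,v)$ and using $\eta(v,v)^2=1$ gives
\[
l(v)(w)=w-2\,\frac{\eta(v,w)}{\eta(v,v)}\,v .
\]

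Finally I would match this against the orthogonal decomposition $w=w^\parallel+w^\perp$ with $w^\parallel=\frac{\eta(v,w)}{\eta(v,v)}v$ introduced above, which rewrites the last display as $l(v)(w)=w-2w^\parallel=w^\perp-w^\parallel$: the component along $v$ is negated while the perpendicular component is fixed, so $l(v)$ is exactly the reflection in the hyperplane orthogonal to $v$. There is no deep obstacle in this lemma—the only points requiring genuine care are the sign bookkeeping in $\alpha(v)$ and in $v^{-1}$ (both of which hinge on $\eta(v,v)=\pm1$), and the remark that the Clifford relation is precisely what forces the a priori grade-three product $vwv$ back into $V$, which is what makes $l(v)\colon V\to V$ meaningful in the first place.
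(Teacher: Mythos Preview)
Your proof is correct and follows essentially the same approach as the paper: both reduce $\alpha(v)\,w\,v^{-1}$ to a scalar multiple of $vwv$ and then simplify using the Clifford relation $\{v,w\}=-2\eta(v,w)\mathbb{1}$. The only cosmetic difference is that the paper splits $w=w^\parallel+w^\perp$ before anticommuting, whereas you compute $vwv$ in one step and interpret the result via that splitting afterwards; your additional remark that the computation shows $l(v)$ lands back in $V$ is a nice bonus not made explicit in the paper.
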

\begin{proof}
	Recalling that $uu=-\eta(v,v)\mathbb{1}=-|v|^2\mathbb{1}$, we have 
	\begin{equation}
		\begin{split}
			\alpha(v)w v^{-1}&=-v w v^{-1} = |v|^{-2} v w v = |v|^{-2}\left( u w^\perp v + v w^\parallel v \right) \\
			&= |v|^{-2}(-vv w^\perp - \eta(v,w^\perp)v - |v|^2w^\parallel)\\
			& = w^\perp - w^\parallel
		\end{split}
	\end{equation}
\end{proof}

Furthermore, being $l(v)$ a reflection, it is an element of $O(V)$, the group of orthogonal transformations on $V$.

\begin{definition}[covering]
	For any $S$ in Pin$(V)$, we can extend the definition of $l$ as $l(S)\colon V\rightarrow V$ and $l(S)\in O(V)$,
	\begin{equation}
		l(S)(w):=\alpha(S)w S^{-1}=(l(v_1)\circ l(v_2)\circ \cdots \circ l(v_k))(w)
	\end{equation}
	In particular, $l\colon \text{Pin}(V)\rightarrow O(V)$ is called \text{covering of the Pin group}.
	
	Since reflections are transformations with determinant -1, the composition of an even number of reflections will have determinant +1, therefore when we restrict to Spin$(V)$, we have the \text{covering of the Spin group} $l\colon \text{Spin}(V)\rightarrow SO(V)$.
\end{definition}

It can be checked that the map $l\colon \mathrm{Pin}(V)\rightarrow O(V)$ is a group homomorphism, and so is $l$ when restricted to Spin$(V)$.

\begin{proposition}
	The covering map is not injective but is surjective. Furthermore, there is a short exact sequence
	    \begin{equation}
	        \begin{tikzcd}
        0 \arrow[r] & \mathbb{Z}_2 \arrow[r] & \mathrm{Spin} \arrow[r, "l"] & SO \arrow[r] & 0.
        \end{tikzcd}
	    \end{equation}
\end{proposition}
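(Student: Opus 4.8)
The plan is to establish the three assertions in turn: surjectivity of $l$, the identification of its kernel (which yields non-injectivity), and then the assembly of the short exact sequence. Throughout I write $V$ for the $N$-dimensional space with inner product of signature $(r,s)$ and $l\colon\mathrm{Spin}(V)\to SO(V)$ for the covering map, recalling from the preceding discussion that $l$ is a group homomorphism and that for a unit vector $v$ the map $l(v)$ is the reflection in the hyperplane $v^{\perp}$.

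First I would prove surjectivity. The key external input is the Cartan--Dieudonné theorem: over a field of characteristic not $2$, every element of $O(V)$ is a product of at most $N$ reflections in non-null vectors. Since each reflection has determinant $-1$, an element $g\in SO(V)$ is a product of an \emph{even} number of such reflections, say $g=l(v_1)\circ\cdots\circ l(v_{2k})$ with each $v_i$ a unit vector. By multiplicativity of $l$ this equals $l(v_1\cdots v_{2k})$, and $v_1\cdots v_{2k}$, being an even product of unit vectors, lies in $\mathrm{Spin}(V)$. Hence $l$ maps $\mathrm{Spin}(V)$ onto $SO(V)$.

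Next I would compute the kernel. Suppose $S\in\mathrm{Spin}(V)$ satisfies $l(S)=\mathrm{id}_V$. Since $S$ is even, $\alpha(S)=S$, so the defining relation $\alpha(S)\,w\,S^{-1}=w$ for all $w\in V$ becomes $Sw=wS$; that is, $S$ commutes with every generator and hence is central in $\mathcal{C}(V)$. Expanding $S$ in the monomial basis and using that an even-degree monomial $m$ commutes with $v_a$ precisely when $a\notin\mathrm{supp}(m)$ and anticommutes otherwise, the conditions $Sv_a=v_aS$ for all $a$ force every monomial occurring in $S$ to have empty support, so $S=c\,\mathbb{1}$ is a scalar. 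Finally, writing $S=v_1\cdots v_{2k}$ gives $S\,\overline{S}=\prod_i v_i^{2}=\pm\mathbb{1}$, while $\overline{S}=S$ for a scalar, so $c^{2}=\pm1$; reality of $c$ forces $c=\pm1$. Thus the kernel is exactly $\{\pm\mathbb{1}\}\cong\mathbb{Z}_2$, so $l$ is two-to-one and in particular not injective.

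Assembling these facts, $l$ is a surjective group homomorphism with kernel $\{\pm\mathbb{1}\}\cong\mathbb{Z}_2$, embedded in $\mathrm{Spin}(V)$ by sending the non-trivial class to $-\mathbb{1}$; this is exactly the exactness of
\begin{equation}
\begin{tikzcd}
0 \arrow[r] & \mathbb{Z}_2 \arrow[r] & \mathrm{Spin} \arrow[r, "l"] & SO \arrow[r] & 0.
\end{tikzcd}
\end{equation}
I expect the main obstacle to be the kernel computation, and specifically the clean determination that the only central elements of the \emph{even} subalgebra are the scalars: one must rule out cancellations between distinct monomials rather than testing each monomial in isolation, which is precisely what the support analysis together with the injectivity of right multiplication by $v_a$ on the monomial basis is designed to handle. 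Surjectivity is routine once Cartan--Dieudonné is invoked, and the passage from ``scalar'' to ``$\pm\mathbb{1}$'' is immediate from the transpose normalization defining $\mathrm{Spin}(V)$.
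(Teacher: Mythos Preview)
The paper states this proposition without proof; it is presented as background material in the appendix on Clifford algebras. Your argument is correct and follows the standard approach: Cartan--Dieudonn\'e for surjectivity, and a centrality computation in the Clifford algebra for the kernel. One small omission: having shown that any element of $\ker(l)$ must equal $\pm\mathbb{1}$, you should also verify that $-\mathbb{1}$ actually lies in $\mathrm{Spin}(V)$ (and that $l(-\mathbb{1})=\mathrm{id}$), since otherwise the kernel might a priori be trivial and $l$ injective. This is immediate---for any unit vector $v$ with $\eta(v,v)=1$ one has $v\cdot v=-\mathbb{1}$, an even product of unit vectors, and $l(-\mathbb{1})(w)=(-\mathbb{1})w(-\mathbb{1})^{-1}=w$---but it is needed to conclude that the kernel has order exactly two and hence that the sequence is exact at $\mathbb{Z}_2$.
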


\subsection{Spinor representations}

Let $S$ be a (complex) vector space. A complex representation of the Clifford algebra $\mathcal{C}(V)$ is an algebra homomorphism
    \begin{equation}
        \mathcal{C}(V)\rightarrow \mathrm{End}(S).
    \end{equation}

S is called \text{spinor space}.

We now are interested in the case where $N=\dim V=2m$ is even.
\begin{definition}[Dirac spinor and gamma representation]
    Let $S:=\mathbb{C}^{2^m}$. A \text{Dirac Spinor} is any element of $S$, on which $\mathcal{C}(V)$ acts as the full algebra of $2^m\times 2^m$ complex matrices.
    
    In particular, considering $V:=\mathbb{C}^{N}$, it acts on $S$ via the \text{gamma representation}
    
    \begin{equation}
        \begin{split}
            \gamma\colon & \mathcal{C}(V)\rightarrow \mathrm{End}(S) \\
            & v_i \mapsto \gamma_i:=\gamma(v_i),
        \end{split}
    \end{equation}
    where $\gamma_i$ is the i--th Dirac gamma matrix in $N$ dimensions. In general, for $1\leq j \leq m$
        \begin{align*}
            \gamma_j&:= 1\otimes 1 \otimes \cdots \otimes\underbrace{ \sigma_1}_{\textsf{$j$--th elem.}} \otimes \sigma_3 \otimes \cdots \otimes \sigma_3\\;
            \gamma_{j+m}&:= 1 \otimes 1 \otimes \cdots \otimes \underbrace{\sigma_2}_{\text{$j$-th elem.}}\otimes \sigma_3 \otimes \cdots \otimes \sigma_3.
        \end{align*}
        
    Here $\sigma_i$ are the usual Pauli matrices.    
\end{definition}

\begin{remark}
    Considering the ``volume element'' $v_1 \cdots v_N$ on $\mathcal{C}(V)$, its image under the gamma representation defines 
        \begin{equation}
            \gamma_{2m+1}:=(-i)^m \gamma_1 \cdots \gamma_{2m}=(-i)^m \gamma(v_1 \cdots v_{2m})
        \end{equation}
    In particular it can be proven that $\gamma_{2m+1}$ has eigenvalues $\pm1$, hence there is a splitting into eigenspaces $S=S^+ \oplus S^-$. $S^\pm$ are called \text{spaces of Weyl spinors} of positive/negative chirality.
\end{remark}

\subsection{Lie Algebra of Spin group}

    It is quite easy to see that Lie$({C}(V))=\mathcal{C}(V)$. We are interested in the Lie algebra of Spin$(V)\subset C(V)$.
\begin{proposition}
    Let $V$ be an $N$--dimensional real vector space. Lie$($Spin$(V))$ is a Lie subalgebra of $\mathcal{C}(V)$, given by
        \begin{equation}
            \mathrm{Lie(Spin}(V))=\wedge ^2 V
        \end{equation}
\end{proposition}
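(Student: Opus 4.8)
The plan is to realize $\wedge^2 V$ as a concrete subspace of the even Clifford subalgebra $\mathcal{C}_0(V)$, to show that this subspace is closed under the associative commutator (so that it is a Lie subalgebra of $\mathcal{C}(V)$ in the sense that $\mathrm{Lie}(C(V))=\mathcal{C}(V)$), and finally to identify it with the tangent space at the identity of $\mathrm{Spin}(V)$. First I would fix the embedding: using an orthonormal basis $\{v_a\}$, I send the bivector $v_a\wedge v_b$ to the product $v_av_b\in\mathcal{C}(V)$ for $a<b$. Since $\{v_a,v_b\}=-2\eta_{ab}\mathbb{1}$, for $a\neq b$ one has $v_av_b=\tfrac12[v_a,v_b]$, so the image is exactly the span of the antisymmetric quadratic elements, a subspace of $\mathcal{C}_0(V)$ of dimension $\binom{N}{2}$ which is manifestly isomorphic to $\wedge^2 V$ (and to $\mathfrak{so}(V)$ via $\eta$, as already used earlier in the paper).

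The computational core is the bracket. Using the anticommutation relation repeatedly I would first establish $[v_av_b,v_c]=-2\eta_{bc}\,v_a+2\eta_{ac}\,v_b$, so that the adjoint action of a quadratic element sends $V$ into $V$. From this it follows that $[v_av_b,v_cv_d]$ is again a linear combination of quadratic elements, which shows that the image of $\wedge^2 V$ is closed under the commutator and reproduces the $\mathfrak{so}(N-1,1)$ structure constants. At this point one already has a Lie subalgebra of $\mathcal{C}(V)$ abstractly isomorphic to $\mathfrak{so}(V)$.

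It remains to connect this algebraic picture to the group $\mathrm{Spin}(V)$. Since the covering $l\colon\mathrm{Spin}(V)\to SO(V)$ is a double cover, it is a local diffeomorphism, whence $\dim\mathrm{Lie}(\mathrm{Spin}(V))=\dim\mathfrak{so}(V)=\binom{N}{2}$. I would then show the inclusion $\wedge^2 V\subseteq\mathrm{Lie}(\mathrm{Spin}(V))$ by exponentiation: for orthonormal $v_a,v_b$ with $a\neq b$, the one-parameter curve $t\mapsto\exp(t\,v_av_b)$ is a product of an even number of unit vectors and satisfies $S\overline{S}\propto\mathbb{1}$, hence lies in $\mathrm{Spin}(V)$, and its velocity at $t=0$ is $v_av_b$. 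Comparing dimensions then forces the inclusion to be an equality, giving $\mathrm{Lie}(\mathrm{Spin}(V))=\wedge^2 V$.

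The delicate point — and the step I expect to be the main obstacle — is precisely this last passage from the purely algebraic closure statement to the identification of $\wedge^2 V$ with the tangent space at the identity of $\mathrm{Spin}(V)$ sitting inside the invertibles of $\mathcal{C}(V)$. One must genuinely verify that $\exp(t\,v_av_b)$ lands in $\mathrm{Spin}(V)$ rather than merely in $C(V)$, and pin down the differential of the covering. Here the computation $dl(X)(w)=[X,w]$ for $X\in\wedge^2 V$ and $w\in V$, together with the formula for $[v_av_b,v_c]$ above, confirms that the adjoint action of bivectors on $V$ is exactly the $\mathfrak{so}(V)$-action; this is what closes the argument and makes the isomorphism the concrete one induced by $\eta$, not just an abstract one.
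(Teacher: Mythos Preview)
Your proposal is correct and follows essentially the same approach as the paper: the core computation is $[v_av_b,v_c]=-2\eta_{bc}v_a+2\eta_{ac}v_b$, which the paper writes as $\dot{l}(v_{ab})u=\eta(u,v_a)v_b-\eta(v_b,u)v_a$ with $v_{ab}:=\tfrac14[v_a,v_b]$, and then identifies this with minus the standard $\mathfrak{so}(V)$ generators $(M_{ab})^d_c$, using that the double cover $l$ induces a Lie algebra isomorphism whose differential is the adjoint action $a\mapsto[a,\cdot]$. You add some scaffolding the paper omits---the explicit closure of $\wedge^2 V$ under the commutator and the verification that $\exp(t\,v_av_b)\in\mathrm{Spin}(V)$---but the logical skeleton and the decisive computation are the same.
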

This can be seen by noticing that the double cover $l\colon \text{Spin}(V)\rightarrow SO(V)$ reduces to an isomorphism of Lie algebras (locally their tangent space at the identity is the same)
    \begin{equation}
        \begin{split}
            \dot{l}\colon &\mathfrak{spin}(V) \rightarrow \mathfrak{so}(V)\\
            & a \longmapsto \dot{l}(a)=[a,\cdot],
        \end{split}
    \end{equation}
    where, for all $u\in V$, the $[a,u]\in SO(V)$ is given by
        \begin{equation}
            [a,u]:=\frac{\partial}{\partial t}\big\vert_{t=0}(e^{-ta}u e^{ta}).
        \end{equation}

Now, knowing that $\{ v_a \wedge v_a \}$ is a basis for $\mathfrak{so}(V)$, we compute a basis for $\mathfrak{spin}(V)$. 

Define $v_{ab}:=\frac{1}{4}[v_a,v_b]$, then, for all $u=u^c v_c\in V$ 
    \begin{align*}
        \dot{l}(v_{ab})u&=\frac{1}{4}[[v_a,v_b],u]=\frac{1}{2}[v_a v_b, u]\\
        &=\frac{1}{2}\left( v_a v_b u - u v_a v_b \right)= \frac{1}{2}\left( v_a v_b u - u v_a v_b + v_a u v_b - v_a u v_b \right)\\
        &= \eta(u,v_a)v_b - \eta(v_b,u)v_a=u^c (\delta^d_b \eta_{ac} - \delta_a^d \eta_{bc}) v_d,\\
    \end{align*}
    hence  
        \begin{equation}
            \dot{l}(v_{ab})^d_c = \delta^d_b \eta_{ac} - \delta_a^d \eta_{bc} = -(M_{ab})^d_c
        \end{equation}
    where $M_{ab}$ are the generators of the Lorentz group $SO(V)$ in the fundamental representation. This implies that $-\frac{1}{4}[v_a,v_b]$ defines a basis for $\mathfrak{spin}(V)$.

\section{Technical results and lenghty proofs}\label{a:tec_and_lproofs}	\subsection{Technical results}\label{tec res}
In this appendix we present a collection of results that are useful throughout the paper, especially in the constraint analysis of the theories and in some calculations. We refer to \cite{CCS2020} for the proofs that we leave out.

First we present a precise definition of the brackets $(\cdot,\cdot)$ we employed in the previous chapters.
	\begin{definition}[internal product]
		Let $A,B\in \Omega^{(0,1)}$ and $C,D\in \Omega^{(0,2)}$. Expanding them in the bases $\{e_\mu\}$ and $\{e_\mu e_\nu\}$, we obtain
			\begin{align*}
				&(A,B)=g_{\mu\nu}A^\mu B^\nu,\\
				&(C,D)=g_{\mu\rho}g_{\nu\sigma}C^{\mu\nu}B^{\rho\sigma},
			\end{align*}
		which is a simple consequence of the fact that $(e_\mu,e_\nu)=g_{\mu\nu}$ by definition of the vielbein.
		
		We also notice that
			\begin{align*}
				&(e,A)=(e_\mu dx^\mu,A)=-dx^\mu g_{\mu\nu} A^\nu;\\
				&(e^2,C)=-dx^\mu dx^\nu g_{\mu\rho}g_{\nu\sigma}C^{\rho\sigma}.
			\end{align*}
	\end{definition}

	\begin{lemma}
		For all $N>0$, define
		\begin{equation}
			\left[ \frac{N}{2} \right]:=\begin{cases}
				\frac{N}{2} \quad \text{if $N$ even}\\
				\frac{N-1}{2} \quad \text{if $N $ odd}
			\end{cases}.
		\end{equation}
		Then 
		$$e^n=(-1)^{\left[ \frac{n}{2} \right]}e_{\mu_1}\cdots e_{\mu_n}dx^{\mu_1}\cdots dx^{\mu_n}$$.
	\end{lemma}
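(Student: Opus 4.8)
The plan is to argue by induction on $n$, the only real content being the careful bookkeeping of Koszul signs produced when one moves the $\mathcal V$-valued factors past the scalar one-forms. Recall that $e=e_\mu\,dx^\mu$ with $e_\mu=e(\partial_\mu)$ a section of $\mathcal V$, so that $e_\mu$ carries bidegree $(0,1)$ and $dx^\mu$ bidegree $(1,0)$; in the graded-commutative algebra $\bigoplus_{i,j}\Omega^{(i,j)}$ both have odd total degree and hence anticommute. The statement to prove is precisely that, after sliding all the $e_{\mu_k}$ to the left of all the $dx^{\mu_k}$ (keeping the $e$'s in their given order and the $dx$'s in theirs, so that no signs arise from transposing like factors), the accumulated sign is exactly $(-1)^{\left[\frac n2\right]}$.

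First I would settle the base case $n=1$: here $e^1=e=e_{\mu_1}dx^{\mu_1}$ and $(-1)^{\left[\frac12\right]}=(-1)^0=1$, so nothing has to be moved. For the inductive step I would write $e^{n+1}=e^{n}\wedge e$ and insert the inductive hypothesis for $e^n$; it then remains only to commute the single factor $e_{\mu_{n+1}}$ (total degree $1$) to the left past the block $dx^{\mu_1}\cdots dx^{\mu_n}$ of $n$ one-forms. Each transposition contributes a factor $-1$, so this step yields $(-1)^n$, giving
\begin{equation*}
e^{n+1}=(-1)^{\left[\frac n2\right]+n}\,e_{\mu_1}\cdots e_{\mu_{n+1}}\,dx^{\mu_1}\cdots dx^{\mu_{n+1}}.
\end{equation*}

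The heart of the matter is then the elementary parity identity $\left[\frac n2\right]+n\equiv\left[\frac{n+1}2\right]\pmod 2$, which closes the induction. I would check it by splitting on the parity of $n$: if $n$ is even then $\left[\frac{n+1}2\right]=\left[\frac n2\right]$ and $n\equiv 0$, while if $n$ is odd then $\left[\frac{n+1}2\right]=\left[\frac n2\right]+1$ and $n\equiv 1$, so the two sides agree modulo $2$ in both cases. Equivalently, one may give the non-inductive version and count directly: bringing every $dx^{\mu_i}$ to the right of all the $e_{\mu_j}$ standing to its right costs $\binom n2=\tfrac{n(n-1)}2$ transpositions, and one verifies $\tfrac{n(n-1)}2\equiv\left[\frac n2\right]\pmod 2$ (both sides being $4$-periodic in $n$ with pattern $0,1,1,0$).

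I do not anticipate any genuine obstacle: the only point requiring care is the sign convention, namely that $dx^\mu$ and $e_\nu$ anticommute because each has odd total degree — which is exactly the convention that reproduces the stated factor $(-1)^{\left[\frac n2\right]}$ — together with the matching of the two parities $\tfrac{n(n-1)}2$ and $\left[\frac n2\right]$. Everything else is purely formal manipulation in the bigraded algebra.
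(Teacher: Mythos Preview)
Your proof is correct and follows essentially the same route as the paper: induction on $n$, with the inductive step picking up a factor $(-1)^n$ from commuting $e_{\mu_{n+1}}$ past $dx^{\mu_1}\cdots dx^{\mu_n}$, followed by the parity check $\left[\tfrac n2\right]+n\equiv\left[\tfrac{n+1}2\right]\pmod 2$ via the even/odd case split. Your additional remark on the direct $\binom n2$ count is a nice complement but not in the paper.
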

	\begin{proof}
		We proceed by induction. In the case $n=1$ we have $e^n=e_\mu dx^\mu=(-1)^{\vert\frac{n}{2}\vert}e_\mu dx^\mu$.
		
		Assuming that the identity holds for $n$, we prove that it is true also for $n+1$, in fact
			\begin{align*}
				e^{n+1}&=e^n e_\mu dx^\mu=(-1)^{\left[ \frac{n}{2} \right]}e_{\mu_1}\cdots e_{\mu_n}dx^{\mu_1}\cdots dx^{\mu_n}  e_\mu dx^\mu \\
				&=(-1)^{\left[ \frac{n}{2} \right]+n}e_{\mu_1}\cdots e_{\mu_{n+1}}dx^{\mu_1}\cdots dx^{\mu_{n+1}}\\
				&=(-1)^{\left[ \frac{n+1}{2} \right]}e_{\mu_1}\cdots e_{\mu_{n+1}}dx^{\mu_1}\cdots dx^{\mu_{n+1}}.
			\end{align*}
	
		In fact $(-1)^{\left[ \frac{n+1}{2} \right]}=(-1)^{{\left[ \frac{n}{2} \right]}+n}$:
		\begin{itemize}
			\item if $n=even$, then $(-1)^{\left[ \frac{n+1}{2} \right]}=(-1)^{{\left[ \frac{n}{2} \right]}}=(-1)^{{\left[ \frac{n}{2} \right]}+n}$;
			\item if $n=odd$, then $(-1)^{\left[ \frac{n+1}{2} \right]}=(-1)^{\left[ \frac{n}{2} \right]+1}=(-1)^{\left[ \frac{n}{2} \right]+n}$.
		\end{itemize}
	\end{proof}
\subsection{In the bulk}

\begin{lemma}\label{useful identities}
	Let $C,D\in \Omega^{(0,2)}$ and $A,B\in \Omega^{(0,1)}$. Then the following identities hold
		\begin{enumerate}
			\item $\frac{e^N}{N}(A,B)=(-1)^{|A|+|B|}e^{N-1}(e,A)B$	\label{(1,0)};
			\item $\frac{e^{N-2}}{2(N-2)!}(e^2,C)D=\frac{e^N}{N!}(C,D)$ \label{(2,0)}.
		\end{enumerate}
\end{lemma}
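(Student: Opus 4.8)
The plan is to exploit that both sides of each identity are local and multilinear in the arguments and take values in the line of top internal forms, $\bigwedge^N\mathcal{V}$, tensored with the top de Rham degree on $M$. Hence it suffices to check them fibrewise at an arbitrary point of $M$. There, since $e\colon TM\to\mathcal{V}$ is an isomorphism, I would choose a local orthonormal coframe $\{\theta^a\}$ adapted to the vielbein, so that $e=\sum_a\theta^a v_a$ with $\{v_a\}$ an orthonormal basis of $(V,\eta)$ and $\eta_{ab}=\eta(v_a,v_b)$. In this frame each power $e^k$ is a finite sum over multi-indices weighted by wedges of the $\theta$'s and $v$'s, and both statements collapse to elementary Levi-Civita contraction identities; by multilinearity it is then enough to treat the basis arguments $A=v_a,\ B=v_b$ (respectively $C=v_a\wedge v_b,\ D=v_c\wedge v_d$) and reinstate the grading afterwards.

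First I would record the normalisations $e^N=N!\,(\theta^1\wedge\cdots\wedge\theta^N)\,(v_1\wedge\cdots\wedge v_N)$ and
\[
e^{N-1}=(N-1)!\sum_{k=1}^N \theta_{(k)}\,v_{(k)},
\]
where $\theta_{(k)}$ and $v_{(k)}$ are the ordered wedges omitting the $k$-th factor. For the first identity I would insert $(e,A)=-\sum_c\theta^c A_c$ (with $A_a=\eta_{ab}A^b$) and $B=\sum_b B^b v_b$, and use that $\theta_{(k)}\wedge\theta^c$ and $v_{(k)}\wedge v_b$ are nonzero only for $c=k$ and $b=k$, each producing a sign $(-1)^{N-k}$ whose square is trivial. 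The surviving sum then contracts to $\sum_k A_k B^k=(A,B)$, while the combinatorial ratio $(N-1)!/N!=1/N$ reproduces the prefactor, giving $e^{N-1}(e,A)B=\pm\tfrac1N\,e^N (A,B)$, which is the claim once the overall sign is fixed.

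The second identity follows by the same mechanism applied to pairs of internal indices: writing $e^{N-2}=(N-2)!\sum_{k<l}\theta_{(kl)}v_{(kl)}$ and expanding $(e^2,C)$ via $(v_a\wedge v_b,\,C)$, the double contraction against $D$ picks out $(C,D)$, the two admissible pairings of the index pair account for the factor $2$, and matching the combinatorial factors yields exactly $\frac{e^{N-2}}{2(N-2)!}$ against $\frac{e^N}{N!}$, with no residual sign because the arguments have even internal degree.

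The Levi-Civita bookkeeping above is routine; the genuine difficulty is the sign analysis. Since the coefficients $A,B$ may be odd, graded-commuting them past the $1$-forms $\theta^c$ and the generators $v_a$ produces Koszul signs which, together with the sign convention built into the pairing $(e,\,\cdot\,)=-\,dx^\mu g_{\mu\nu}(\,\cdot\,)^\nu$, must assemble into precisely the factor $(-1)^{|A|+|B|}$ of the first identity (and into no factor in the second). I expect reconciling these signs, rather than the combinatorics, to be the main obstacle, and I would settle it by fixing the parities of the arguments and tracking every transposition explicitly.
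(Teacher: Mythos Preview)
Your approach is essentially the same as the paper's: both reduce the identities to a pointwise coordinate computation in a frame adapted to $e$, expand $e^{N-1}$ (resp.\ $e^{N-2}$) as a sum over omitted indices, and observe that the surviving contractions reproduce $(A,B)$ (resp.\ $(C,D)$) with the correct combinatorial prefactor. The paper organises the sign bookkeeping slightly differently, via the preliminary identity $e^n=(-1)^{[n/2]}e_{\mu_1}\cdots e_{\mu_n}\,dx^{\mu_1}\cdots dx^{\mu_n}$, which turns the Koszul sign tracking into a check that $(-1)^{[\frac{N-1}{2}]+|A|+|B|+N+1+[\frac{N}{2}]}=(-1)^{|A|+|B|}$ and $(-1)^{[\frac{N-2}{2}]+[\frac{N}{2}]+1}=-1$; this is exactly the explicit computation you defer at the end of your proposal, so to complete your argument you should carry it out rather than leave it as an expectation.
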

\begin{proof}
	
	\begin{enumerate}
		\item We use $e_\mu$ as a basis for $\Omega^{(0,1)}$. Then
			\begin{align*}
				e^{N-1}(e,A)B&=(-1)^{\left[ \frac{N-1}{2} \right]+1}e_{\mu_1}\cdots e_{\mu_{N-1}}dx^{\mu_1}\cdots dx^{\mu_{N-1}}dx^\mu g_{\mu\nu}A^\nu  B^\rho e_\rho\\
				&=(-1)^{\left[ \frac{N-1}{2} \right] + |A| + |B| + N +1}e_{\mu_1}\cdots e_{\mu_{N-1}} e_\rho dx^{\mu_1}\cdots dx^{\mu_{N-1}}dx^\mu  g_{\mu\nu}A^\nu B^\rho\\
				&=(-1)^{\left[ \frac{N-1}{2} \right] + |A| + |B| + N +1}(N-1)!e_1\cdots e_N d^Nx g_{\mu\nu}A^\mu B^\nu\\
				&=(-1)^{\left[ \frac{N-1}{2} \right] + |A| + |B| + N +1+ \left[ \frac{N}{2} \right]}\frac{e^N}{N}(A,B) \\
				&=(-1)^{|A|+|B|}\frac{e^N}{N}(A,B);
			\end{align*}
		\item We now use $e_\mu e_\nu$ as a basis for $\Omega^{(0,2)}$. Then
			\begin{align*}
				e^{N-2}(e^2,C)D&=(-1)^{\left[ \frac{N-2}{2} \right]+1} e_{\mu_1}\cdots e_{\mu_{N-2}}e_\rho e_\sigma dx^{\mu_1}\cdots dx^{\mu_{N-2}}dx^{\mu}dx^{\nu} g_{\mu\alpha} g_{\nu\beta}C^{\alpha \beta}D^{\rho\sigma}\\
				&=(-1)^{\left[ \frac{N-2}{2} \right]+1} 2(N-2)!e_1\cdots e_N dx^1\cdots dx^N  g_{\mu\alpha} g_{\nu\beta}C^{\alpha \beta}D^{\mu\nu}\\
				&=(-1)^{\left[ \frac{N-2}{2} \right]+\left[ \frac{N}{2} \right] +1} \frac{2 (N-2)!}{N!}e^N (C,D) \\
				&=(-1)^{2\left[\frac{N}{2}\right]}\frac{2 (N-2)!}{N!}e^N (C,D)\\
				&= \frac{2 (N-2)!}{N!}e^N (C,D).
			\end{align*}
	\end{enumerate}
\end{proof}

\begin{lemma}\label{varro}
	Let $\varrho_n:=(e^n,\cdot)\colon \Omega^{(0,n)}\rightarrow\Omega^{(n,0)}$ and let $e$ be nondegenerate. Then for $N\geq 2$ we have that $\varrho_n$ is bijective for $n=1,2$.
\end{lemma}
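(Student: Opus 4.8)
The plan is to exploit that $\varrho_n = (e^n, \cdot\,)$ is a tensorial (i.e.\ $C^\infty(M)$-linear) operation, so that it is induced fiberwise by a morphism of vector bundles $\wedge^n \mathcal{V} \to \wedge^n T^*M$; consequently $\varrho_n$ is bijective on sections if and only if this morphism is a fiberwise isomorphism. Since $\wedge^n \mathcal{V}$ and $\wedge^n T^*M$ both have rank $\binom{N}{n}$, it suffices to verify fiberwise \emph{injectivity}, and I would carry this out by factoring $\varrho_n$ into manifestly invertible pieces.

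First I would treat $\varrho_1$. Nondegeneracy of the vielbein means that $e\colon TM \to \mathcal{V}$ is a fiberwise isomorphism, with transpose $e^{T}\colon \mathcal{V}^* \to T^*M$ again an isomorphism; moreover $\eta$ induces the musical isomorphism $\eta^\flat\colon \mathcal{V} \to \mathcal{V}^*$, $A \mapsto \eta(A,\cdot\,)$. By the definition of the internal product, $(e,A)$ is the one-form $X \mapsto \eta(e(X),A)$, so that, up to an overall sign, $\varrho_1 = e^{T}\circ \eta^\flat$ is a composition of isomorphisms and hence bijective. This matches the component computation in the definition of the internal product, where $(e,A)$ has matrix $\mp(g_{\mu\nu})$ in the frame $\{e_\nu\}$ of $\mathcal{V}$ and the coordinate coframe $\{dx^\mu\}$; invertibility then follows from nondegeneracy of $g = \eta(e,e)$.

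For $\varrho_2$ I would argue identically, using that $\wedge^2$ of an isomorphism is an isomorphism: writing $\varrho_2 = (\wedge^2 e)^{T}\circ \wedge^2(\eta^\flat)$ exhibits it (again up to sign) as a composition of isomorphisms between $\wedge^2 \mathcal{V}$, $\wedge^2 \mathcal{V}^*$ and $\wedge^2 T^*M$, all of rank $\binom{N}{2}$ (here the hypothesis $N \ge 2$ guarantees these bundles are nonzero). At the level of components this is precisely the statement that $C^{\rho\sigma} \mapsto g_{\mu\rho}\, g_{\nu\sigma}\, C^{\rho\sigma}$ is the second exterior power of the invertible index-lowering map $g$, whose determinant is a nonzero power of $\det g$.

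The computation is essentially bookkeeping; the only genuine content — and the single point where the hypothesis is used — is that nondegeneracy of $e$ forces the induced metric $g$ (equivalently $\eta^\flat$ and $e^{T}$) to be invertible, after which bijectivity of $\varrho_1$ and $\varrho_2$ follows by taking exterior powers. The mild care required is to match the abstract factorization with the sign conventions of the internal product, but these signs do not affect invertibility and can be ignored. I therefore do not anticipate a substantial obstacle, the statement being a direct transcription of the invertibility of $g$ into its first two exterior powers.
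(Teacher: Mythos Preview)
Your proposal is correct and takes essentially the same approach as the paper: the paper's proof is the single sentence ``It is a simple consequence of the fact that the metric $g_{\mu\nu}$ is invertible,'' and your factorization $\varrho_n = (\wedge^n e)^T \circ \wedge^n(\eta^\flat)$ is just an explicit unpacking of that remark.
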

\begin{proof}
	It is a simple consequence of the fact that the metric $g_{\mu\nu}$ is invertible.
\end{proof}

\begin{corollary}\label{corollary bulk 1}
	We then have a corollary of the two previous lemmas. Let $\alpha\in\Omega^{(1,0)}$, $\pi\in\Omega^{(0,1)}$, $\omega\in\Omega^{(2,0)}$ and $C\in \Omega^{(0,2)}$, then
	\begin{enumerate}
		\item $e^{N-1}\alpha B=(-1)^{|B|+1}$$\frac{e^N}{N}\alpha_\mu B^\mu $;
		\item $e^{N-2}\omega D=-\frac{2(N-2)!}{N!}e^N\omega_{\mu\nu}D^{\mu\nu}$.
	\end{enumerate}
\end{corollary}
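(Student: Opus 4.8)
The plan is to read this corollary as a coordinate rewriting of the two identities in Lemma~\ref{useful identities}, where the passage between $\mathcal{V}$-valued forms and ordinary forms is supplied by the bijectivity established in Lemma~\ref{varro}. First I would invoke that lemma: since $e$ is nondegenerate, the maps $\varrho_1=(e,\cdot)\colon\Omega^{(0,1)}\to\Omega^{(1,0)}$ and $\varrho_2=(e^2,\cdot)\colon\Omega^{(0,2)}\to\Omega^{(2,0)}$ are bijective. Hence there exist unique $A\in\Omega^{(0,1)}$ and $C\in\Omega^{(0,2)}$ with $(e,A)=\alpha$ and $(e^2,C)=\omega$. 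This reduces both statements to the already-proven identities applied to $A$ and $C$, followed by a translation of the pairings back into the components of $\alpha$ and $\omega$.

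For the first item, substituting $\alpha=(e,A)$ and applying the first identity of Lemma~\ref{useful identities} gives $e^{N-1}\alpha B = e^{N-1}(e,A)B = (-1)^{|A|+|B|}\frac{e^N}{N}(A,B)$. It then remains to rewrite $(A,B)$ and the parity $|A|$ in terms of $\alpha$. Using the coordinate formula $(e,A)=-dx^\mu g_{\mu\nu}A^\nu$ from the definition of the internal product, I read off $\alpha_\mu=-g_{\mu\nu}A^\nu$, whence $(A,B)=g_{\mu\nu}A^\mu B^\nu=-\alpha_\mu B^\mu$; and since $e$ is even, $|A|=|\alpha|=0$ for the ordinary form $\alpha$, so the prefactor collapses to $(-1)^{|B|+1}$, yielding item~1.

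The second item is handled in exactly the same way: substituting $\omega=(e^2,C)$ into the second identity of Lemma~\ref{useful identities} gives $e^{N-2}\omega D=\frac{2(N-2)!}{N!}e^N(C,D)$, and the coordinate expression $(e^2,C)=-dx^\mu dx^\nu g_{\mu\rho}g_{\nu\sigma}C^{\rho\sigma}$ yields $\omega_{\mu\nu}D^{\mu\nu}=-(C,D)$ after contracting the metric and relabeling indices, which produces the stated minus sign.

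The only point demanding care is the bookkeeping of the $(-1)$ factors: the Koszul signs $(-1)^{|A|+|B|}$ carried over from the graded pairing in the first identity, together with the single extra sign generated when contracting the metric to pass from $(A,B)$ to $\alpha_\mu B^\mu$ (and likewise from $(C,D)$ to $\omega_{\mu\nu}D^{\mu\nu}$). Everything else is a direct appeal to the two preceding lemmas, so I do not expect any genuine obstacle beyond this sign tracking, which is precisely why the statement is recorded as an immediate corollary.
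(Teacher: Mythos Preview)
Your proposal is correct and follows essentially the same route as the paper: invoke Lemma~\ref{varro} to write $\alpha=(e,A)$ and $\omega=(e^2,C)$, apply Lemma~\ref{useful identities}, and then translate the pairings $(A,B)$ and $(C,D)$ back into the components $\alpha_\mu$, $\omega_{\mu\nu}$ via the coordinate formulas for the internal product. The paper's own proof is terser but records exactly the same intermediate relations $A^\nu=(-1)^{1+|\alpha|}g^{\nu\mu}\alpha_\mu$ and $C^{\rho\sigma}=-g^{\rho\mu}g^{\sigma\nu}\omega_{\mu\nu}$ before invoking Lemma~\ref{useful identities}.
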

\begin{proof}

		By Lemma \ref{varro} there have to exist $B\in \Omega^{(0,1)}$ and $D\in\Omega^{(0,2)}$ such that $\alpha=(e,A)$ and $\omega=(e^2,C)$\footnote{Of course $|\alpha|=|A|$ and $|\omega|=|C|$}. In particular, this means 
			\begin{align*}
				&C^{\rho\sigma}=-g^{\rho\mu}g^{\sigma\nu}\omega_{\mu\nu},\\
				&A^{\nu}=(-1)^{1+|\alpha|}g^{\nu\mu}\alpha_\mu.
			\end{align*}
		We then simply apply Lemma \ref{useful identities}

\end{proof}

\begin{lemma}\label{useful id bulk}
	Let $W_k^{(i,j)}$ be such that $W_k^{(i,j)}\colon \Omega^{(i,j)}\rightarrow \Omega^{(i+k,j+k)}:\alpha\mapsto e^k\wedge \alpha$. Then the following propositions are true
	\begin{enumerate}
		\item $W_{N-1}^{(1,0)}$ is injective\label{lem: W_N-1 bijective};
		\item $W_{N-2}^{(2,0)}$ is injective\label{lem: W_N-2 bijective}.
	\end{enumerate}
\end{lemma}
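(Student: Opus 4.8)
The plan is to reduce both injectivity statements to the pairing identities of Corollary \ref{corollary bulk 1}, exploiting that the top form $e^N$ is nowhere vanishing. Since $e$ is nondegenerate we have $e^N/N! = \mathrm{Vol}_g$, a nowhere-zero section of $\Omega^{(N,N)}$; moreover both claims are fiberwise linear, so it will suffice to argue at a single point, where $e_x\colon T_xM \to V$ is a linear isomorphism. The guiding idea is that although $W_{N-1}^{(1,0)}$ and $W_{N-2}^{(2,0)}$ land in spaces that are not of top degree in the $V$-direction, one can detect a nonzero image by wedging with a suitable test form valued in $\mathcal{V}$ and comparing the result against $e^N$.

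For the first claim I would suppose $\alpha \in \Omega^{(1,0)}$ satisfies $e^{N-1}\alpha = 0$. Then for every $B \in \Omega^{(0,1)}$ one also has $e^{N-1}\alpha B = 0$ in $\Omega^{(N,N)}$. By the first identity of Corollary \ref{corollary bulk 1} this equals $(-1)^{|B|+1}\tfrac{e^N}{N}\,\alpha_\mu B^\mu$, and since $e^N$ is a nonvanishing top form we conclude $\alpha_\mu B^\mu = 0$ for all $B$. Because $e_x$ identifies the $V$-index of $B$ with a spacetime index, letting $B$ range over a local frame of $\mathcal{V}$ makes $B^\mu$ run over a basis of $T_xM$, so $\alpha_\mu = 0$ pointwise and hence $\alpha = 0$. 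This yields injectivity of $W_{N-1}^{(1,0)}$.

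The second claim will follow by the same mechanism using the second identity of Corollary \ref{corollary bulk 1}: if $e^{N-2}\omega = 0$ for $\omega \in \Omega^{(2,0)}$, then $e^{N-2}\omega D = -\tfrac{2(N-2)!}{N!}\,e^N\,\omega_{\mu\nu}D^{\mu\nu} = 0$ for every $D \in \Omega^{(0,2)}$, whence $\omega_{\mu\nu}D^{\mu\nu} = 0$. Since $\omega_{\mu\nu}$ and $D^{\mu\nu}$ are both antisymmetric, this contraction is a nondegenerate pairing on the antisymmetric part, so letting $D$ range over a basis of $\wedge^2\mathcal{V}$ (transported to $\wedge^2 T_x^*M$ by $e_x$) forces $\omega = 0$.

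The only genuine point to watch---the main, if mild, obstacle---is the nondegeneracy bookkeeping: one must check that the contractions $\alpha_\mu B^\mu$ and $\omega_{\mu\nu}D^{\mu\nu}$ appearing in Corollary \ref{corollary bulk 1} are genuinely nondegenerate pairings once the test forms sweep out full bases, which is exactly what the isomorphism $e_x$ (equivalently the bijectivity of $\varrho_1,\varrho_2$ from Lemma \ref{varro}) guarantees; everything else is formal. As an alternative route I could phrase the argument by first using Lemma \ref{varro} to write $\alpha = (e,A)$ and $\omega = (e^2,C)$ and then invoking Lemma \ref{useful identities} directly---precisely the path taken in the proof of Corollary \ref{corollary bulk 1} itself.
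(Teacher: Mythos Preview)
Your argument is correct and self-contained: the key ingredients you invoke, Lemma~\ref{varro} and Corollary~\ref{corollary bulk 1}, are proved independently of the present lemma, so there is no circularity, and the pairing argument genuinely forces $\alpha_\mu=0$ and $\omega_{\mu\nu}=0$ once you let $B$ and $D$ run over bases.

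That said, the paper takes a more direct route. It simply works in local coordinates: writing $e^{N-1}\alpha$ out and observing that the $dx$-factors saturate to $d^Nx$, the result lives in $\Omega^{(N,N-1)}$ and its components in a basis of $\wedge^{N-1}V$ are (up to sign) the individual $\alpha_\mu$; vanishing of the whole expression thus forces each $\alpha_\mu=0$. The same antisymmetrisation trick handles $W_{N-2}^{(2,0)}$. Your approach is more conceptual---it detects injectivity by pairing against test forms and using nondegeneracy of $e^N$---while the paper's proof is a one-line coordinate check that avoids invoking the earlier pairing identities at all. Both buy the same conclusion with comparable effort; yours packages the linear algebra through Corollary~\ref{corollary bulk 1}, whereas the paper just unpacks the wedge product directly.
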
	
\begin{proof}
	We prove the statements locally. Choosing as usual a local basis $\{e_\mu\}$ of $\mathcal{V}$, we have that
	\begin{enumerate}
		\item $\Ker{W^{(1,0)}_{N-1}}:=\{\alpha\in\Omega^{(1,0)}\hspace{1mm}\vert\hspace{1mm}e^{N-1}\alpha=0\}$. In particular, this means 
			\begin{equation}
				e_{\mu_1}\cdots e_{\mu_{N-1}}\alpha_\nu dx^{\mu_1}\cdots 	dx^{\mu_{N-1}}dx^{\nu}\propto e_{[1}\cdots e_{N-1}\alpha_{N]}d^Nx=0\quad \Leftrightarrow \quad \alpha_\mu=0,
			\end{equation}
		hence proving that $\Ker{W_{N-1}^{(1,0)}}=\{0\}$;
		\item 
		$\Ker{W_{N-2}^{(2,0)}}:=\{\omega\in\Omega^{(2,0)}\hspace{1mm}\vert\hspace{1mm}e^{N-2}\omega=0\}$. Similarly as before, we find:
			\begin{equation}
				e_{\mu_1}\cdots e_{\mu_{N-2}}\omega_{\nu\rho} dx^{\mu_1}\cdots dx^{\mu_{N-2}}dx^{\nu}dx^{\rho}\propto e_{[1}\cdots e_{N-2}\omega_{N_1, N]}d^Nx=0\quad \Leftrightarrow \quad \omega_{\mu\nu}=0	,
			\end{equation}
			hence proving that $\Ker{W_{N-2}^{(2,0)}}=\{0\}$.
	\end{enumerate}
\end{proof}
\subsection{On the boundary}
We now generalize Lemma \ref{useful identities} to the boundary. We can simply do this by setting $e_N dx^N\leadsto e_n$. Then it is easy to see that
	\begin{equation}\label{sub to bound}
		e^N\leadsto N e_n e^{(N-1)}.
	\end{equation}
Hence we have
	\begin{lemma}\label{useful id boundary}
			Let $C,D\in \Omega^{(0,2)}_\partial$ and $A,B\in \Omega^{(0,1)}_\partial$. Then the follwing identities hold
		\begin{enumerate}
			\item $e_n\frac{e^{N-1}}{(N-1)!}(A,B)=(-1)^{|A|+|B|}\left[ \frac{1}{(N-2)!}e_n e^{N-2}(e,A)B + \frac{e^{N-1}}{(N-1)!}(e_n,A)B \right]$;
			\item $e_n \frac{e^{N-1}}{(N-1)!}(C,D)=\left[ e_n \frac{e^{N-3}}{2(N-3)!}(e^2,C)D + \frac{e^{N-2}}{(N-2)!}(e_n e,C)\right]$.
		\end{enumerate}
	\end{lemma}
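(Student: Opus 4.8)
The plan is to deduce both identities directly from their bulk counterparts in Lemma \ref{useful identities} via the substitution rule $e_N\,dx^N \leadsto e_n$ already recorded in \eqref{sub to bound}. Concretely, I would write the full ($N$-dimensional) vielbein as $\hat e = e + e_N\,dx^N$, where $e$ is the tangential part that restricts to the boundary vielbein and $e_N\,dx^N$ is the normal part. Since $(dx^N)^2 = 0$, every power and pairing of $\hat e$ splits into a purely tangential piece plus a piece linear in $dx^N$; in particular $\hat e^N = N\,(e_N\,dx^N)\,e^{N-1}$, which is exactly the rule $e^N \leadsto N e_n e^{N-1}$. The boundary identity is then obtained by taking the component of the bulk identity proportional to $dx^N$ (equivalently, by performing the substitution) and cancelling this common factor.

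For the first identity I would start from $\frac{e^N}{N!}(A,B) = (-1)^{|A|+|B|}\frac{e^{N-1}}{(N-1)!}(e,A)B$ and expand using $\hat e^{N-1} = e^{N-1} + (N-1)(e_N\,dx^N)e^{N-2}$ together with $(\hat e, A) = (e,A) + dx^N (e_N, A)$. Collecting the terms linear in $dx^N$, exactly two survive: one coming from the normal part of the pairing $(\hat e, A)$, which yields $\frac{e^{N-1}}{(N-1)!}(e_n,A)B$, and one coming from the normal factor inside $\hat e^{N-1}$, which yields $\frac{e_n e^{N-2}}{(N-2)!}(e,A)B$. Matching these against the single $dx^N$-term on the left and removing $dx^N$ reproduces the stated formula. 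The second identity is handled identically, starting from $\frac{e^{N-2}}{2(N-2)!}(e^2,C)D = \frac{e^N}{N!}(C,D)$ and using $(\hat e^2, C) = (e^2,C) + 2\,dx^N(e\,e_n,C)$; the factor of $2$ here is precisely what converts the coefficient $\frac{1}{2(N-2)!}$ into the $\frac{1}{(N-2)!}$ of the mixed term $\frac{e^{N-2}}{(N-2)!}(e_n e,C)D$, while the normal factor in $\hat e^{N-2}$ produces $e_n\frac{e^{N-3}}{2(N-3)!}(e^2,C)D$.

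The main obstacle is purely bookkeeping: one must track the signs generated by commuting $dx^N$ (and the section $e_n$) past the forms $e^{k}$ and past $A,B,C,D$ of definite parity, and check that these signs combine into the overall $(-1)^{|A|+|B|}$ appearing in the first identity and cancel in the second. To keep this under control I would fix, once and for all, the convention for pulling the normal factor to the front, and verify the signs on the explicit frame $(e_a, e_N)$ exactly as in the proof of Lemma \ref{useful identities}; this frame computation also serves as an independent check that ``extracting the $dx^N$-component'' coincides with restricting the bulk identity to $\Sigma$ with $e_n$ as the completing section.
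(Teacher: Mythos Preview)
Your proposal is correct and follows essentially the same approach as the paper: both proofs deduce the boundary identities from the bulk Lemma~\ref{useful identities} by applying the substitution rule \eqref{sub to bound}, expanding each side to separate the tangential and normal contributions, and reading off the resulting coefficients. Your write-up is simply more explicit about the mechanism (writing $\hat e = e + e_N\,dx^N$ and extracting the $dx^N$-component), whereas the paper records the substitutions term by term.
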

	\begin{proof}
		We simply impose the substitution defined in equation \eqref{sub to bound}, noticing also that $(A,B)\longrightarrow (A,B)$ and $(C,D)\rightarrow(C,D)$.
	
		\begin{enumerate}
			\item \begin{align*}
				\frac{e^N}{N!}(A,B)&\leadsto e_n \frac{e^{N-1}}{(N-1)!}(A,B);
			\end{align*}
			\begin{align*}
				\frac{e^{N-1}}{(N-1)!}(e,A)B&\leadsto \frac{N}{(N-1)!}\left\{ \frac{N-1}{N}e_n e^{N-2}(e,A)B + \frac{e^{N-1}}{N}(e_n,A)B \right\}\\
				&\leadsto \frac{1}{(N-2)!}e_n e^{N-2}(e,A)B + \frac{1}{(N-1)!}e^{N-1}(e_n,A)B;
			\end{align*}
			\item \begin{align*}
				\frac{e^N}{N!}(C,D)&\leadsto e_n \frac{e^{N-1}}{(N-1)!}(C,D);
			\end{align*}
			\begin{align*}
					\frac{e^{N-2}}{2(N-2)!}(e^2,C)D&\leadsto N\left\{ \frac{N-2}{N}e_n e^{N-3}(e^2,C)D + \frac{2}{N}e^{N-2}(e_n e,C)D \right\}\\
					&\leadsto \frac{1}{2(N-3)!}e_n e^{N-3}(e^2,C)D + \frac{1}{(N-2)!}e^{N-2}(e_ne,C)D.
			\end{align*}
		\end{enumerate}
	\end{proof}

We recall
\begin{equation}
	\begin{split}
		W_k^{\partial(i,j)}&\colon \Omega^{(i,j)}_\partial\longrightarrow\Omega^{(i+k,j+k)}_\partial\\
		&:\alpha\longmapsto e^k\wedge\alpha.
	\end{split}
\end{equation}
Then we have 

\begin{lemma} \label{lem:We_boundary}
	The maps $W_{k}^{ \partial, (i,j)}$ have the following properties for $N \geq 4$:
	
	\begin{enumerate}
		\item $W_{N-3}^{\partial, (2,1)}$ is surjective; \label{lem:Wep21}
		\item $W_{N-3}^{\partial, (1,1)}$ is injective; \label{lem:Wep11}
		\item $W_{N-3}^{\partial, (1,2)}$ is surjective; \label{lem:kerWe12} 
		\item $W_{k}^{\partial(0,0)}$ is injective; \label{lem: ker We0}
		\item $\dim \mathrm{Ker} W_{N-3}^{\partial, (1,2)} = \dim \mathrm{Ker} W_{N-3}^{\partial, (2,1)}$;\label{lem:kernel12-21}
		\item $W_{N-4}^{\partial, (2,1)}$ is injective. ($N \geq 5$)\label{lem:We5p21}	;
		\item $W_{N-2}^{\partial, (1,0)}$ is injective. \label{lem: We7}
	\end{enumerate}

\end{lemma}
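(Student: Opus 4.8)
The plan is to reduce every assertion to pointwise linear algebra and then exploit the nondegeneracy of $e$ to choose an adapted frame. Since each $W_k^{\partial,(i,j)}$ is a $C^\infty(\Sigma)$-linear bundle map, it is injective (resp.\ surjective) as a map of sections if and only if it is so fiberwise; so I would fix a point of $\Sigma$, write $V:=\mathcal{V}_x$, and use that nondegeneracy of $e$ makes the vectors $e_\mu=e(\partial_\mu)$, $\mu=1,\dots,N-1$, linearly independent in $V$. Completing them with $e_n$ to a basis $\{v_1,\dots,v_{N-1},v_n\}$ of $V$, one may act by $GL(V)$ so that $e_\mu=v_\mu$. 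In this frame $e=\sum_{\mu=1}^{N-1} dx^\mu\otimes v_\mu$ is ``diagonal'', wedging with $e^k$ becomes an explicit combinatorial operation that only ever touches the subspace $V'=\operatorname{span}(v_1,\dots,v_{N-1})=e(T_x\Sigma)$, and the direction $v_n$ is inert. The decomposition $\bigwedge^j V=\bigwedge^j V'\oplus v_n\wedge\bigwedge^{j-1}V'$ organizes the bookkeeping throughout.

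With this setup, items (2), (4), (6), (7) are injectivity statements and items (1), (3) surjectivity statements, all of which I would prove by expanding a general $\alpha\in\Omega^{(i,j)}_\partial$ in the adapted basis and reading off $e^k\wedge\alpha$ coefficientwise. For instance (4) is immediate: for a scalar $f$ the section $fe^k$ has leading component $f\,v_1\cdots v_k\otimes dx^1\cdots dx^k\neq0$ whenever $k\le N-1$, so $fe^k=0$ forces $f=0$. The remaining injectivity claims follow the same pattern: the explicit coefficient formulas show that the components of $\alpha$ can be recovered from those of $e^k\wedge\alpha$, because the diagonal pairing of form- and $V$-indices fills all available antisymmetric slots; dually, the surjectivity claims (1) and (3) hold because every target monomial is manifestly in the image. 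Since these maps go between spaces of different dimensions, the point to verify in each case is merely that the relevant degree inequality places us on the injective (small domain) or surjective (small codomain) side.

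For item (5) I would avoid any further explicit computation and argue by rank--nullity. Using the surjectivity of $W_{N-3}^{\partial,(1,2)}$ (item (3)) and of $W_{N-3}^{\partial,(2,1)}$ (item (1)) together with $\dim\Omega^{(i,j)}_\partial=\binom{N-1}{i}\binom{N}{j}$, one gets
\begin{align*}
\dim\Ker{W_{N-3}^{\partial,(1,2)}}&=\binom{N-1}{1}\binom{N}{2}-\binom{N-1}{N-2}\binom{N}{N-1}=\frac{N(N-1)(N-3)}{2},\\
\dim\Ker{W_{N-3}^{\partial,(2,1)}}&=\binom{N-1}{2}\binom{N}{1}-\binom{N-1}{N-1}\binom{N}{N-2}=\frac{N(N-1)(N-3)}{2},
\end{align*}
so the two kernels have equal dimension. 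Thus (5) reduces to the elementary binomial identity displayed above, verified directly.

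The main obstacle is not conceptual but combinatorial: carrying out the coefficientwise analysis of $e^k\wedge\alpha$ in parts (1)--(3), (6), (7) while keeping careful track of the inert $v_n$-direction and of the signs produced by reordering wedge factors. A cleaner alternative, worth pursuing if the direct bookkeeping becomes unwieldy, is to note that after the identification $T_x^*\Sigma\cong V'$ furnished by $e$, the operator $\alpha\mapsto e\wedge\alpha$ is a Lefschetz-type operator attached to the canonical diagonal element of $V'^{*}\otimes V'$, so that the injectivity and surjectivity thresholds are dictated by the $\mathfrak{sl}_2$-representation theory underlying hard Lefschetz; this would yield items (1)--(4), (6), (7) uniformly from a single degree criterion. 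Either way, once (1) and (3) are established, (5) comes for free.
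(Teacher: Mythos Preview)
Your approach is sound and, in fact, more self-contained than the paper's own proof. The paper defers items (1)--(3), (5), (6) to the reference \cite{CCS2020}, gives the one-line argument for (4) that $e^k$ acts on a scalar by multiplication, and proves (7) by exactly the kind of adapted-coordinate expansion you describe. So for the two items the paper actually argues, your strategy matches theirs.

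Where you differ is in proposing to handle (1)--(3), (5), (6) directly rather than by citation. Your pointwise reduction via an adapted frame is the natural route and would work; it is essentially what the cited reference does. Your rank--nullity derivation of (5) from (1) and (3) is a nice shortcut---it turns a potentially tedious kernel description into a binomial identity---and is cleaner than separately computing both kernels. The Lefschetz/$\mathfrak{sl}_2$ viewpoint you mention at the end is a genuinely more conceptual alternative that would unify the injectivity/surjectivity thresholds, though it is more machinery than the situation strictly demands.
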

\begin{proof}
	The proofs of the statements $(1)-(6)$ can be found in \cite{CCS2020} and \cite{CCS2020}, with the exception of $(4)$, which is easily seen since any $\phi\in\Omega^{(0,0)}$ is a function, hence $e^k$ just acts as a multiplication.  We just need to prove \ref{lem: We7}. Considering $A\in\Omega^{(1,0)}_\partial$, then	
		\begin{align*}
			e^{N-2}A&=e_{\mu_1}\cdots e_{\mu_{N-2}} A_\rho dx^{\mu_1}\cdots dx^{\mu_{N-2}} dx^{\rho}=0
		\end{align*}
	is satisfied if and only if $A_\rho=0$ for all $\rho=1,\cdots,N$, hence showing that $A=0$.
\end{proof}
\begin{lemma}\label{lem:Omega2,1_d4}
	Let $\alpha \in \Omega^{2,1}_\partial$. Then 
	\begin{align}\label{ConditionforOmega21_d4}
		\alpha=0 \qquad \Longleftrightarrow  \qquad \begin{cases}
			e^{N-3}\alpha =0 \\
			e_n e^{N-4}\alpha \in \Ima W_{N-3}^{\partial, (1,1)}
		\end{cases}.
	\end{align}
\end{lemma}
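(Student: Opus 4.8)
The forward implication is immediate, so I focus on the converse. Since the statement is fibrewise over $\Sigma$, the plan is to fix a point and argue entirely in the fibre, where it becomes finite-dimensional linear algebra. Using the nondegeneracy of $g^\partial$ I choose a frame adapted to the splitting $\mathcal{V}|_\Sigma = U \oplus \langle e_n\rangle$, where $U := \langle e_1,\dots,e_{N-1}\rangle$ is the image of $T\Sigma$ under $e$, so that $e = \sum_\mu dx^\mu e_\mu$ with $e|_U\colon T\Sigma \to U$ an isomorphism. This induces $\wedge^k \mathcal{V} = \wedge^k U \oplus (e_n \wedge \wedge^{k-1}U)$, and I write $\alpha = \alpha_0 + e_n\,\alpha_1$ with $\alpha_0 \in \Omega^2(\Sigma)\otimes U$ and $\alpha_1 \in \Omega^2(\Sigma)$ scalar.

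First I would exploit the first hypothesis. Because $e$ is tangential, the term $e^{N-3}\alpha_0$ lands in $\Omega^{N-1}(\Sigma)\otimes\wedge^{N-2}U$ while $e^{N-3}(e_n\alpha_1)=e_n\,e^{N-3}\alpha_1$ lands in the complementary summand $\Omega^{N-1}(\Sigma)\otimes(e_n\wedge\wedge^{N-3}U)$; hence $e^{N-3}\alpha=0$ splits into the two independent equations $e^{N-3}\alpha_0 = 0$ and $e^{N-3}\alpha_1 = 0$. For the second of these, $\mu\mapsto e^{N-3}\mu$ is the Hodge-type map $\Omega^2(\Sigma)\to\Omega^{N-1}(\Sigma)\otimes\wedge^{N-3}U$ sending a $2$-form to the $(N-3)$-multivector supported on the complementary indices; nondegeneracy makes this a bijection, both spaces having dimension $\binom{N-1}{2}$, so I conclude $\alpha_1=0$.

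Next I bring in the second hypothesis. Since $e_n\wedge e_n=0$, the normal part drops out and $e_n e^{N-4}\alpha = e_n e^{N-4}\alpha_0$, valued in $e_n\wedge\wedge^{N-3}U$. Writing a putative preimage as $\gamma = \gamma_0 + e_n\gamma_1\in\Omega^{1,1}_\partial$ and comparing the $e_n$-components of $e_n e^{N-4}\alpha_0 = e^{N-3}\gamma$, the purely tangential part only constrains $\gamma_0$ (which I discard), while the $e_n$-part gives $e_n e^{N-4}\alpha_0 = e_n e^{N-3}\gamma_1$. Cancelling the injective factor $e_n\wedge(\cdot)$ onto its summand yields $e^{N-4}\alpha_0 = e^{N-3}\gamma_1 = e^{N-4}(e\gamma_1)$. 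Injectivity of $W_{N-4}^{\partial,(2,1)}$ (Lemma \ref{lem:We_boundary}.(\ref{lem:We5p21}) for $N\ge 5$, the identity for $N=4$) then forces $\alpha_0 = e\gamma_1$ with $\gamma_1\in\Omega^1(\Sigma)$ scalar. Feeding this back into $e^{N-3}\alpha_0=0$ gives $0 = e^{N-2}\gamma_1 = W_{N-2}^{\partial,(1,0)}\gamma_1$, so injectivity of $W_{N-2}^{\partial,(1,0)}$ (Lemma \ref{lem:We_boundary}.(\ref{lem: We7})) gives $\gamma_1=0$, whence $\alpha_0=0$; together with $\alpha_1=0$ this is $\alpha=0$.

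The routine index bookkeeping aside, the real engine — and the step needing the most care — is keeping track of which summand of $\wedge^\bullet\mathcal{V}$ each wedge product occupies: it is exactly the separation into a purely tangential piece (seen by $e^{N-3}\wedge\cdot$) and an $e_n$-piece (seen by $e^{N-4}\wedge\cdot$) that decouples the two hypotheses and lets the injectivity lemmas act one at a time. As a sanity check I would record that $\dim \Ker{W_{N-3}^{\partial,(2,1)}} = \tfrac{N(N-1)(N-3)}{2}$ coincides with the codimension of $\Ima W_{N-3}^{\partial,(1,1)}$ in $\Omega^{N-2,N-2}_\partial$, confirming that the two conditions are exactly enough to cut out $\{0\}$ and that the argument loses nothing.
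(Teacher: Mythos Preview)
Your proof is correct. The paper itself does not supply an argument for this lemma: it simply states that the proof is to be found in \cite{CCS2020}. So your write-up is a genuine, self-contained addition rather than a rediscovery of something in the text.

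The approach you take---splitting $\mathcal V|_\Sigma = U\oplus\langle e_n\rangle$ and decomposing $\alpha=\alpha_0+e_n\alpha_1$ accordingly, then treating the two hypotheses as constraints on the tangential and normal parts separately---is the natural one and is exactly how the referenced proof in \cite{CCS2020} proceeds. Your use of the injectivity statements from Lemma~\ref{lem:We_boundary} (items~\ref{lem:We5p21} and~\ref{lem: We7}) to close the argument is clean and matches how those lemmas are meant to be consumed. The dimension count at the end is a nice confirmation that nothing has been lost.

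One small remark: you invoke nondegeneracy of $g^\partial$ to set up the frame, but in fact the decomposition $\mathcal V|_\Sigma=U\oplus\langle e_n\rangle$ only needs that $e_1,\dots,e_{N-1},e_n$ form a basis, which is already part of the standing hypotheses. Nondegeneracy of the boundary metric is genuinely required for the companion Lemma~\ref{lem:Omega2,2_d4}, but not here; you may wish to remove that clause to keep the hypotheses sharp.
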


\begin{lemma}\label{lem:Omega2,2_d4}
	Let $\beta \in \Omega^{N-2,N-2}_\partial$. If $g^\partial$ is nondegenerate, there exist a unique $v \in \mathrm{Ker} W_{N-3}^{\partial, (1,2)}$ and a unique $\gamma \in \Omega_{\partial}^{1,1}$ such that 
	\begin{align*}
		\beta = e^{N-3} \gamma + e_n e^{N-4} [v, e].
	\end{align*}
\end{lemma}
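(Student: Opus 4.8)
The plan is to read the asserted decomposition as the statement that one $C^\infty(\Sigma)$-linear bundle map is an isomorphism, and to prove bijectivity by a dimension count plus injectivity. Fixing the background fields $e$ and $e_n$, consider
$$\Phi\colon \Omega^{1,1}_\partial\oplus\Ker{W_{N-3}^{\partial,(1,2)}}\longrightarrow\Omega^{N-2,N-2}_\partial,\qquad \Phi(\gamma,v)=e^{N-3}\gamma+e_n e^{N-4}[v,e].$$
No derivatives of $\gamma,v$ occur, so $\Phi$ is fibrewise linear and the existence and uniqueness of the decomposition are together equivalent to $\Phi$ being a fibrewise isomorphism, which may be checked pointwise on $\Sigma$. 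I would first record the dimension count: injectivity of $W_{N-3}^{\partial,(1,1)}$ (Lemma \ref{lem:Wep11}) makes the first summand contribute $\dim\Omega^{1,1}_\partial=(N-1)N$, while surjectivity of $W_{N-3}^{\partial,(1,2)}$ (Lemma \ref{lem:kerWe12}) gives $\dim\Ker{W_{N-3}^{\partial,(1,2)}}=\dim\Omega^{1,2}_\partial-\dim\Omega^{N-2,N-1}_\partial=(N-1)N\tfrac{N-3}{2}$. Their sum is $(N-1)N\tfrac{N-1}{2}=\dim\Omega^{N-2,N-2}_\partial$, so domain and codomain have equal finite dimension and it suffices to prove $\Phi$ injective.

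For injectivity, suppose $\Phi(\gamma,v)=0$ with $e^{N-3}v=0$. The key step is the algebraic identity $e^{N-3}v=0\Rightarrow e^{N-3}[v,e]=0$, which I would prove by a short index computation. In the main case $N=4$ one has $[v,e]^{c}=2\,v^{cd}\wedge e_d$, so $e[v,e]=2\,e^{a}\wedge v^{cd}\wedge e_d\;u_a\wedge u_c$; wedging the constraint $e\wedge v=0$ (equivalently $e^{a}\wedge v^{cd}+e^{c}\wedge v^{da}+e^{d}\wedge v^{ac}=0$) with $e_d$ and summing over $d$, the term $\sum_d e^{d}\wedge e_d\wedge v^{ac}$ drops out because $\eta$ is symmetric while $e^{d}\wedge e^{d'}$ is antisymmetric, leaving $\sum_d e^{a}\wedge v^{cd}\wedge e_d=\sum_d e^{c}\wedge v^{ad}\wedge e_d$; the coefficient is thus symmetric in $a,c$ and is annihilated by the antisymmetric $u_a\wedge u_c$, so $e[v,e]=0$. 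The same manipulation, carried out with $N-3$ factors of $e$, handles general $N$. Consequently $\alpha:=[v,e]\in\Ker{W_{N-3}^{\partial,(2,1)}}$, and from $\Phi(\gamma,v)=0$ we read $e_n e^{N-4}\alpha=-e^{N-3}\gamma\in\Ima W_{N-3}^{\partial,(1,1)}$. These are precisely the two hypotheses of Lemma \ref{lem:Omega2,1_d4}, whence $[v,e]=0$; then $e^{N-3}\gamma=0$ forces $\gamma=0$ by Lemma \ref{lem:Wep11}.

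It remains to deduce $v=0$ from $[v,e]=0$ and $e^{N-3}v=0$, and this is where nondegeneracy of $g^\partial$ is essential and where the main difficulty lies. By the identity just proved, $v\mapsto[v,e]$ restricts to a linear map $B\colon\Ker{W_{N-3}^{\partial,(1,2)}}\to\Ker{W_{N-3}^{\partial,(2,1)}}$ between spaces of equal dimension (Lemma \ref{lem:kernel12-21}), so its injectivity is equivalent to its surjectivity, and the needed conclusion $v=0$ is exactly that $B$ is injective. I would prove this pointwise in an adapted orthonormal frame in which the tangential components of $e$ form a coframe of $\Sigma$ and $e_n$ completes the basis: there $[v,e]=0$ reads $v^{cd}\wedge e_d=0$, and nondegeneracy of $g^\partial$ lets one solve these equations component by component (the boundary avatar of the uniqueness of the torsion-free connection), forcing every $v\in\Ker{W_{N-3}^{\partial,(1,2)}}$ with $[v,e]=0$ to vanish. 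With $B$ injective we get $v=0$, so $\Phi$ is injective and, by the dimension count, an isomorphism; the preimage of $\beta$ is then the unique pair $(\gamma,v)$. The hardest part is establishing that $B$ is an isomorphism, i.e.\ that $[v,e]=0$ has no nonzero solution inside $\Ker{W_{N-3}^{\partial,(1,2)}}$; everything else is either dimension bookkeeping or a direct appeal to Lemma \ref{lem:Omega2,1_d4}.
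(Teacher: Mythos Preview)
The paper does not prove this lemma itself but defers to \cite{CCS2020}, so there is no in-paper argument to compare against directly. Your approach---recasting the statement as bijectivity of the fibrewise linear map $\Phi$, matching dimensions, and then proving injectivity---is sound, and the dimension count is correct. The reduction of injectivity to Lemma~\ref{lem:Omega2,1_d4} via the identity $e^{N-3}v=0\Rightarrow e^{N-3}[v,e]=0$ is valid: your $N=4$ index argument is correct (the step $\sum_d e^d\wedge e_d=\sum_{d,d'}\eta_{dd'}\,e^d\wedge e^{d'}=0$ by symmetry/antisymmetry holds as stated, so the coefficient of $u_a\wedge u_c$ is symmetric in $a,c$ and is killed by the wedge), though the general-$N$ extension deserves a few more lines than ``the same manipulation''. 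Your appeal to Lemma~\ref{lem:Omega2,1_d4} is not circular: that lemma is a simpler characterization of zero in $\Omega^{2,1}_\partial$, is stated without the nondegeneracy hypothesis, and admits a direct component proof independent of the present statement.

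You correctly isolate the injectivity of $B\colon v\mapsto[v,e]$ on $\Ker W_{N-3}^{\partial,(1,2)}$ as the crux and the only place where nondegeneracy of $g^\partial$ is genuinely needed; this is also the one step you only sketch. Writing it out in an adapted frame (so that $e$ is the standard coframe on $\Sigma$ and $e_n$ completes it) is indeed the standard route, but the equations $[v,e]=0$ together with $e^{N-3}v=0$ have to be solved explicitly, and that is where the work lies.
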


\begin{proof}
	The proofs of the previous two lemmas are found in \cite{CCS2020}
\end{proof}

\subsection{Proofs of Lemmas \ref{lem:technical_for_scalar} and \ref{lem:technical_for_YM}}

\begin{proof}[Proof of Lemma \ref{lem:technical_for_scalar}]
    Let $N=4$. In this proof we fix as a basis of $\mathcal{V}$ the set $\epsilon_n, e_{\mu}$, $\mu=1,2,3$ where $\epsilon_n$ is a vector completing the basis. 
    
    With this choice, consider the kernel of the map ${W_3^{\partial(0,1)}}$. It is defined by the equation
    \begin{align*}
        X^a e_a e_{\mu_1}e_{\mu_2}e_{\mu_3} dx^{\mu_1}dx^{\mu_2}dx^{\mu_3}=0
    \end{align*}
    which implies $X^n=0.$ Hence we get that the components in the kernel are $X^1$, $X^2$ and $X^3$.
    
    Let us now consider the map $A_e$. Let $p\in\Ker{W_3^{\partial(0,1)}}$ be generated by $p=p^1e_1+p^2e_2+p^3e_3$. Then we have
    \begin{align*}
        (e, p)_{\mu}= e_{\mu}^a p^b \eta_{ab}= e_{\mu}^a e_{\nu}^b p^{\nu} \eta_{ab}= g^{\partial}_{\mu\nu}p^{\nu}
    \end{align*}
    where we used that $e_{\nu}^b=\delta_{\nu}^b$ in our basis. Now, if $g^\partial$ is nondegenerate, using normal geodesic coordinates, we obtain
    \begin{align*}
        (e, p)_{\mu} = \pm p^{\mu}
    \end{align*}
    depending on the sign of the elements in the diagonal of the diagonalized boundary metric. This shows that the map $A_e$ is injective and surjective.
\end{proof}

\begin{proof}[Proof of Lemma \ref{lem:technical_for_YM}]
    Let $N=4$. In this proof we fix as a basis of $\mathcal{V}$ the set $\epsilon_n, e_{\mu}$, $\mu=1,2,3$ where $\epsilon_n$ is a vector completing the basis. 
    
    With this choice, consider the kernel of the map ${W_2^{\partial(0,1)}}$. It is defined by the equationa
    \begin{align*}
        X^{ab} e_a e_b e_{\mu_1}e_{\mu_2} dx^{\mu_1}dx^{\mu_2}=0.
    \end{align*}
    which imply $X^{na}=0$ for $a=1,2,3$. Hence we get that the components in the kernel are $X^{ab}$ for $a,b=1,2,3$.
    
    Let us now consider the map $\phi_e$. Let $b\in\Ker{W_2^{\partial(0,1)}}\otimes\mathfrak{g}$ be generated by $b=b^{ab}e_ae_b$ for $a,b=1,2,3$. Then we have
    \begin{align*}
        \frac{1}{2}(e^2, b)_{\mu\nu}= e_{\mu}^a e_{\nu}^b b^{cd} \eta_{ac}\eta_{bd}=e_{\mu}^a e_{\nu}^b e_{\rho}^c e_{\sigma}^d b^{\rho\sigma} \eta_{ac}\eta_{bd}= g^{\partial}_{\mu\rho}g^{\partial}_{\nu\sigma}b^{\rho\sigma}
    \end{align*}
    where we used that $e_{\nu}^b=\delta_{\nu}^b$ in our basis. Now, if $g^\partial$ is nondegenerate, using normal geodesic coordinates, we obtain
    \begin{align*}
        \frac{1}{2}(e^2, b)_{\mu\nu} = \pm b^{\mu\nu}
    \end{align*}
    depending on the sign of the elements in the diagonal of the diagonalized boundary metric. This shows that the map $\phi_e$ is injective and surjective.
\end{proof}

\subsection{Proof of gamma matrices identities}

\begin{itemize}
    \item \texorpdfstring{\eqref{identity on iotagammac}}{triangle down}\label{proof id iotagammac}:
    
    Recall $c\in \Omega^{(0,2)}[1]$. In general if we let $c$ be any section of $\wedge^2 \mathcal{V}$, then
    \begin{align*}
        j_{\gamma} j_{\gamma} c \gamma &= \gamma^a \gamma^b \gamma ^c \iota_{v_a}\iota_{v_b}c v_c=(-1)^{|c|}\gamma^a \gamma^b \gamma ^c v_c \iota_{v_a}\iota_{v_b}c\\
        &= (-1)^{|c|}\left[ -\gamma^a \gamma^c\gamma^b - 2 \gamma^a \eta^{bc} \right]v_c \iota_{v_a}\iota_{v_b}c\\
        &= (-1)^{|c|}\left[ \gamma^c \gamma^a\gamma^b + 2\gamma^b \eta^{ac} - 2 \gamma^a \eta^{bc} \right]v_c \iota_{v_a}\iota_{v_b}c\\
        &=(-1)^{|c|}\gamma j_{\gamma}j_{\gamma} c + 2 (-1)^|c|v_c\left[ \gamma^b \eta^{ac} - \gamma^a \eta^{bc} \right]\eta_{ad}\eta_{bf}c^{df}\\
        &=(-1)^{|c|}\gamma j_{\gamma}j_{\gamma} c + 2 (-1)^|c|v_c\left[ \gamma^b \delta^c_d\eta_{bf} - \gamma^a \delta^c_f\eta_{ad} \right]c^{df}\\
        &=(-1)^{|c|}\gamma j_{\gamma}j_{\gamma} c - 2 \gamma_d(c^{dc}-c^{cd}) v_c\\
        &=(-1)^{|c|}\gamma j_{\gamma}j_{\gamma} c - 4 j_{\gamma} c,
    \end{align*}
    hence, in general
        \begin{equation*}
            [j_{\gamma} j_{\gamma} c, \gamma]= 4 [c,\gamma]:
        \end{equation*}

    \item eq. \eqref{jgamma omega e}:
    Consider any $\alpha \in \Omega^{(i,2)}$ of arbitrary parity. We rely on the fact that $e^{N-1}\gamma j_\gamma j_\gamma \alpha $ is a top form in $V$ and that $j_{v_a}$ respects the (graded) Leibniz rule for all $v_a$'s forming a basis of $V$
    \begin{align*}
            e^{N-1} \gamma j_\gamma j_\gamma \alpha &= \gamma^a \gamma^b \gamma^c e^{N-1} v_a j_{v_b}j_{v_c} \alpha \\
            &=  \gamma^a \gamma^b \gamma^c \left[ j_{v_b}(e^{N-1}) v_a + e^{N-1} \eta_{ab} \right] j_{v_c}\delta \alpha \\
            &= - \gamma^a \gamma^b \gamma^c \left[ v_a j_{v_c}j_{v_b} e^{N-1} - j_{v_b}e^{N-1} \eta_{ac} + j_{v_c}e^{N-1} \eta_{ab} \right]\alpha \\ 
            &= \left[ \gamma j_\gamma j_\gamma e^{N-1} - 2 \eta_{ab} \gamma^a \gamma^b \gamma^c j_{v_c}e^{N-1}  \right] \alpha \\
            &= \left[ \gamma j_\gamma j_\gamma e^{N-1} + 2N j_\gamma e^{N-1} \right]\alpha,
        \end{align*}
    analogously we find 
    \begin{align*}
        e^{N-1}  j_\gamma j_\gamma \alpha \gamma &= (-1)^{|\alpha|}\gamma^b \gamma^c \gamma^a e^{N-1} v_a j_{v_b}j_{v_c} \alpha \\
        &= (-1)^{|\alpha|} \left[ \gamma j_\gamma j_\gamma e^{N-1} - 2N j_\gamma e^{N-1} \right]\alpha,
    \end{align*}
    hence
    \begin{equation}\label{jgamma alpha e }
        \begin{split}
            &e^{N-1} \left[ \gamma j_\gamma j_\gamma \alpha + (-1)^{|\alpha|}j_\gamma j_\gamma \alpha \gamma \right] = \left[ \gamma j_\gamma j_\gamma e^{N-1} + j_\gamma j_\gamma e^{N-1} \gamma \right]\alpha,\\
            & e^{N-1} \left[ \bar{\psi} \gamma [\alpha,\psi] - [\alpha, \bar{\psi}]\gamma \psi \right]=\left[ \bar{\psi} \gamma [e^{N-1},\psi] - [e^{N-1}, \bar{\psi}]\gamma \psi \right]\alpha
        \end{split}
    \end{equation}
\end{itemize}

\subsection{Lenghty proofs of Section \ref{BFV YM}}
In this section we show explicitly equation	
\begin{eqnarray}
	&& \{S_0^1,S_1^0\}_f + \{S_0^0,S_1^1\}_f + \{S_0^1,S_1^1\}_f + \{S_1^0,S_1^1\}_g + \frac{1}{2}\{S_1^1,S_1^1\}_g=0 \label{EM BFV B.2}.
\end{eqnarray} 
\begin{equation}\label{S01S10f}
	\begin{split}
		&\{S_0^1,S_1^0\}_f= \iota_{Q_0^1}\iota_{Q_1^0}\varpi_f \\
		& = \iota_{Q_0^1} \int_{\Sigma} \iota_{Q_1^0}(e\delta e \delta \omega) + \Tr\iota_{Q_1^0}(\delta \rho \delta A)\\
		& = \int_\Sigma e Q_{1e}^0 Q_{0\omega}^1 \propto \int_{\Sigma} \lambda^2 =0, 
	\end{split}
\end{equation}
because $Q_{1\rho}^0=0$, $Q_{1A}^0=0$ and both  $ Q_{1e}^0$ and $e Q_{0\omega}^1$ are proportional to $\lambda$.
\begin{equation}\label{S00S11f}
	\begin{split}
		&\{S_0^0,S_1^1\}_f = \\
		& = \Tr\int_{\Sigma} \unl{[c,[c,\lambda e_n]^{(b)}e_b]^{(a)}(A-A_0)_{a}\mu^{\dagger}}{0011.1} - \unl{[c,\lambda e_n]^{(b)}\mathrm{L}^{\omega_0}_\xi(e_b)^{(a)}(A-A_0)_{a}\mu^{\dagger}}{0011.2} \\
		& \qquad - \unl{[c,\lambda e_n]^{(b)}\partial_b\xi^c e_c^{(a)}(A-A_0)_{a}\mu^\dag}{0011.3} - \unl{[c,\mathrm{L}^{\omega_0}_\xi(\lambda e_n)^{(b)}e_b]^a(A-A_0)_{a}\mu^{\dagger}}{0011.4}\\
		& \qquad  + \unl{\mathrm{L}^{\omega_0}_\xi(\lambda e_n)^{(b)}\mathrm{L}^{\omega_0}_\xi(e_b)^{(a)}(A-A_0)_{a}\mu^{\dagger}}{0011.5} + \unl{\mathrm{L}^{\omega_0}_\xi(\lambda e_n)^{(b)}\partial_b\xi^c e_c^{(a)}(A-A_0)_{a}\mu^\dag}{0011.6},
	\end{split}
\end{equation}
	where we used $\mathrm{L}^{\omega_0}_\xi(e)_b=\mathrm{L}^{\omega_0}_\xi(e_b) + \partial_b \xi^c e_c $.
\begin{equation}\label{S01S11f}
	\begin{split}
		&\{S_0^1,S_1^1\}_f = \\
		& = \Tr\int_{\Sigma} \unl{[c,\lambda e_n]^{(a)}(\mathrm{L}^{\omega_0}_\xi(A-A_0))_{a}\mu^\dagger}{0111.1}  - \unl{\mathrm{L}^{\omega_0}_\xi(\lambda e_n)^{(a)}(\mathrm{L}^{\omega_0}_\xi(A-A_0))_{a}\mu^\dagger}{0111.2}\\
		&\qquad +\unl{[c,\lambda e_n]^{(a)}(\iota_{\xi}F_{A_0})_a\mu^\dagger}{0111.3} - \unl{\mathrm{L}_\xi^{\omega_0}(\lambda e_n)^{(a)}(\iota_{\xi}F_{A_0})_a\mu^\dagger}{0111.4}  \\
		& \qquad - \unl{[c,\lambda e_n]^{(a)}(d_A\mu)_a  \mu^\dagger}{0111.5} + \unl{\mathrm{L}^{\omega_0}_\xi(\lambda e_n)^{(a)} (d_A\mu)_a \mu^\dagger}{0111.6};
	\end{split}
\end{equation}

\begin{equation}
	\begin{split}\label{S10S11g}
		&\{S^0_1,S_1^1\}_g = \\
		& =  \Tr\int_{\Sigma} \big\{  -\unl{\mathrm{L}_\xi^{\omega_0}([c,\lambda e_n]^{(n)}e_n)^{(a)}}{1011g.1} + \unl{\mathrm{L}_\xi^{\omega_0}(\mathrm{L}_\xi^{\omega_0}(\lambda e_n)^{(n)}e_n)^{(a)}}{1011g.2} + \unl{[\mathrm{L}^{\omega_0}_\xi(c),\lambda e_n]^{(a)}}{1011g.3}\\
		&\qquad-\unl{\frac{1}{2}\left[ [c,c] , \lambda e_n \right]^{(a)}}{1011g.4} -\unl{[c,\mathrm{L}_\xi^{\omega_0}(\lambda e_n)^{(n)}e_n]^{(a)}}{1011g.5} +  \unl{[c,[c,\lambda e_n]^{(n)}e_n]^{(a)}}{1011g.6}\\
		&\qquad  - \unl{\frac{1}{2}[\iota_{\xi}\iota_{\xi}F_{\omega_0},\lambda e_n]^{(a)}}{1011g.7}\big\}(A-A_0)_a\mu^\dag - \unl{[c,\lambda e_n]^{(a)}(\iota_\xi F_{A_0})_a \mu^\dag}{1011g.8} + \unl{\mathrm{L}_\xi^{\omega_0}(\lambda e_n)^{(a)}(\iota_\xi F_{A_0})_a \mu^\dag}{1011g.9}\\
		&\qquad + \unl{\frac{1}{2}\iota_{[\xi,\xi]}\iota_\xi F_{A_0}\mu^\dag}{1011g.10} -\unl{[c,\lambda e_n]^{(a)}d_{A_{0a}}\mu \mu^\dagger}{1011g.11} + \unl{\mathrm{L}^{\omega_0}_\xi(\lambda e_n)^{(a)}d_{A_{0a}}\mu \mu^\dagger}{1011g.12} + \unl{\frac{1}{2}\iota_{[\xi,\xi]}d_{A_0}\mu \mu^\dagger}{1011g.13}\\
		&\qquad  -\unl{\frac{1}{2}(\iota_{[\xi,\xi]}d_{\omega_0}(\lambda e_n))^{(a)}(A-A_0)_a\mu^\dag}{1011g.14}\\
		&\qquad + \left\{ \unl{([c,\lambda e_n]^{(b)}(d_{\omega_0}(\lambda e_n))_b)^{(a)}}{1011g.15} -  \unl{(\mathrm{L}_\xi^{\omega_0}(\lambda e_n)^{(b)}(d_{\omega_0}(\lambda e_n))_b)^{(a)}}{1011g.16} \right\}(A-A_0)_{(a)}\mu^\dag;
	\end{split}
\end{equation}

\begin{equation}
	\begin{split}\label{S11S11g}
		&\frac{1}{2}\{S_1^1,S_1^1\}_g = \\
		& = \Tr\int_{\Sigma} \unl{\frac{1}{2}\mathrm{L}_\xi^{A_0}(\iota_\xi\iota_\xi F_{A_0})\mu^\dag}{1111g.1} + \unl{[\mu,\mathrm{L}_\xi^{A_0}(\mu)]\mu^\dag}{1111g.2} - \unl{\mathrm{L}_\xi^{A_0}\mathrm{L}_\xi^{A_0}\mu \mu^\dag}{1111g.3}\\
		&\qquad \unl{\mathrm{L}_\xi^{\omega_0}(\mathrm{L}_\xi^{\omega_0}(\lambda e_n)^{(a)})(A-A_0)_a\mu^\dag}{1111g.4} - \unl{\mathrm{L}_\xi^{\omega_0}([c,\lambda e_n]^{(a)})(A-A_0)_a\mu^\dag}{1111g.5}\\
		&\qquad +\unl{\mathrm{L}_\xi^{\omega_0}(\lambda e_n)^{(a)}\mathrm{L}_\xi^{A_0}((A-A_0)_a)\mu^\dag}{1111g.6} - \unl{[c,\lambda e_n]^{(a)}\mathrm{L}_\xi^{A_0}((A-A_0)_a)\mu^\dag}{1111g.7}\\
		&\qquad +\unl{\frac{1}{2}[\iota_\xi\iota_\xi F_{A_0},\mu]\mu^\dag}{1111g.8} + \unl{\frac{1}{2}[[\mu,\mu],\mu]\mu^\dag}{1111g.9} - \unl{[\mu,\mathrm{L}_\xi^{A_0}(\mu)]\mu^\dag}{1111g.10}\\
		&\qquad \unl{\mathrm{L}^{\omega_0}_\xi(\lambda e_n)^{(a)}[(A-A_0)_a,\mu]\mu^\dag}{1111g.11} - \unl{[c,\lambda e_n]^{(a)}[(A-A_0)_a,\mu]\mu^\dag} {1111g.12} .
	\end{split}
\end{equation}
Now we check term by term that the sum is zero
\begin{itemize}
	\item \reft{S00S11f}{0011.1}, \reft{S10S11g}{1011g.6} and \reft{S10S11g}{1011g.4} give
	\begin{align*}
		&[c,[c,\lambda e_n]^{(b)}e_b]^{(a)} + [c,[c,\lambda e_n]^{(n)}e_n]^{(a)} - \frac{1}{2}[[c,c]\lambda e_n]^{(a)}\\
		& =[c,[c,\lambda e_n]]^{(a)} - \frac{1}{2}[[c,c]\lambda e_n]^{(a)} =0,
	\end{align*}
	because of graded Jacobi identity
	\item \reft{S00S11f}{0011.2}, \reft{S00S11f}{0011.4}, \reft{S10S11g}{1011g.1},\reft{S10S11g}{1011g.3}, \reft{S10S11g}{1011g.5} and \reft{S11S11g}{1111g.5} sum to zero, in fact
	\begin{align*}
		-\mathrm{L}^{\omega_0}_\xi([c,\lambda e_n])^{(a)}&= - \mathrm{L}^{\omega_0}_\xi ([c,\lambda e_n]^{(n)}e_n + [c,\lambda e_n]^{(b)}e_b)^{(a)}\\
		&= - \mathrm{L}^{\omega_0}_\xi([c,\lambda e_n]^{(n)}e_n)^{(a)} - \mathrm{L}^{\omega_0}_\xi([c,\lambda e_n])^{(a)}-[c,\lambda e_n]^{(b)}\mathrm{L}^{\omega_0}_\xi(e_b)^{(a)}\\
		&= -[\mathrm{L}^{\omega_0}_\xi(c),\lambda e_n]^{(a)} + [c,\mathrm{L}^{\omega_0}_\xi(\lambda e_n)^{(b)}e_b]^{(a)}+[c,\mathrm{L}^{\omega_0}_\xi(\lambda e_n)^{(n)}e_n]^{(a)}\\
		\Rightarrow \hspace{1mm} &- \mathrm{L}^{\omega_0}_\xi([c,\lambda e_n]^{(n)}e_n)^{(a)} - \mathrm{L}^{\omega_0}_\xi([c,\lambda e_n])^{(a)}-[c,\lambda e_n]^{(b)}\mathrm{L}^{\omega_0}_\xi(e_b)^{(a)} \\
		&- [\mathrm{L}^{\omega_0}_\xi(c),\lambda e_n]^{(a)} +  [c,\mathrm{L}^{\omega_0}_\xi(\lambda e_n)^{(b)}e_b]^{(a)} + [c,\mathrm{L}^{\omega_0}_\xi(\lambda e_n)^{(n)}e_n]^{(a)} =0 ;
	\end{align*}
	\item \reft{S00S11f}{0011.5}, \reft{S10S11g}{1011g.2},  \reft{S10S11g}{1011g.7}, \reft{S10S11g}{1011g.14},  \reft{S11S11g}{1111g.4} sum to zero, in fact
	\begin{align*}
		\mathrm{L}^{\omega_0}_\xi(\mathrm{L}^{\omega_0}_\xi(\lambda e_n))^{(a)} &= \mathrm{L}^{\omega_0}_\xi(\mathrm{L}^{\omega_0}_\xi(\lambda e_n)^{(b)}e_b + \mathrm{L}^{\omega_0}_\xi(\lambda e_n)^{(n)}e_n)^{(a)}\\
		&= \mathrm{L}^{\omega_0}_\xi(\mathrm{L}^{\omega_0}_\xi(\lambda e_n)^{(a)}) - \mathrm{L}^{\omega_0}_\xi(\lambda e_n)^{(b)}\mathrm{L}^{\omega_0}_\xi(e_b)^{(a)} - \mathrm{L}^{\omega_0}_\xi(\lambda e_n)^{(n)}\mathrm{L}^{\omega_0}_\xi(e_n)^{(a)}\\
		& = \frac{1}{2}\mathrm{L}^{\omega_0}_{[\xi,\xi]}(\lambda e_n)^{(a)} + \frac{1}{2}[\iota_\xi\iota_{\xi}F_{\omega_0},\lambda e_n]^{(a)}\\
		&=\frac{1}{2}(\iota_{[\xi,\xi]}d_{\omega_0}\lambda e_n)^{(a)} + \frac{1}{2}[\iota_\xi\iota_{\xi}F_{\omega_0},\lambda e_n]^{(a)}\\
		\intertext{}
		\Rightarrow\hspace{1mm}  &\mathrm{L}^{\omega_0}_\xi(\mathrm{L}^{\omega_0}_\xi(\lambda e_n)^{(a)}) - \mathrm{L}^{\omega_0}_\xi(\lambda e_n)^{(b)}\mathrm{L}^{\omega_0}_\xi(e_b)^{(a)} - \mathrm{L}^{\omega_0}_\xi(\lambda e_n)^{(n)}\mathrm{L}^{\omega_0}_\xi(e_n)^{(a)}\\
		& - \frac{1}{2}(\iota_{[\xi,\xi]}d_{\omega_0}\lambda e_n)^{(a)} - \frac{1}{2}[\iota_\xi\iota_{\xi}F_{\omega_0},\lambda e_n]^{(a)}=0;
	\end{align*}
	\item Now consider the following identity: $(\mathrm{L}_\xi^{(A_0)}(A-A_0))_a=\mathrm{L}_\xi^{(A_0)}(A-A_0)_a + \partial_a\xi^b(A-A_0)_b$. Then, considering the terms \reft{S00S11f}{0011.3}, \reft{S01S11f}{0111.1} and \reft{S11S11g}{1111g.7} we find
		\begin{align*}
			&-[c,\lambda e_n]^{(b)}\partial_b\xi^a + [c,\lambda e_n]^{(a)}(\mathrm{L}_\xi^{(A_0)}(A-A_0))_a - [c,\lambda e_n]^{(a)}\mathrm{L}_\xi^{A_0}(A-A_0)_a=\\
			=&-[c,\lambda e_n]^{(b)}\partial_b\xi^a + [c,\lambda e_n]^{(a)}\mathrm{L}_\xi^{A_0}(A-A_0)_a\\
			&- [c,\lambda e_n]^{(b)}\partial_b\xi^a - [c,\lambda e_n]^{(a)}\mathrm{L}_\xi^{A_0}(A-A_0)_a =0;
		\end{align*}
	\item the same can be done with the terms \reft{S00S11f}{0011.6}, \reft{S01S11f}{0111.2} and \reft{S11S11g}{1111g.6};
	\item the following pairs of terms simply cancel each other out
	\begin{itemize}
		\item[--] \reft{S01S11f}{0111.3} and \reft{S10S11g}{1011g.8};
		\item[--] \reft{S01S11f}{0111.4} and \reft{S10S11g}{1011g.9};
		\item[--] \reft{S11S11g}{1111g.2} and \reft{S11S11g}{1111g.10};
	\end{itemize}
	\item the terms \reft{S10S11g}{1011g.15} and \reft{S10S11g}{1011g.16} vanish because they are proportional to $\lambda^2$.
	They are separately zero because both $\mathrm{L}_\xi^{\omega_0}(\lambda e_n)^{(b)}$ and $[c,\lambda e_n]^{(b)}$ are proportional to $\lambda$, and 
	\begin{align*}
		(d_{\omega_0}(\lambda e_n)_{(b)})^{(a)} & =\partial_b \lambda e_n^{(a)} - \lambda (d_{\omega_0}e_n)^{(a)}_{(b)} \\
		& = - \lambda (d_{\omega_0}e_n)^{(a)}_{(b)} ;
	\end{align*}
	\item \reft{S11S11g}{1111g.9} vanishes because of the graded Jacobi identity;
	\item Considering \reft{S10S11g}{1011g.11} and \reft{S11S11g}{1111g.12} we find
		\begin{align*}
			&-[[c,\lambda e_n]^{(a)}(A-A_0)_a,\mu]\mu^\dag-[c,\lambda e_n]^{(a)}d_{A_{0a}}\mu \mu^\dag=\\
			&=-[c,\lambda e_n]^{(a)}d_{(A-A_0)_a}\mu \mu^\dag=[c,\lambda e_n]^{(a)}(d_A\mu)_a \mu^\dag
		\end{align*},
	which cancels out \reft{S01S11f}{0111.5};
	\item the same holds also for  \reft{S01S11f}{0111.5} \reft{S10S11g}{1011g.12} and \reft{S11S11g}{1111g.11};
	\item the terms \reft{S10S11g}{1011g.10} and \reft{S11S11g}{1111g.1} sum to a boundary term, in fact
		\begin{align*}
			&\frac{1}{2}\iota_{[\xi,\xi]}\iota_\xi F_{A_0}\mu^\dag + \frac{1}{2}\mathrm{L}_\xi^{A_0}(\iota_\xi\iota_\xi F_{A_0})\mu^\dag=\\
			&=\frac{1}{2}\iota_{[\xi,\xi]}\iota_\xi F_{A_0}\mu^\dag - \frac{1}{2}\iota_\xi\iota_\xi F_{A_0}+\mathrm{L}_\xi^{A_0}(\mu^\dag)\\
			&=\frac{1}{2}d_{A_0}(\iota_\xi\iota_\xi F_{A_0} \iota_\xi \mu^\dag)
		\end{align*};
	\item Finally, we are left with \reft{S10S11g}{1011g.13}, \reft{S11S11g}{1111g.3} and \reft{S11S11g}{1111g.8}, they sum up to zero, in fact
		\begin{align*}
			\mathrm{L}_\xi^{A_0}\mathrm{L}_\xi^{A_0}\mu&=\frac{1}{2}\mathrm{L}_{[\xi,\xi]}^{A_0}(\mu)+\frac{1}{2}[\iota_\xi\iota_\xi F_{A_0},\mu]\\
			&=\frac{1}{2}\iota_{[\xi,\xi]}d_{A_0}\mu + \frac{1}{2}[\iota_\xi\iota_\xi F_{A_0},\mu],
		\end{align*}
	
	which proves equation \eqref{EM BFV B.2}.
\end{itemize}

\newrefcontext[sorting=nty]
\printbibliography

\end{document}